 \newif\ifhyper\IfFileExists{hyperref.sty}{\hypertrue}{\hyperfalse}
\ifhyper\usepackage{hyperref}\fi
\definecolor{dgreen}{RGB}{40,150,28}
      \DeclareMathOperator{\sign}{sgn}
\newtheorem{theorem}{Theorem}[section]
\newtheorem{lemma}[theorem]{Lemma}
\newtheorem{proposition}[theorem]{Proposition}
\newtheorem{corollary}[theorem]{Corollary}
\newtheorem{remark}[theorem]{Remark}
\theoremstyle{definition}
\newtheorem{definition}[theorem]{Definition}
     \newcommand{\SLE}{{\rm SLE}}
\newcommand{\F}{{_2F_1}}
      \def\@setcopyright{}
      \def\serieslogo@{}
\begin{document}

%


   \author{Bertrand Duplantier$^{(1)}$}
      \address{$^{(1)}$Institut de Physique Th\'eorique, Universit\'e Paris-Saclay, CEA, CNRS, F-91191 Gif-sur- Yvette Cedex, France}
\email{bertrand.duplantier@cea.fr}

\author{Xuan Hieu Ho$^{(2)}$}
   \address{$^{(2)}$ MAPMO, Universit\'e d'Orl\'eans, B\^atiment de math\'ematiques, rue de Chartres B.P.6759-F-45067 Orl\'eans Cedex 2, France}
\email{hoxhieu@gmail.com}

 \author{Thanh Binh Le$^{(2)}$}
   \email{mr.lethanhbinh@gmail.com}

   \author{Michel Zinsmeister$^{(2)}$}
  
   \email{zins@univ-orleans.fr}




  \title[Logarithmic coefficients and multifractality of whole-plane SLE]{Logarithmic coefficients and generalized multifractality of whole-plane SLE}


   \begin{abstract}
    It has been shown that for $f$  an instance of the whole-plane $\SLE$ unbounded conformal map from the unit disk $\mathbb D$ to the slit plane, the derivative moments $\mathbb{E}(\vert f'(z) \vert^p)$ can be written in a closed form for certain values of $p$ depending continuously on the SLE parameter $\kappa\in (0,\infty)$. We generalize this property to the mixed moments, $\mathbb{E}\big(\frac{\vert f'(z) \vert^p}{\vert f(z) \vert^q}\big)$, along integrability curves in the moment plane  $(p,q) \in \mathbb R^2$ depending continuously on $\kappa$, by extending the so-called Beliaev--Smirnov equation to this case. The generalization of this integrability property to the $m$-fold transform of $f$ is also given. We define a novel generalized integral means spectrum, $\beta(p,q;\kappa)$, corresponding to the singular behavior of the above mixed moments. \textcolor{black}{By inversion, it allows a unified description of the unbounded interior and bounded exterior versions of whole-plane SLE, and of their $m$-fold generalizations.} The average generalized spectrum of whole-plane $\SLE$ is found to take four possible forms, separated by five phase transition lines in the moment plane $\mathbb R^2$. The average generalized spectrum of the $m$-fold whole-plane SLE is directly obtained from the $m=1$ case by a linear map acting in the moment plane. We also conjecture the precise form of the universal generalized integral means spectrum.
   \end{abstract}   
   \keywords{Whole-plane SLE, logarithmic coefficients, Beliaev--Smirnov equation, generalized integral means spectrum, universal spectrum.}
   \thanks{The first author wishes to thank the Isaac Newton Institute (INI) for
Mathematical Sciences at Cambridge University, where part of this work was completed, for its hospitality and support during the 2015 program ``Random Geometry'', supported by EPSRC Grant Number EP/K032208/1. B.D. also gratefully acknowledges the support of a Simons Foundation fellowship at INI during the Random Geometry program.  B.D. acknowledges financial support from the French Agence Nationale
de la Recherche via the grant ANR-14-CE25-0014 ``GRAAL''; he is also partially funded by the CNRS Projet international de coop\'eration scientifique (PICS) ``Conformal Liouville Quantum Gravity'' n$^{\mathrm o}$PICS06769. The research by the second author is supported by a joint scholarship from MENESR and R\'egion Centre; the third author is supported by a scholarship of the Government of Vietnam. B.D. and M.Z. are partially funded by the CNRS-{\sc insmi} Groupement de Recherche (GDR 3475) ``Analyse Multifractale''. BD and MZ also gratefully acknowledge the continuous hospitality of the Institut Henri Poincar\'e in Paris.}

   \date{\today}


   \maketitle



   \section{Introduction}
   \subsection{Logarithmic coefficients}
   Consider $f$, a holomorphic function in the unit disk $\mathbb D$,
   \begin{equation}\label{fhol}
   f(z)=\sum_{n\ge 0}a_n z^n.
   \end{equation}  Bieberbach observed in 1916 \cite{Bi} that if $f$ is further assumed to be injective, then
   $$\vert a_2 \vert \le 2\vert a_1 \vert,$$
   and he conjectured  that $\vert a_n \vert \le n\vert a_1 \vert$ for all $n > 2$. This famous conjecture has been proved in 1984 by de Branges \cite{dB}. A crucial ingredient of his proof is the theory of growth processes that was developed by Loewner in 1923 \cite{Lo}, precisely  in order to solve the $n=3$ case of the Bieberbach conjecture.\\
 \begin{figure}[tb]
\begin{center}
\includegraphics[angle=90,width=.803290\linewidth]{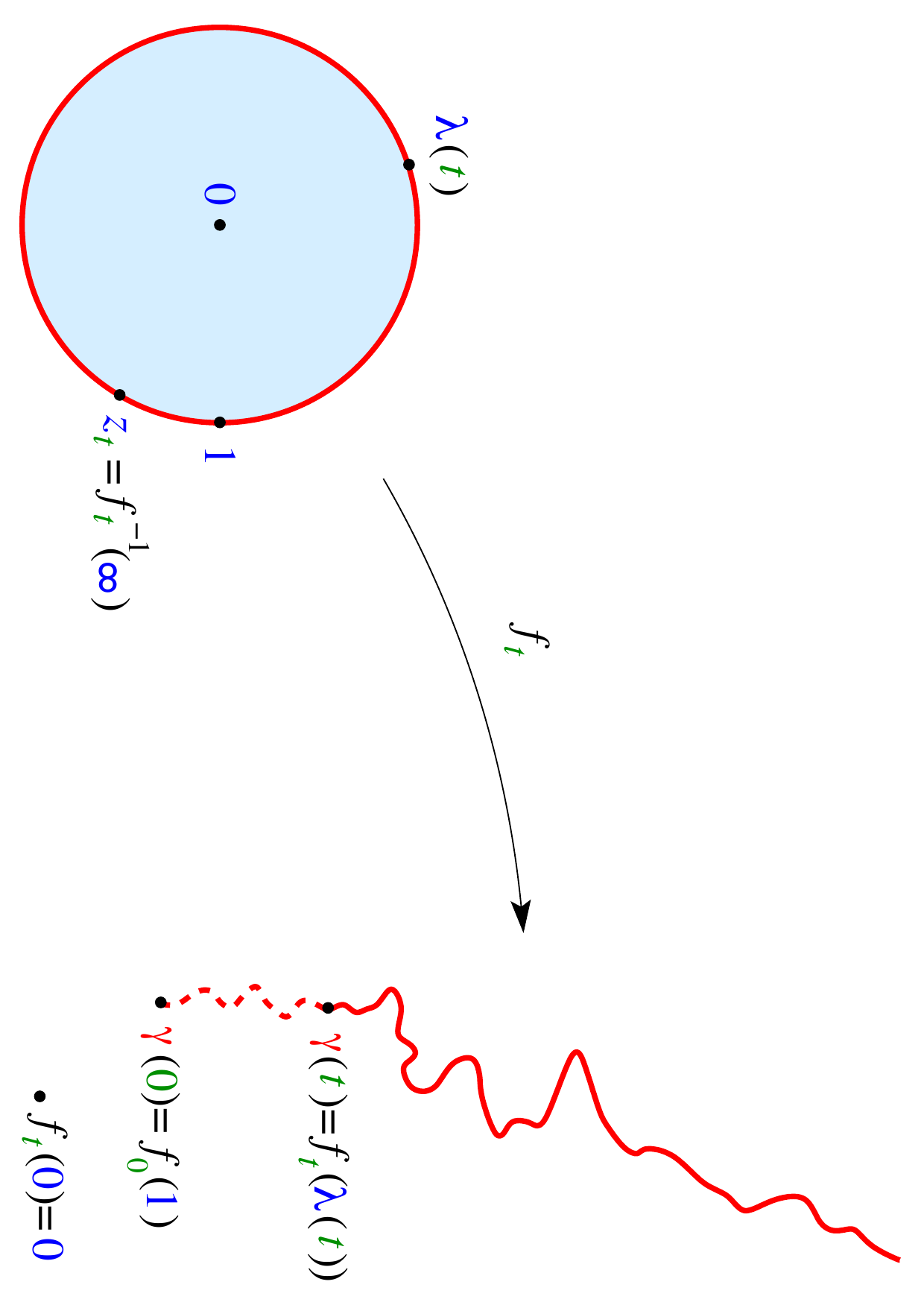}
\caption{{\it Loewner map $z\mapsto f_t(z)$ from $\mathbb D$ to the slit domain $\Omega_t=\mathbb C\backslash \gamma([t,\infty))$ (here slit by a single curve $\gamma([t,\infty))$ for $\SLE_{\kappa\leq 4}$). One has $f_t(0)=0, \forall t\geq 0$. At $t=0$, the driving function $\lambda (0)=1$, so that the image of $z=1$ is at the tip $\gamma(0)=f_0(1)$ of the curve. \textcolor{black}{[From Ref. \cite{DNNZ}].}}}
\label{whpl}
\end{center}
\end{figure}   
      Let $\gamma:[0,\infty) \rightarrow \mathbb C$ be a  simple curve such that $\vert\gamma(t)\vert\rightarrow + \infty$ as $t \rightarrow +\infty$ and such that $\gamma(t) \ne 0, t\ge 0$. Let $\Omega_t = \mathbb C \setminus \gamma([t, \infty))$ and $f_t: \mathbb D = \{ \vert z \vert \le 1\} \rightarrow \Omega_t$ be the Riemann map characterized by $f_t(0)=0, f'_t(0) > 0$ (See Fig. \ref{whpl}).  It is easy to see that $t \mapsto f'_t(0)$ is an  increasing continuous function, diverging   to $+\infty$ as $t \rightarrow +\infty$. Assuming that $f'_0(0) = 1$, and changing parameterization if necessary, we may set $f'_t(0)=e^t, t \ge 0$. Loewner has shown that $f_t$ satisfies the following PDE:
   \begin{equation} \label{loewner}
\frac{\partial}{\partial t} f_t(z) = z \frac{\partial}{\partial z} f_t(z) \frac{\lambda(t)+z}{\lambda(t)-z},
   \end{equation}
   where $\lambda: [0,\infty) \rightarrow \partial \mathbb D$ is a continuous function on the unit circle. 
   With the sole information that $\vert \lambda(t) \vert =1, \forall t$, he could prove that $\vert a_3 \vert \le 3\vert a_1 \vert$. 
    
   Besides Loewner's  theory of growth processes, de Branges' proof also heavily relied on the considereration, developed by Grunsky \cite{Gr} and later Lebedev and Milin \cite{LM}, of {\it logarithmic} coefficients. More precisely, if $f: \mathbb{D} \rightarrow \mathbb{C}$ is holomorphic and injective with $f(0) = 0$, we may consider the power series, 
   \begin{equation}\label{ptlog}
   \log\frac{f(z)}{z} = 2\sum\limits_{n\ge{1}}\gamma_nz^n.
   \end{equation}
   
   The purpose of introducing this logarithm was to prove Robertson's conjecture \cite{R}, which was known to imply Bieberbach's. Let $f$ be in the $\mathcal S$ class of schlicht functions, i.e., holomorphic and injective in the unit disk, and normalized such that $f(0)=0,\, f'(0)=1$. There is a branch $f^{[2]}$ of $z\mapsto\sqrt{f(z^2)}$ which is an odd function in $\mathcal S$ and plays a crucial role in the theory of univalent functions. Let us then write
   \begin{equation} \label{def::h}f^{[2]}(z):=z\sqrt{f(z^2)/z^2}=\sum_{n=0}^{\infty}b_{2n+1}z^{2n+1},
   \end{equation}
  with $b_1=1$.  Robertson's conjecture states that:
   \begin{equation}\label{robertson} \forall n\geq 0,\,\sum_{k=0}^{n}\vert b_{2k+1}\vert^2\leq n+1.
   \end{equation}
  The Lebedev and Milin approach to this conjecture consisted in observing that
   $$ \log \frac{f^{[2]}(\sqrt{z})}{\sqrt{z}}=\frac 12 \log\frac{f(z)}{z},$$
   and consequently that
   $$\sum_{n=0}^{\infty}b_{2n+1}z^n=\exp\left({\sum_{n=1}^{\infty}\gamma_n z^n}\right).$$
  They proved what is now called the second Lebedev-Milin inequality, a combinatorial inequality connecting the coefficients of any power series to those of its exponential, namely
   \begin{equation} \label{LM3}
\forall n\geq 0,\;\sum_{k=0}^{n}\vert b_{2k+1} \vert^2\leq (n+1)\exp{\left(\frac{1}{n+1}\sum_{m=1}^n\sum_{k=1}^{m}\left(k\vert\gamma_k\vert^2-\frac{1}{k}\right)\right)}.
\end{equation}
   This naturally led Milin \cite{M} to conjecture that
   \begin{equation} \label{MiCo}
   \forall f\in \mathcal S,\,\forall n\geq 1,\;\sum_{m=1}^n\sum_{k=1}^m \left(k\vert \gamma_k\vert^2-\frac 1k\right)\leq 0;
   \end{equation}
     this conjecture, proved by de Branges in 1984, implies Robertson's, hence Bieberbach's conjecture. 
    
   Returning to Loewner's theory, his derivation of Eq. \eqref{loewner}  above is only half of the story. There is indeed a converse: given  any continuous function $\lambda: [0,+\infty[ \rightarrow \mathbb{C}$ with $\vert\lambda(t)\vert = 1$ for $t \geq 0$, the the Loewner equation \eqref{loewner}, supplemented by the boundary (``initial'') condition, $\lim_{t\to +\infty}f_t(e^{-t}z)=z$,    has a solution $(t, z) \mapsto f_t(z)$, such that  $(f_t(z))_{t\geq 0}$ is a chain of Riemann maps onto simply connected domains $(\Omega_t)$ that are  increasing with $t$. 
   
   In 1999, Schramm  \cite{Schr} introduced into the Loewner equation the {\it random} driving function,
  \begin{equation}\label{lambda}
   \lambda(t):=\sqrt{\kappa}B_t,
   \end{equation}
   where $B_t$ is  standard one dimensional Brownian motion and $\kappa$  a non-negative parameter, thereby making Eq. \eqref{loewner}   a stochastic PDE, and creating the celebrated {\it Schramm-Loewner Evolution}  $\SLE_\kappa$.
   
    The associated conformal maps $f_t$ from $\mathbb D$ to $\mathbb C\setminus \gamma([t,\infty)$,  obeying  \eqref{loewner} for \eqref{lambda}, define the {\it interior whole-plane} Schramm-Loewner evolution. Their  coefficients $a_n(t)$, which are random variables, are defined by a normalized series expansion, as described in the following proposition \cite{DNNZ}. 
\begin{proposition} \label{prop1} Let $(f_t(z))_{t\geq 0}, z\in \mathbb D,$ be the interior Schramm--Loewner whole-plane process driven by  $\lambda(t)=e^{i\sqrt{\kappa}B_t}$ in Eq. \eqref{loewner}. We write
\vskip -.2cm
\begin{equation}\label{ftexp}
f_t(z)=e^t\big(z+\sum_{n\geq 2}a_n(t) z^n\big).
\end{equation}
\vskip -.2cm and for its  
   logarithm,
 \begin{equation}\label{logftexp}
   \log\frac{e^{-t}f_t(z)}{z} = 2\sum\limits_{n\ge{1}}\gamma_n(t)z^n.
   \end{equation} 
   Then the {\it conjugate} whole-plane Schramm--Loewner evolution $e^{-i\sqrt{\kappa}B_t} f_t\big(e^{i\sqrt{\kappa}B_t}z\big)$ has the {\it same law} as $f_0(z)$, hence $e^{i(n-1)\sqrt{\kappa}B_t}a_n(t)\stackrel{\rm (law)}{=}a_n(0)$. From this and Eqs. \eqref{ftexp}, \eqref{logftexp}, follows the identity $e^{in\sqrt{\kappa}B_t}\gamma_n(t)\stackrel{\rm (law)}{=}\gamma_n(0)$. 
   In the sequel, we set: $a_n:=a_n(0)$ and $\gamma_{n}:=\gamma_n(0)$.
\end{proposition}
	   The starting point of the present article is the observation, made in Ref. \cite{HAL-DNNZ}, that the $\SLE_\kappa$ process, in its {\it interior} whole-plane version, has a rich algebraic structure, giving rise to a host of (integrability-like) closed form results. The first hint was the fact that, beyond the coefficient expectations $\mathbb E(a_n)$ for Eq. \eqref{ftexp}, the coefficient squared moments, $\mathbb E(|a_n|^2)$, have very simple  expressions for specific values of $\kappa$. This has been developed in detail in Refs. \cite{DNNZ} and \cite{IL} (see also Refs. \cite{2012arXiv1203.2756L,2013JSMTE..04..007L,1751-8121-47-16-165202}), by using a PDE obeyed by the derivative moments $\mathbb E(|f'(z)|^p)$. Following the work of Rohde--Schramm \cite{MR2153402}, it was originally derived by  Beliaev--Smirnov in Ref. \cite{BS} to study the average integral means spectrum of the  \textcolor{black}{\emph{exterior}} version of the whole-plane $\SLE_\kappa$ map. Note also that similar ideas already appeared in Ref. \cite{2010JSP...139..108K}, where A. Kemppainen  studied in detail the coefficients associated with the Schramm--Loewner evolution, using a stationarity property of SLE \cite{2006JPhA...39L.657K}. However, the  focus there was on expectations of the moments of those coefficients, rather than on the moments of their moduli. 
    
      \textcolor{black}{Here, we study in particular the logarithmic coefficients \eqref{logftexp} of whole-plane SLE and the generalizations thereof.  The main idea of this work is the introduction for the (unbounded) whole-plane SLE map $f$ of  {\it mixed moments}, $\mathbb E(|f'(z)|^p/|f(z)|^q)$, for any $(p,q)\in \mathbb R^2$, which are found to obey a two-parameter family of Beliaev--Smirnov-type equations. We also define and study in detail the \emph{generalized integral means spectrum} associated with these mixed moments, which is a priori a function of $p$ and $q$.}
      
      \textcolor{black}{We argue that the general approach proposed here is the natural one for whole-plane SLE.  It unifies the earlier SLE integral means  studies of Refs. \cite{BDZ,BS,DNNZ,IL,2012arXiv1203.2756L,2013JSMTE..04..007L} into a much broader framework, which in particular allows one to fully exploit the \emph{inversion symmetry} between the unbounded  inner and bounded outer versions, and also to cover  the  $p=q$ logarithmic case, as well as the integral means of the map itself or of its $m$-fold transforms.}  
      
      \textcolor{black}{As we shall see, the $(p,q)$-plane for whole-plane SLE is structured by \emph{integrable probability lines}, as well as by \emph{phase-transition lines}.  The integrability lines yield a continuum of closed forms for mixed moments, that generalize results in Refs. \cite{DNNZ,IL,2012arXiv1203.2756L,2013JSMTE..04..007L}. The phase transition lines mark the breakdown of the standard analysis of B--S-type equations initiated in Ref. \cite{BS}, as already seen in the case of  the standard integral means spectrum in Refs. \cite{BDZ,DNNZ,IL,2012arXiv1203.2756L,2013JSMTE..04..007L}. In particular, a certain phase transition \emph{point} here manifests the appearance, for the bounded exterior whole-plane SLE,  of a subjacent spectrum, thus  requiring a novel proof as given in Ref. \cite{BDZ}. To study the phase structure of the whole-plane generalized spectrum, we unify and extend here the non-standard methods initiated in Refs. \cite{BDZ,DNNZ}. Moreover, this field leads to further interesting open questions, which apparently exceed the power of the various methods proposed so far.} 
       
  \subsection{Main results}
   A first motivation of this article is the proof, originally obtained for small $n$ by the third author \cite{LTB},  of the following. 
  \textcolor{black}{\begin{theorem}\label{logarithmic theorem}
   Let $f(z):=f_0(z)$ be the time $0$ unbounded whole-plane $\SLE_\kappa$ map, in the same setting as in Proposition \ref{prop1}, such that
   \begin{equation*}
   \log \frac{f(z)}{z}=2\sum_{n\geq 1}\gamma_n z^n;
   \end{equation*}
   then, for $\kappa=2$, which corresponds to the scaling limit of the Loop-Erased Random Walk \cite{lawler2004,Schr},
    \begin{align*}
 &\mathbb E(\gamma_1)=-1/2,\,\,\,\mathbb E(\gamma_{n})=0,\,\,\,n\geq 2,\\
& \mathbb{E}(\vert \gamma_n \vert^2)= \frac{1}{2n^2},\,\,\,n\geq 1,\\
& \mathbb E(\gamma_n\bar{\gamma}_{n+1})=-\frac{1}{4n(n+1)},\,\,\,
 \mathbb E(\gamma_n\bar{\gamma}_{n+k})=0,\,\,\,n\geq 1,k\geq 2.
\end{align*}
   \end{theorem}}
        Let us then briefly return to the Lebedev-Milin theory. 
   By Theorem \ref{logarithmic theorem}, we have for $\SLE_{2}$,
   $$\mathbb E\left(\sum_{m=1}^{n}\sum_{k=1}^{m}\left(k\vert\gamma_k\vert^2-\frac 1k\right)\right)=-\frac{1}{2}\sum_{m=1}^{n}\sum_{k=1}^{m}\frac{1}{k}=-\frac{n+1}{2}\sum_{k=2}^{n+1}\frac{1}{k},$$
   which gives an example of the validity ``in expectation"  of the Milin conjecture. 
   Recalling Definition \eqref{def::h}, 
   we also get, in expectation, a check of Robertson's conjecture \eqref{robertson}:
 $$ \mathbb E\left(\log{\frac{1}{n+1}\sum_{k=0}^n\vert b_{2k+1}\vert^2}\right)\leq -\frac{1}{2}\sum_{k=2}^{n+1}\frac{1}{k}.$$
 
  The idea behind the proof of Theorem \ref{logarithmic theorem} is to differentiate \eqref{ptlog},
   $$z\frac{d}{dz}\log \frac{f(z)}{z} =z \frac{f'(z)}{f(z)}- 1=2\sum_{n\geq 1}n\gamma_{n}z^n,$$
   such that
   \begin{equation}\label{dev2}
 {\left|z \frac{f'(z)}{f(z)} \right|}^{2}=1 +2\sum_{n\geq 1}n\gamma_{n}(z^n+\bar{z}^{n})
 +4\sum_{n\geq 1}\sum_{m\geq 1}nm\gamma_{n}\bar{\gamma}_{m}z^n\bar{z}^m,
 \end{equation}
   and  to compute $\mathbb{E} \big( {\left|z \frac{f'(z)}{f(z)} \right|}^2 \big)$. We indeed prove: 
   \begin{theorem}\label{theorem kappa2}
   Let $f$ be the interior whole-plane $\SLE_\kappa$ map, in the same setting as in Theorem \ref{logarithmic theorem}; then for $\kappa=2$,
   $$\mathbb{E} \bigg( {\left|z \frac{f'(z)}{f(z)} \right|}^{2} \bigg) =\frac{(1-z)(1-\bar{z})}{1-z\bar{z}}.$$
   \end{theorem}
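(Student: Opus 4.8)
The plan is to run the Beliaev--Smirnov argument for the mixed moment $p=q=2$: first derive a closed linear PDE for the two-variable function $H(z,\bar z):=\mathbb{E}\big(|zf'(z)/f(z)|^{2}\big)$ in the statement (note $H=|z|^{2}\,\mathbb{E}(|f'(z)|^{2}/|f(z)|^{2})$, so this is the $(p,q)=(2,2)$ generalized derivative moment), and then solve this PDE explicitly for $\kappa=2$. The one structural input is the self-similarity of whole-plane $\SLE$ underlying Proposition \ref{prop1}. Let $(f_t)_{t\ge 0}$ be the Loewner chain driven by $\lambda(t)=e^{i\sqrt\kappa B_t}$, with $f=f_0$ and $\lambda(0)=1$. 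Decompose $f_0=f_t\circ\phi_t$, where $\phi_t:\mathbb D\to\mathbb D$, $\phi_0=\mathrm{id}$, solves the radial Loewner equation $\partial_t\phi_t=-\phi_t\frac{\lambda+\phi_t}{\lambda-\phi_t}$; and use the conjugation identity of Proposition \ref{prop1} in the form $f_t(z)=e^{t}\lambda(t)\,\widetilde f_0(\lambda(t)^{-1}z)$, where $\widetilde f_0$ is a copy of $f_0$ independent of $\mathcal F_t=\sigma(B_s:s\le t)$.

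Inserting the decomposition into $h:=zf_0'/f_0$ gives the multiplicative factorization $h(z)=\dfrac{z\phi_t'(z)}{\phi_t(z)}\,\widetilde h\big(\lambda(t)^{-1}\phi_t(z)\big)$ with $\widetilde h:=z\widetilde f_0'/\widetilde f_0$; squaring moduli, conditioning on $\mathcal F_t$ and using that $\widetilde h$ has the same law as $h$ and is independent of $\mathcal F_t$, one gets for every $t\ge 0$
\[
 H(z,\bar z)=\mathbb{E}\!\left(\Big|\tfrac{z\phi_t'(z)}{\phi_t(z)}\Big|^{2} H\big(w_t,\bar w_t\big)\right),\qquad w_t:=\lambda(t)^{-1}\phi_t(z).
\]
Thus $M_t:=|z\phi_t'(z)/\phi_t(z)|^{2}H(w_t,\bar w_t)$ has constant expectation. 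Differentiating at $t=0$ — from the radial Loewner equation, $dw_t=-i\sqrt\kappa\,w_t\,dB_t-w_t\big(\tfrac{\kappa}{2}+\tfrac{1+w_t}{1-w_t}\big)dt$ and $d\big(|z\phi_t'(z)/\phi_t(z)|^{2}\big)|_{t=0}=-2\big(\tfrac{z}{(1-z)^{2}}+\tfrac{\bar z}{(1-\bar z)^{2}}\big)dt$, combined with It\^o's formula for $H(w_t,\bar w_t)$ — and equating the drift of $M_t$ at $t=0$ to zero yields the generalized Beliaev--Smirnov equation
\[
 \tfrac{\kappa}{2}\big(z^{2}H_{zz}+\bar z^{2}H_{\bar z\bar z}-2z\bar z\,H_{z\bar z}\big)+z\Big(\tfrac{\kappa}{2}+\tfrac{1+z}{1-z}\Big)H_z+\bar z\Big(\tfrac{\kappa}{2}+\tfrac{1+\bar z}{1-\bar z}\Big)H_{\bar z}+2\Big(\tfrac{z}{(1-z)^{2}}+\tfrac{\bar z}{(1-\bar z)^{2}}\Big)H=0 .
\]
For $\kappa=2$, using $z(1+\tfrac{1+z}{1-z})=\tfrac{2z}{1-z}$, this reduces to
\[
 z^{2}H_{zz}+\bar z^{2}H_{\bar z\bar z}-2z\bar z\,H_{z\bar z}+\tfrac{2z}{1-z}H_z+\tfrac{2\bar z}{1-\bar z}H_{\bar z}+2\Big(\tfrac{z}{(1-z)^{2}}+\tfrac{\bar z}{(1-\bar z)^{2}}\Big)H=0 .
\]

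Finally, one verifies by direct substitution that $H(z,\bar z)=\dfrac{(1-z)(1-\bar z)}{1-z\bar z}$ solves the $\kappa=2$ equation and satisfies $H(0,0)=1$, matching $zf'(z)/f(z)\to1$ as $z\to0$; as a consistency check, the same scheme at $\kappa=0$, where $f(z)=z/(1+z)^{2}$ is deterministic, returns $H=|(1-z)/(1+z)|^{2}$. Uniqueness among functions analytic in $(z,\bar z)$ with $H(0,0)=1$ follows from the expansion $H=\sum_{n,m\ge 0}c_{nm}z^{n}\bar z^{m}$: matching the coefficient of $z^{n}\bar z^{m}$ in the PDE produces $\big((n-m)^{2}+n+m\big)c_{nm}$ plus a linear combination of the $c_{km}$ with $k<n$ and of the $c_{nl}$ with $l<m$, and since $(n-m)^{2}+n+m>0$ for $(n,m)\ne(0,0)$ this recursion determines all $c_{nm}$ from $c_{00}=1$; hence $H$ is the claimed rational function. (Reading off the coefficient of $z^{n}\bar z^{n}$, which equals $4n^{2}\mathbb{E}(|\gamma_n|^{2})$, then recovers Theorem \ref{logarithmic theorem}.) The delicate part is the derivation of the PDE: justifying the decomposition $f_0=f_t\circ\phi_t$ together with the conjugated self-similarity of $f_t$, checking the integrability needed for $\mathbb{E}(M_t)$ to be genuinely constant, and carrying out the It\^o differentiation at $t=0$ without sign errors; the verification of the explicit solution and the recursive uniqueness argument are then routine computations.
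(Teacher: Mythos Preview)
Your proof is correct and reaches the same PDE and the same uniqueness argument as the paper, but you derive the PDE by a different route. The paper follows Beliaev--Smirnov literally: it introduces the reversed radial process $\tilde f_t$ of Definition~\ref{reverse}, forms the martingale $\mathcal M_s=\mathbb E\big(X_t(z_1)\overline{X_t(z_2)}\,\big|\,\mathcal F_s\big)$ with $X_t=(\tilde f_t')^{p/2}(\tilde f_t)^{-q/2}$, computes the drift for the time-dependent two-point function $\widetilde G(z_1,\bar z_2,\tau)$, and passes to the limit $\tau\to\infty$ via Lemma~\ref{lemmefond} to obtain the PDE \eqref{eq1} for general $(p,q)$; Theorem~\ref{theorem kappa2} then drops out as the $p=q=2$, $\kappa=2$ specialization of Theorem~\ref{maintheorem}. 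You instead use the forward decomposition $f_0=f_t\circ\phi_t$ together with the stationarity $f_t(z)=e^t\lambda(t)\widetilde f_0(\lambda(t)^{-1}z)$ and differentiate $\mathbb E(M_t)$ at $t=0$. This is the ``stationarity'' derivation (in the spirit of Kemppainen's work cited in the introduction) rather than the reversed-flow derivation; it is slightly more direct in that no $t\to\infty$ limit is needed, while the paper's route has the advantage of fitting into the general $(p,q)$ machinery of Sections~\ref{sec3}--\ref{sec4} without extra work.

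One point to tighten: the independence of $\widetilde f_0$ from $\mathcal F_t$ that you invoke is not contained in Proposition~\ref{prop1} as stated (which only gives equality in law); it is the Markov property of whole-plane SLE, coming from the fact that $f_t$ depends only on $(\lambda(s))_{s\ge t}$. You flag this among the ``delicate parts'', which is appropriate. Your power-series uniqueness argument is the same as the paper's Lemma in Section~\ref{sec4} (note the leading coefficient is $\tfrac{\kappa}{2}(n-m)^2+n+m$ in general, which for $\kappa=2$ reduces to your $(n-m)^2+n+m$), and your direct verification that $(1-z)(1-\bar z)/(1-z\bar z)$ solves the $\kappa=2$ equation is correct.
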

This gives
 $$
 \mathbb{E} \bigg({\left|z \frac{f'(z)}{f(z)} \right|}^{2} \bigg) 
 =1-\sum_{n\geq 0}\big(z^{n+1}\bar{z}^{n}+z^{n}\bar{z}^{n+1}\big)+2\sum_{n\geq 1}z^{n}\bar{z}^n,
 $$
and  Theorem \ref{logarithmic theorem} follows by direct identification to Eq. \eqref{dev2}. 

Theorem \ref{theorem kappa2}
is actually a consequence of Theorems \ref{maintheorem1/2} and \ref{maintheorem} of Sections \ref{sec3} and \ref{sec4} below, which give  expressions in closed form for the mixed moments, 
   \begin{equation}\label{mixedmom}
  (a)\,\,\,\,\, \mathbb{E}\bigg(\frac{\left(f'(z)\right)^{p/2}}{\left(f(z)\right)^{q/2}}\bigg);\,\,\,\,(b)\,\,\,\mathbb{E}\bigg({\frac{\vert f'(z) \vert^p}{\vert f(z)\vert^q}}\bigg),
   \end{equation}
    along an integrability curve $\mathcal R$, which is a {\it parabola} in the $(p,q)$ plane depending on the SLE parameter $\kappa$. In fact, we establish a general integrability result along the $\mathcal R$  parabola for the SLE {\it two-point function}:
$$ G(z_1,\bar z_2):=\mathbb{E}\bigg(z_1^{\frac{q}{2}} \frac{(f'(z_1))^\frac{p}{2}}{(f(z_1))^\frac{q}{2}}  \overline{\left[z_2^{\frac{q}{2}}\frac{(f'(z_2))^\frac{p}{2}}{(f(z_2))^\frac{q}{2}}\right]}\bigg). 
$$
The mixed moments \eqref{mixedmom} can also be seen respectively as the value $G(z,0)$  of this SLE two-point function at $(z_1=z, z_2\to 0)$ for $(a)$, and the value $G(z,\bar z)$ at coinciding points,  $z=z_1=z_2$, for $(b)$. These integrability theorems, which provide  full generalizations of the results of Refs. \cite{DNNZ} and \cite{IL},  give rise to a host of new algebraic identities concerning the whole-plane $\SLE_\kappa$ random map.
   
These  integrability results are generalized in Section \ref{mfold} to the so-called $m$-fold symmetric transforms of the whole-plane SLE map \cite{DNNZ}, $f^{[m]}: z\mapsto \sqrt[m]{f(z^m)}, m\in \mathbb N\setminus \{0\}$,  defined for $f \in \mathcal S$ as the holomorphic branch  whose derivative is equal to $1$ at $0$. These are functions in $\mathcal S$ whose Taylor series are of the form $f(z)=\sum_{k \ge 0}a_{mk+1}z^{mk+1}$, the $m=2$ case  corresponding to odd functions.
 To extend  the definition to negative $m$,  the $m$-fold transform of the outer map  is conjugate by inversion of the $(-m)$-fold transform of the inner map, $f^{[m]}(\zeta)=1/f^{[-m]}(1/\zeta)$, for $m\in \mathbb Z\setminus\mathbb N$ and $\zeta\in \mathbb D_-:=\mathbb C\setminus \overline{\mathbb D}$.  	In particular, the map, $\zeta\in {\mathbb D_-} \mapsto f^{[-1]}(\zeta)=1/f(1/\zeta),$  is just the exterior whole-plane map considered by Beliaev and Smirnov in Ref. \cite{BS}, from ${\mathbb D_-}$ to the slit plane, a domain with bounded boundary. 
 \textcolor{black}{Note similarly that for all $m<0$,  $f^{[m]}(\mathbb D_-)$ has bounded boundary}. Interestingly enough, a linear map in the $(p,q)$-moment plane allows one to directly relate the generalized mixed moments of the $m$-fold map to those of $f$, thus yielding the integrability Theorem \ref{mcase}. 
 
 
  From the consideration in Eq. \eqref{mixedmom}  of the mixed moments $(b)$ of moduli of $f'$ and $f$  for whole-plane SLE, we are led to introduce the following definition.
\textcolor{black}{\begin{definition}\label{def:gims} For a function $f\in \mathcal S$, the {\it (average) generalized integral means spectrum} $\beta_f(p,q)$, depending on $p$ and $q$, is defined as  
\begin{equation*}
\beta_f(p,q):=\limsup_{r\to 1^{-}} \frac{\log \int_{r\partial\mathbb D} \mathbb E\left(\frac{\vert f'(z)\vert^p}{\vert f(z)\vert^q}\right)|dz|}{\log\left(\frac{1}{1-r}\right)}.
\end{equation*}
\end{definition}
In the case where the limit exists, this can be written as,
\begin{equation*}
\int_{r\partial\mathbb D} \mathbb E\left(\frac{\vert f'(z)\vert^p}{\vert f(z)\vert^q}\right)|dz|\stackrel{(r\to 1^{-})}{\asymp} \left(1-r\right)^{-\beta_f(p,q)},
\end{equation*}
in the sense of the equivalence of the logarithms of both terms.
\begin{remark} The $q=0$ case yields the \emph{standard}  spectrum $\beta_f(p):=\beta_f(p,0)$, which is defined for any univalent function. For a map $f \in \mathcal S$, the introduction of the $q$-parameter is aimed at the analysis of the behavior of $f$ at infinity. 
 Indeed, when $f$ is bounded, the generalized integral means spectrum does not depend on $q$ and coincides with the standard one. 
\end{remark}
\begin{remark}\label{qq'duality} {\it Exterior-Interior Duality}. For $\hat f:=f^{[-1]}$ and $0<r<1$, we  have 
\begin{equation}\label{qq'}
\int_{r^{-1}\partial\mathbb D} \mathbb E\left(\frac{\vert \hat f'(\zeta)\vert^p}{\vert\hat  f(\zeta)\vert^{q}}\right)|d\zeta|=r^{2p-2}\int_{r\partial\mathbb D} \mathbb E\left(\frac{\vert f'(z)\vert^p}{\vert f(z)\vert^{q'}}\right)|dz|, 
\end{equation} 
 for $q+q'=2p$. In particular, the $(p,0)$ {\rm standard} integral means for the exterior  \cite{BDZ,BS} or interior \cite{DNNZ,2012arXiv1203.2756L,2013JSMTE..04..007L} whole-plane maps correspond to the $(p,2p)$ {\rm generalized} integral means of their respective inverted maps. 
\end{remark}}
\textcolor{black}{The introduction of this generalized spectrum in the whole  $(p,q)$-plane also allows for a unified description of the \emph{standard}  spectra $\beta_{f^{[m]}}(p)$ of the whole collection of $m$-fold 
transforms of a given map $f$, in terms of its generalized spectrum $\beta_f(p,q)$. We indeed have the identity (Section \ref{mfoldspectrum}):
\begin{equation}\label{betamp0}
\beta_{f^{[m]}}(p)=\beta_f\left(p, (1-1/m)p\right),\,\,\,m\in \mathbb Z\setminus\{0\}.
\end{equation}
In particular, for $m=-1$,  the ray $q=2p$ describes the standard integral means spectrum in the exterior (bounded) case, as studied in Refs. \cite {BDZ,BS}.}

In Section \ref{secintmean}, we  study this generalized spectrum, $\beta(p,q;\kappa)$, for the interior whole-plane $\SLE_\kappa$ and  $(p,q)\in \mathbb R^2$. We show that it takes four possible forms, $\beta_0(p)$, $\beta_{\mathrm{tip}}(p)$, $\beta_{\mathrm{lin}}(p)$ and $\beta_1(p,q)$. The first three spectra are independent of $q$, and are respectively given by the bulk, the tip and the linear SLE average spectra appearing in the work by Beliaev and Smirnov \cite{BS}, recently revisited \textcolor{black}{and partially corrected}  in Ref. \cite{BDZ} by Beliaev and two of the present authors.  The bulk case corresponds to the earlier harmonic measure multifractal spectrum derived by the first author in Refs. \cite{2000PhRvL..84.1363D,MR2112128} (see also \cite{BDone,PhysRevLett.89.264101,2008NuPhB.802..494D} and \cite{1751-8121-41-28-285006,PhysRevLett.95.170602,2007JPhA...40.2165R}), and recently established in an almost sure sense by Gwynne, Miller and Sun \cite{2014arXiv1412.8764G};  the tip harmonic measure spectrum was predicted by Hastings \cite{PhysRevLett.88.055506} and proved in an almost sure sense by Johansson Viklund and Lawler \cite{0911.3983}.  The  fourth spectrum, $\beta_1(p,q)$, is the extension to non-vanishing $q$ of a novel integral means spectrum, which was discovered and studied in Refs. \cite{DNNZ} and \cite{IL,2013JSMTE..04..007L}, and which is due to the unboundedness of whole-plane SLE. As shown in Ref. \cite{DNNZ}, this spectrum is also closely related to the (non-standard) SLE tip exponents obtained by quantum gravity techniques in Ref. \cite{MR2112128}, and  to the so-called radial SLE derivative exponents of Ref. \cite{MR2002m:60159b}. \textcolor{black}{As shown in Ref. \cite{BDZ} for the exterior bounded case, the corresponding spectrum, $\beta_1(p,2p)$, due to the SLE seed or `second tip', is subjacent in between the bulk and the tip spectra.} 

\emph{Five different phase transition lines  partition the $(p,q)$-plane into four domains, where the generalized integral means spectrum  takes four different forms, as given by  the following theorem.}
\textcolor{black}{\begin{theorem}\label{propseparbis}
 Define the functions:
 \begin{align*} \nonumber
\beta_{\mathrm{tip}}(p;\kappa):=& -p-1+\frac 14(4+\kappa-\sqrt{(4+\kappa)^2-8\kappa p}),\\  \beta_0(p;\kappa):=&-p+\frac{4+\kappa}{4\kappa}(4+\kappa-\sqrt{(4+\kappa)^2-8\kappa p}),\\ 
\beta_{\mathrm{lin}}(p;\kappa):=& \,\,p-\frac{(4+\kappa)^2}{16\kappa},\\ 
\beta_1(p,q;\kappa):=&3p-2q-\frac{1}{2}-\frac{1}{2}\sqrt{1+2\kappa(p-q)}.
\end{align*}
Consider the `green' parabola $\mathcal G$ defined in the $(p,q)$-plane by the parametric equations,  
\begin{equation*}\begin{split}
&p=p_{\mathcal G}(\gamma):=\frac{(4+\kappa)^2}{8\kappa}-\frac{\kappa}{2}\gamma^2,
\\ 
&q=q_{\mathcal G}(\gamma):=\frac{(4+\kappa)^2}{8\kappa}+\gamma-\kappa\gamma^2,\,\,\gamma\in \mathbb R,
\end{split}\end{equation*}
and the `blue' quartic $\mathcal Q$,
\begin{equation*}\begin{split} 
&p=p_{\mathcal Q}(\gamma):=\frac{\kappa}{16}+\left(1+\frac{\kappa}{4}\right)\gamma-\frac{\kappa}{2}\gamma^2-\frac{1}{8}\Delta^{\frac{1}{2}}(\gamma),\\ 
&q=q_{\mathcal Q}(\gamma):=p_{\mathcal Q}(\gamma)+\gamma-\frac{\kappa}{2}\gamma^2,\\
&\Delta(\gamma):=4\kappa^2\gamma^2-2\kappa(4+\kappa)\gamma+\frac{1}{4}(8+\kappa)^2+4\kappa,\,\,\gamma \in \mathbb R.
\end{split}\end{equation*}
 Phase transition lines for the generalized integral means spectrum $\beta(p,q)$ of whole-plane $\SLE_\kappa$ are  in the $(p,q)$-plane (Fig. \ref{separ}):
 \begin{itemize}
 \item (i)  the vertical  half-line $D_0$ above $P_0=(p_0,q_0)$, with\\ $p_0=3(4+\kappa)^2/32\kappa, q_0=(4+\kappa)(8+\kappa)/16\kappa$, where $\beta_0(p;\kappa)=\beta_{\mathrm{lin}}(p;\kappa)$;
  \item (ii)  the unit slope half-line $D_1$ originating at $P_0$,  whose equation is\\ $q-p=(16-\kappa^2)/32\kappa$ with $p\geq p_0$, and where $\beta_{\mathrm{lin}}(p;\kappa)=\beta_1(p,q;\kappa)$;
   \item (iii) the section of green parabola $\mathcal G$, with  parametric coordinates\\   $\big(p_{\mathcal G}(\gamma),q_{\mathcal G}(\gamma)\big)$  for $\gamma\in[1/4+1/\kappa,1+2/\kappa]$, between $P_0$ and $Q_0=(p'_0,q'_0)$, with $p'_0=-1-3\kappa/8, q'_0=-2-7\kappa/8$, and where $\beta_0(p;\kappa)=\beta_1(p,q;\kappa)$;
    \item (iv) the vertical half-line  $D'_0$ above point $Q_0$, where $\beta_{\mathrm{tip}}(p;\kappa)=\beta_0(p;\kappa)$;
      \item (v) the  branch of the blue quartic $\mathcal Q$ from  $Q_0$ to $\infty$, with parametric coordinates  $\big(p_{\mathcal Q}(\gamma),q_{\mathcal Q}(\gamma)\big)$  for $\gamma\in[1+{2}/{\kappa},+\infty)$, where $\beta_{\mathrm{tip}}(p;\kappa)=\beta_1(p,q;\kappa)$.
  \end{itemize}
  The analytic form of $\beta(p,q;\kappa)$ changes across lines (i), (ii), (iv) and (v) according to Figure  \ref{separ}; in case (ii), $\beta(p,q;\kappa)=\beta_{\mathrm{lin}}(p;\kappa)$ above $D_1$, whereas $\beta(p,q;\kappa)\leq \beta_1(p,q;\kappa)$ below $D_1$.
 \end{theorem} 
   \begin{figure}[h!]
\begin{center}
\includegraphics[angle=0,width=.493290\linewidth]{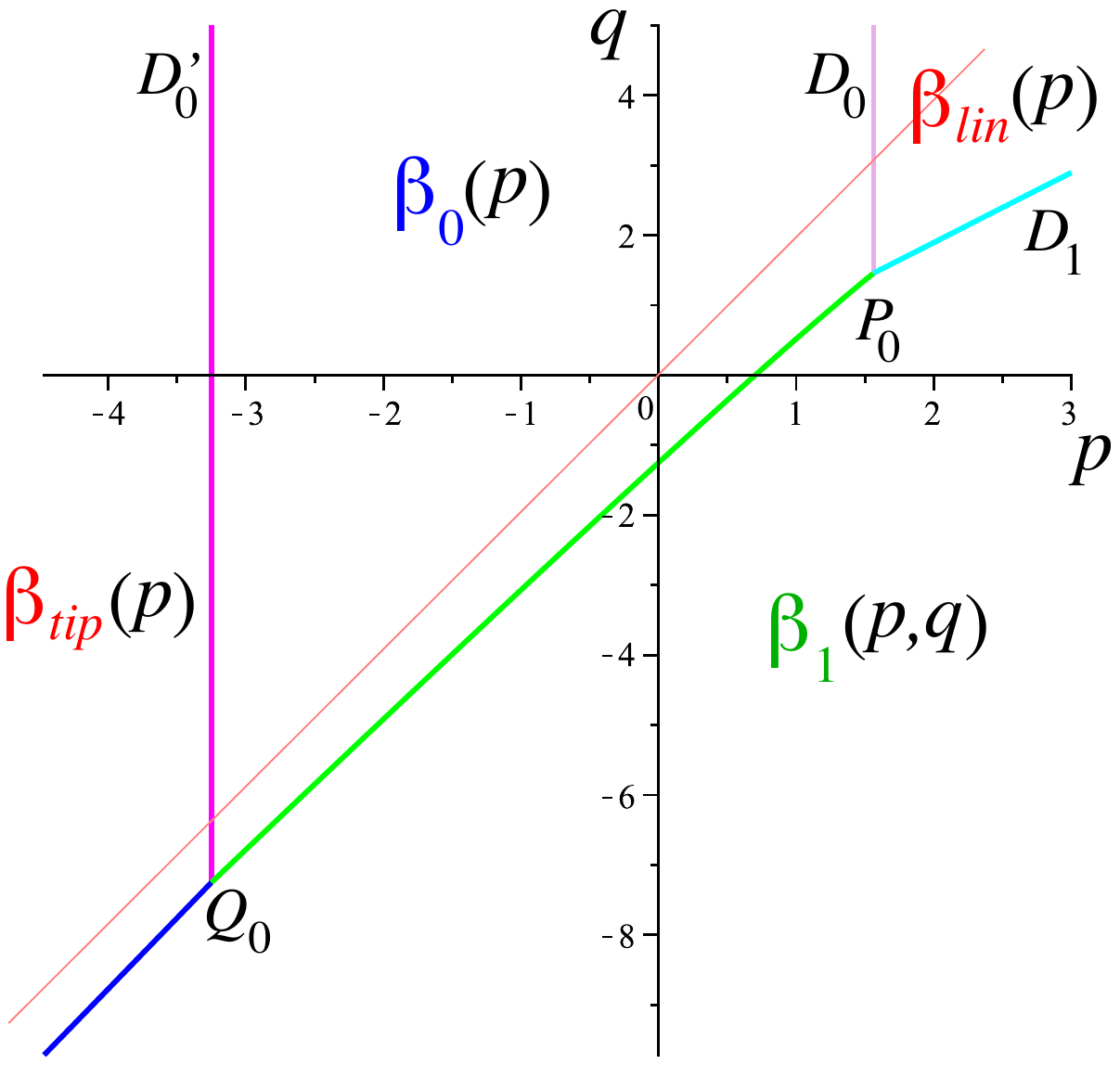}
\includegraphics[angle=0,width=.493290\linewidth]{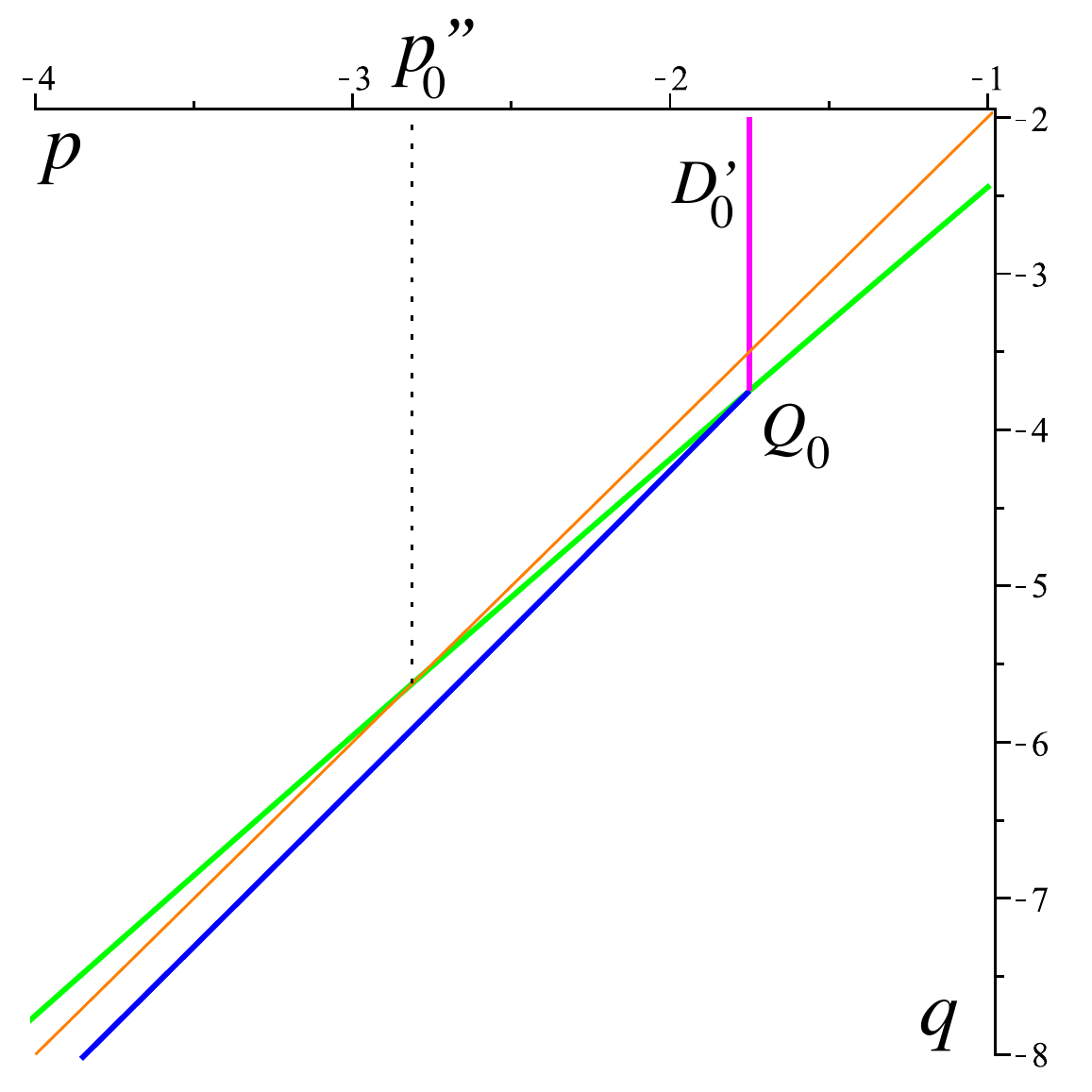}
\caption{{\it Left: Phase diagram for the SLE generalized integral means spectrum $\beta(p,q)$ illustrating Theorem \ref{propseparbis}. The $q=2p$ (coral) straight  line yields the standard spectrum of the exterior version of whole-plane SLE, as  studied in Refs. \cite{BDZ,BS}. Right:  [Zoom below $Q_0$] This line, being asymptotically parallel to the blue quartic $\mathcal Q$, does not intersect the lower domain where the form $\beta_1$ holds, but intersects the green parabola $\mathcal G$  below point $Q_0$ at abscissa $p_0''$.}} 
\label{separ}
\end{center}
\end{figure} 
\begin{remark}The proof of Theorem \ref{propseparbis}, as given in Section \ref{proofs}, establishes the analytic form of $\beta(p,q;\kappa)$ in an upper semi-infinite domain, whose interior strictly includes $D_0$, $D'_0$ and the branches of quartic and parabola of the theorem's statement, and whose frontier includes $D_1$.
\end{remark}
\begin{remark}
The fact that the line $q=2p$ intersects the extension of the green parabola $\mathcal G$ below $Q_0$ at $p_0''=-\frac{1}{128}(4+\kappa)^2(8+\kappa)$ (Fig. \ref{separ}, right) signals the presence of an underlying phase transition from the bulk spectrum to the $\beta_1$ spectrum. As recently shown in Ref. \cite{BDZ}, the original proof of Ref. \cite{BS} breaks down there and requires for $p\leq p_0''$ a novel argument. 
\end{remark}}
    The generalization of this four-domain structure to the generalized integral means spectrum of $m$-fold transforms, $f^{[m]},\,m\in \mathbb N\setminus\{0\}$, is obtained in a straightforward way from the above mentioned linear map in $(p,q)$ co-ordinates (Theorem \ref{propseparmlightbis}). The structure obtained appears so robust that the {\it universal} generalized spectrum $B(p,q)$, i.e., the maximum of $\beta_f(p,q)$ over all univalent functions $f\in \mathcal S$, shows a similar partition of the mixed moment plane.  \textcolor{black}{In Section  \ref{univ},    we give a precise result for its four forms, which incorporates known results and conjectures on the standard universal spectra for univalent, not necessarily bounded, functions \cite{9780511546617,Makanaliz,Pommerenke}.}
      
    \subsection{Synopsis} This article is organized as follows. Section \ref{sec2} deals with logarithmic derivatives. It sets up the martingale techniques needed for dealing with mixed moments. Section \ref{sec3} uses them for the study of the  complex one-point function $(a)$ in \eqref{mixedmom}, which is shown to obey a simple differential equation in complex variable $z$. This leads to Theorem \ref{maintheorem1/2}, which establishes a closed form for this function along the integrability  parabola $\mathcal R$ in the $(p,q)$-plane.  Section \ref{sec4} is concerned with the moduli one-point function $(b)$ in \eqref{mixedmom}, and more generally, with the SLE two-point function $G(z_1,\bar z_2)$. 
A PDE in $(z_1,\bar z_2)$ is derived for $G(z_1,\bar z_2)$, which yields a proof of Theorem \ref{maintheorem} establishing closed form expressions for $G$  for all $(p,q)\in \mathcal R$. Section \ref{mfold} deals with the generalization of the previous integrability results to the $m$-fold symmetric transforms $f^{[m]}, m\in \mathbb Z\setminus \{0\}$, of the whole-plane SLE map $f$.  Section \ref{secintmean} is devoted to the study of the averaged generalized spectrum $\beta(p,q;\kappa)$ of the whole-plane $\SLE_\kappa$ random map $f$, as well as to the averaged generalized integral means spectrum $\beta^{[m]}(p,q;\kappa)$ of the $m$-fold transform $f^{[m]}$ for $m\in \mathbb N\setminus \{0\}$. Of particular interest are the five phase transition lines separating the four different analytic expressions of $\beta$ (or $\beta^{[m]}$) in the moment plane, given in \textcolor{black}{Theorems \ref{propseparbis} and \ref{propseparmlightbis}}.  
In the final Section \ref{univ}, we give a full description of the expected form for the universal generalized integral means spectrum, $B(p,q)$, in terms of known or conjectured results  on the standard universal spectrum for univalent functions.
\subsection*{Acknowledgments}  It is a pleasure to thank Kari Astala for extended discussions about the universal generalized integral means spectrum.    \section{Expectations of logarithmic derivatives}\label{sec2}
   In this section, we prove the following simple result:
      \begin{proposition}\label{mart}
  Let $f(z)=f_0(z)$ be the interior whole-plane $\SLE_2$ map at time $0$, in the same setting as in Proposition \ref{prop1}; we then have
    $$\mathbb{E} \bigg( {z \frac{f'(z)}{f(z)} }\bigg) =1-z.$$
    \end{proposition}
\textcolor{black}{The method explained here will allow us to  swiftly move to more complicated cases in the next sections.}  Let us then introduce
\begin{equation} \label{G1}G(z):=\mathbb{E} \left(z \frac{f'(z)}{f(z)} \right), \end{equation}
and, \textcolor{black}{following Refs. \cite{MR2153402} and \cite{BS}},  aim at finding a partial differential equation satisfied by $G$. For the benefit of the reader not familiar with \textcolor{black}{Refs. \cite{BS,MR2153402}},  let us detail  the strategy of these papers that we will apply in various contexts here.

The starting point is to consider the radial $\SLE_\kappa$, solution to the ODE,
$$    \partial_tg_t(z)=g_t(z)\frac{\lambda(t)+g_t(z)}{\lambda(t)-g_t(z)}, \;z\in\mathbb D,$$
with the initial condition $g_0(z)=z$, and where $\lambda(t)=e^{i\sqrt{\kappa}B_t}$. The map $g_t$  conformally maps  a subdomain of the unit disk onto the latter. As we shall see shortly, the whole-plane map $f$ is rather related to the map $g_t^{-1}$, but this last function satisfies, by Loewner's theory, a PDE not well-suited to It\^{o} calculus. To overcome this difficulty, one runs backward the ODE of radial SLE, i.e., one compares  $g_t^{-1}$ to $g_{-t}$. This is the purpose of Lemma 1 in \cite{BS} (an analog of Lemma 3.1 in Ref. \cite{MR2153402}), which states that, for $t\in \mathbb R$, 
$g_{-t}(z)$ has the same law as  the process $\tilde{f}_t(z)$, defined as follows.
 \begin{definition} \label{reverse}
 The  (conjugate, inverse) radial SLE process $\tilde f_t$ is defined, for $t \in \mathbb R$, as
\begin{equation} \label{deftilde}\tilde{f}_t(z):=g_t^{-1}(z\lambda(t))/{\lambda(t)}.
\end{equation}
\end{definition}
The lemma then results from the simple observation that
$$\tilde{f}_s(z)=\hat{g}_{-s}(z),$$
where, for fixed $s\in \mathbb R$, the new process $\hat{g}_t(z):=g_{s+t}\circ g_s^{-1}(z\lambda(s))/\lambda(s)$ can be  shown to be a radial SLE. 
This lemma implies in particular that $\tilde{f}_t$ is solution to the ODE:
\begin{equation}\label{reverseeq}
   \partial_t\tilde f_t(z)=\tilde f_t(z)\frac{\tilde f_t(z)+\lambda(t)}{\tilde f_t(z)-\lambda(t)},\,\,\,\tilde f_0(z)=z.
   \end{equation}
  To apply It\^{o}'s stochastic calculus, one then uses Lemma 2 in Ref. \cite{BS}, which is a version of the SLE's Markov property,
  $$ \tilde{f}_t(z)=\lambda(s)\tilde{f}_{t-s}(\tilde{f}_s(z)/\lambda(s)).$$
  To finish, one has to relate the whole-plane SLE to the (modified) radial one. This is done through Lemma 3 in \cite{BS},  which is in our present setting (with a change of an $e^{-t}$ convergence factor there to an  $e^{t}$ factor here, when passing from the exterior to the interior of the unit disk $\mathbb D$):
  \begin{lemma}\label{lemmefond}
    The limit in law, $\lim_{t\to +\infty} e^{t}\tilde{f}_t(z)$,  exists, and has the {\it same law} as the (time zero) interior whole-plane random map $f_0(z)$:
    $$
    \lim_{t\to +\infty} e^t \tilde f_t(z)\stackrel{\rm (law)}{=} f_0(z).$$    \end{lemma}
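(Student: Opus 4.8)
The plan is to prove Lemma~\ref{lemmefond} along the lines of Lemma~3 in \cite{BS}, adapted to the interior normalization (where the convergence factor is $e^{t}$, not $e^{-t}$). I would first realize the whole-plane chain $(f_t)_{t\ge0}$ of \eqref{loewner} and the radial process $\tilde f_t$ of Definition~\ref{reverse} on a single Brownian motion, with common driving $\lambda(t)=e^{i\sqrt\kappa B_t}$, so that $f_0$ and $(\tilde f_t)_{t\ge0}$ are coupled; the existence of a chain solving \eqref{loewner} together with the boundary condition $\lim_{t\to\infty}f_t(e^{-t}z)=z$ is the converse part of Loewner's theorem recalled above.

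\emph{Step 1: a conjugacy identity.} Since $\Omega_t=f_t(\mathbb D)$ increases with $t$, the maps $\phi_t:=f_t^{-1}\circ f_0$ are well-defined univalent self-maps of $\mathbb D$ fixing $0$, with $\phi_0=\mathrm{id}$. Differentiating $f_t(\phi_t(z))=f_0(z)$ in $t$ and substituting $\partial_t f_t(w)=w f_t'(w)\frac{\lambda(t)+w}{\lambda(t)-w}$ from \eqref{loewner} gives $\partial_t\phi_t(z)=-\phi_t(z)\frac{\lambda(t)+\phi_t(z)}{\lambda(t)-\phi_t(z)}$, which is exactly the ODE \eqref{reverseeq} for $\tilde f_t$ (using $w\frac{w+\lambda}{w-\lambda}=-w\frac{\lambda+w}{\lambda-w}$). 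By uniqueness of the $\mathbb D$-valued solution of this ODE with initial value $z$, I conclude $\phi_t=\tilde f_t$, i.e. $f_0=f_t\circ\tilde f_t$, so that $e^{t}\tilde f_t(z)=e^{t}f_t^{-1}(f_0(z))$.

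\emph{Step 2: passing to the limit.} With $h_t(u):=f_t(e^{-t}u)$, a univalent map on $e^{t}\mathbb D$ satisfying $h_t(0)=0$ and $h_t'(0)=1$, one has $h_t^{-1}(w)=e^{t}f_t^{-1}(w)$, hence $e^{t}\tilde f_t(z)=h_t^{-1}(f_0(z))$. The boundary condition says precisely that $h_t\to\mathrm{id}$ pointwise on $\mathbb D$; since $h_t|_{\mathbb D}\in\mathcal S$, the growth theorem makes the family normal and the convergence locally uniform on $\mathbb D$, and a rescaling on the expanding disks $e^{t}\mathbb D$ extends this to locally uniform convergence on all of $\mathbb C$. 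Inverting, $h_t^{-1}\to\mathrm{id}$ locally uniformly on $\mathbb C=\bigcup_t\Omega_t$, so $e^{t}\tilde f_t(z)=h_t^{-1}(f_0(z))\to f_0(z)$ for every $z\in\mathbb D$, in fact locally uniformly in $z$; in particular the limit in law exists and equals the law of $f_0$. (Existence of the limit can alternatively be seen directly from \eqref{reverseeq} without the chain: for $F_t:=e^{t}\tilde f_t$ one gets $\partial_t\log F_t(z)=\frac{2\tilde f_t(z)}{\tilde f_t(z)-\lambda(t)}$, and since $(g_t^{-1})'(0)=e^{-t}$ the Koebe distortion theorem gives $|\tilde f_t(z)|=|g_t^{-1}(z\lambda(t))|\le e^{-t}|z|/(1-|z|)^2$, so the right-hand side is $O(e^{-t})$ and $\log F_t(z)$ converges; normality of the schlicht family $\{F_t\}$ then yields a univalent limit, which Step~1 identifies with $f_0$.)

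The hard part is not conceptual but bookkeeping: one must handle the two expanding families of domains, $e^{t}\mathbb D\uparrow\mathbb C$ and $\Omega_t\uparrow\mathbb C$, carefully enough to promote the \emph{pointwise} boundary condition to locally uniform convergence of $h_t$ and of $h_t^{-1}$ to the identity on the entire plane, and to keep the three normalizations ($f_t'(0)=e^{t}$, $(g_t^{-1})'(0)=e^{-t}$, $h_t'(0)=1$) consistently aligned. The ODE-uniqueness step underlying $f_0=f_t\circ\tilde f_t$ is the algebraic heart of the proof and is routine once \eqref{loewner} and \eqref{reverseeq} are in hand.
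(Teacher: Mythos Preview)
The paper does not prove this lemma; it merely cites Lemma~3 of Beliaev--Smirnov \cite{BS}, noting that the exterior factor $e^{-t}$ becomes $e^{t}$ in the interior normalization. Your argument is precisely that adaptation and is correct: the conjugacy identity $f_0=f_t\circ\tilde f_t$ (Step~1) is the structural heart of the proof, and either route in Step~2 --- the locally uniform convergence $h_t\to\mathrm{id}$ on $\mathbb C$, or the direct $O(e^{-t})$ bound on $\partial_t\log(e^t\tilde f_t)$ via the growth theorem --- closes it out.

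One small caveat worth flagging. In Step~1 you invoke ODE uniqueness to identify $\phi_t$ with $\tilde f_t$ \emph{pathwise}. This is legitimate if $\tilde f_t$ is taken to be the solution of~\eqref{reverseeq} with the given driving $\lambda$, and that is indeed how the paper uses $\tilde f_t$ operationally in all subsequent It\^o computations (Eqs.~\eqref{dlogf}--\eqref{dzs}). Strictly speaking, however, Definition~\ref{reverse} introduces $\tilde f_t=g_t^{-1}(z\lambda(t))/\lambda(t)$, and a direct It\^o expansion shows this process is \emph{not} of bounded variation (the martingale part is proportional to $u\psi'(u)-\psi(u)$ with $\psi=g_t^{-1}$, $u=\lambda(t)z$, which does not vanish). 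The process of Definition~\ref{reverse} only agrees \emph{in law} with the ODE solution, by the paper's discussion preceding~\eqref{reverseeq}. Since the lemma is a statement in law this distinction is harmless --- your $\phi_t$ has the law of $\tilde f_t$, and you show $e^t\phi_t\to f_0$ almost surely --- but it is worth being explicit about which version of $\tilde f_t$ you are coupling to $(f_t)$.
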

    Let us now turn to the proof of Proposition \ref{mart}.
   \begin{proof}   
  Let us introduce the auxiliary, time-dependent, radial variant of the SLE one-point function $G(z)$ \eqref{G1} above,
    \begin{align}
    \label{tildeG} \widetilde{G}(z,t):=\mathbb{E}\left( z\frac{\tilde{f}_t'(z)}{\tilde{f}_t(z)} \right),
    \end{align}
   where $\tilde{f}_t$ is a modified radial SLE map at time $t$ as in  Definition \ref{reverse}. Owing to Lemma \eqref{lemmefond}, we have 
   \begin{equation}\label{limtildeG1}\lim_{t\to+\infty} \widetilde{G}(z,t)=G(z).
  \end{equation} 
  
    We then use a martingale technique to obtain an equation satisfied by $\widetilde{G}(z,t)$. 
     For $s\leq t$, define  $\mathcal{M}_s:=\mathbb{E}\left( \frac{\tilde{f}'_t(z)}{\tilde{f}_t(z)} \vert \mathcal{F}_s\right)$, where $\mathcal{F}_s$ is the $\sigma$-algebra generated by $\{B_u,\,u\leq s\}$. $(\mathcal{M}_s)_{s\ge 0}$ is by construction a martingale. Because of the Markov property of SLE, we have \cite{BS}
    \begin{align*}
     \mathcal{M}_s=\mathbb{E} \bigg( \frac{ \tilde{f}'_t(z) }{ \tilde{f}_t(z)} \vert \mathcal{F}_s \bigg) &= \mathbb{E} \bigg(\frac{ \tilde{f}'_{s}(z)}{\lambda(s)}  \frac{ \tilde{f}'_{t - s}(\tilde{f}_{s}(z)/\lambda(s))}{ \tilde{f}_{t - s}(\tilde{f}_{s}(z)/\lambda(s))} \vert \mathcal{F}_s\bigg)\\
     &=\frac{ \tilde{f}'_{s}(z)}{\lambda(s)} \mathbb{E}\bigg(\frac{ \tilde{f}'_{t - s}(\tilde{f}_{s}(z)/\lambda(s))}{ \tilde{f}_{t - s}(\tilde{f}_{s}(z)/\lambda(s))} \vert \mathcal{F}_s\bigg)\\
     &=\frac{ \tilde{f}'_{s}(z)}{\tilde{f}_s(z)} \widetilde{G}(z_s,\tau),
     \end{align*}
    where $z_s:=\tilde{f}_s(z)/\lambda(s)$, and $\tau:=t-s$.
    
   We  have from Eq. \eqref{reverseeq}
   \begin{align}\label{dlogf}
    \partial_s \log  \tilde{f}'_s &= \frac{\partial_z\left[\tilde{f}_s\frac{\tilde{f}_s+\lambda(s)}{\tilde{f}_s-\lambda(s)}\right]}{\tilde{f}'_s} =\frac{\tilde{f}_s+\lambda(s)}{\tilde{f}_s-\lambda(s)}-\frac{2\lambda(s)\tilde{f}_s}{(\tilde{f}_s-\lambda(s))^2} \\\nonumber
    &= 1-\frac{2}{(1-z_s)^2},\\\label{dlogfp}
    \partial_s \log \tilde{f}_s &=\frac{\partial_s\tilde{f}_s}{\tilde{f}_s}=\frac{z_s+1}{z_s-1},\\\label{dzs}
    dz_s&=z_s\bigg[\frac{z_z+1}{z_s-1}-\frac{\kappa}{2} \bigg]ds-i z_s \sqrt{\kappa}dB_s.
    \end{align}
    The coefficient of the $ds$-drift  term of the It\^o derivative of $\mathcal{M}_s$ is obtained from the above as,
 \begin{equation}\label{PtildeG}
 \frac{ \tilde{f}'_{s}(z)}{\tilde{f}_s(z)}\bigg[ -\frac{2z_s}{(1-z_s)^2} + z_s\left(\frac{z_s+1}{z_s-1}-\frac{\kappa}{2}  \right)\partial_{z}-\partial_{\tau}- \frac{\kappa }{2}z_s^2\partial^2_{z} \bigg]\widetilde{G}(z_s,\tau),
 \end{equation}
  and vanishes by the (local) martingale property. 
    Because $\tilde{f}_s$ is univalent, $\tilde{f}'_s$ does not vanish in $\mathbb D$, therefore the bracket above vanishes. 

Owing to the existence of the limit \eqref{limtildeG1}, we now take the limit as $\tau\to +\infty$ in the above, and  obtain the ODE,     
\begin{align}
    \label{equationG}
    \mathcal{P}(\partial)[G(z)]&:=-\frac{2z}{(1-z)^2}G(z) + z\left(\frac{z+1}{z-1}-\frac{\kappa}{2}  \right)G'(z)- \frac{\kappa}{2} z^2 G''(z)\\ \nonumber
   &=\left[-\frac{2z}{(1-z)^2} + z\left(\frac{z+1}{z-1}  \right)\partial_z- \frac{\kappa}{2} (z \partial_z)^2\right]G(z) =0.
    \end{align}
    
    \textcolor{black}{In the above, the exchange of the $\tau \to +\infty$ limit and of partial derivation of $\tilde G(z,\tau)$ with respect to $z$ is justified by the fact that the $\tau$-family $e^{\tau}\tilde f_\tau(z)$ in Lemma \eqref{lemmefond} and all its $z$-derivatives are normal, i.e., uniformly bounded in any compact of $\mathbb D$, so that the spatial derivatives of $\tilde G(z,\tau)$ form an equicontinuous family. A further requirement is that $\lim_{\tau \to +\infty} \frac{\partial}{\partial \tau} \tilde G=0$. Use of the Schramm-Loewner equation \eqref{reverseeq} for $\tilde f_\tau$ shows that
$$    \frac{\partial}{\partial \tau} \left(\frac{\tilde f'_\tau(z)}{\tilde f_\tau(z)}\right)=-\frac{2\tilde f'_\tau(z)}{(\tilde f_\tau(z)-\lambda(\tau))^2}.$$
Since $z\mapsto e^\tau\tilde{f}_\tau(z)$ belongs to the $\mathcal{S}$ class, classical Koebe distortion theorems then show that the right-hand side is bounded by $C(z) e^{-\tau}$, with $C$  defined in $\mathbb D$; this insures both the validity of the exchange of expectation and $\tau$-derivation, and the vanishing of the limit above.}

      Following Ref. \cite{DNNZ}, we now look for solutions to Eq. \eqref{equationG} of the form
   $\varphi_\alpha(z):=(1-z)^\alpha.$ We have
      $$\mathcal{P}(\partial)[\varphi_\alpha]=A(2,2,\alpha)\varphi_{\alpha}+B(2,\alpha)\varphi_{\alpha-1}+C(2,\alpha)\varphi_{\alpha-2},$$
where, in anticipation of the notation that will be introduced in Section \ref{sec3} below,
\begin{align*}
A(2,2,\alpha)&:=\alpha-\frac{\kappa}{2}\alpha^2,\\
B(2,\alpha)&:=2-\left(3+\frac{\kappa}{2}\right)\alpha+\kappa\alpha^2,\\
C(2,\alpha)&:=-2+\left(2+\frac{\kappa}{2}\right)\alpha-\frac{\kappa}{2}\alpha^2,
\end{align*}
with, identically, $A+B+C=0$.   The linear independence of $\varphi_{\alpha},\varphi_{\alpha-1}, \varphi_{\alpha-2}$ thus shows that $\mathcal{P}(\partial)[\varphi_\alpha]=0$ is equivalent to $A=B=C=0$, 
 which yields $\kappa=2, \alpha=1$, and $1-z$ is a solution. \textcolor{black}{That it must be $G$ follows from two facts. 
First, $G$ is holomorphic, since it is clearly continuous by Lebesgue theorem and holomorphicity then follows from Morera's theorem.
Second, the space of formal power series that are solutions to Eq. \eqref{equationG} is one-dimensional. This is analogous to Lemma 3.1 in Ref. \cite{DNNZ}, and follows here from the fact that 
$ \mathcal{P}(\partial)(z^n)=-(\frac{\kappa}{2}n^2+n)z^n+O(z^{n+1}).$}
\end{proof}. 
\section{SLE one-point Function} \label{sec3}
  \begin{figure}[htbp]
\begin{center}
\includegraphics[angle=0,width=.43290\linewidth]{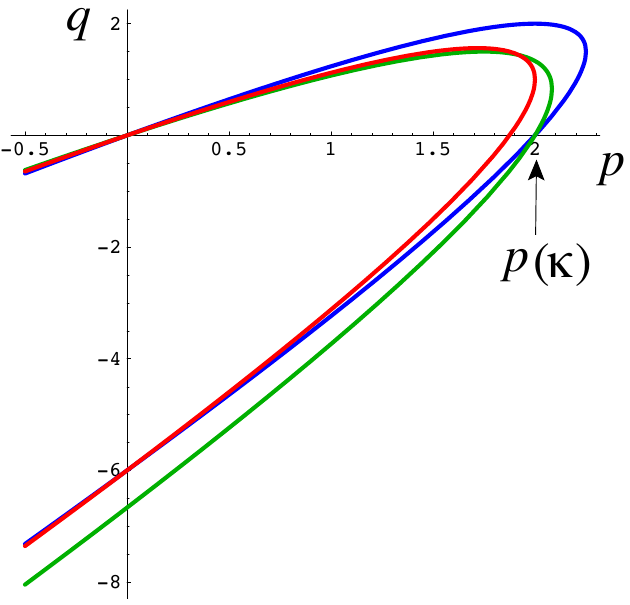}
\caption{{\it Integral curves  $\mathcal R$  of Theorem \ref{maintheorem1/2}, for $\kappa=2$ (\textcolor{blue}{blue}), $\kappa=4$ (\textcolor{red}{red}), and $\kappa=6$ (\textcolor{dgreen}{green}). In addition to the origin, the $q=0$ intersection point  with the $p$-axis is at $p(\kappa):={(6+\kappa)(2+\kappa)}/{8\kappa}$, with $p(2)=p(6)=2$ \cite{DNNZ,IL}.}}
\label{whpl2}
\end{center}
\end{figure}  Let us now turn to the natural generalization of Proposition \ref{mart}.
 \begin{theorem}\label{maintheorem1/2}
  Let $f(z)=f_0(z)$ be the interior whole-plane $SLE_\kappa$ map at time zero, in the same setting as in Proposition \ref{prop1}. Consider   
     the curve $\mathcal R$, defined parametrically by
        \begin{align}\label{para}
        p=-\frac{\kappa}{2} \gamma^2+\left(2+\frac{\kappa}{2}\right)\gamma,\,\,\,\,
     2p-q=\left(1+\frac{\kappa}{2}\right)\gamma,\,\,\, \gamma\in \mathbb R.
    \end{align}
   On $\mathcal R$, the whole-plane $\SLE_\kappa$ one-point function has the integrable form,
     $$\mathbb{E} \bigg( {\frac{(f'(z))^\frac{p}{2}}{(f(z)/z)^\frac{q}{2}} }\bigg) =(1-z)^{\gamma}.$$ 
   \end{theorem}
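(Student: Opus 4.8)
The plan is to run the Beliaev--Smirnov martingale scheme that produced Lemma~\ref{expectaation f'/f k=2}, but now carrying two free parameters $(p,q)$ instead of the single pair $(2,2)$ implicit there. First I would introduce the time‑dependent radial surrogate
\[
G(z):=\mathbb{E}\left(\frac{(f'(z))^{p/2}}{(f(z)/z)^{q/2}}\right),\qquad
\widetilde{G}(z,t):=\mathbb{E}\left(e^{(p-q)t/2}\,\frac{(\tilde f_t'(z))^{p/2}}{(\tilde f_t(z)/z)^{q/2}}\right),
\]
where $\tilde f_t$ is the modified radial $\SLE$ process of Definition~\ref{reverse}, each branch being the one analytic near $z=0$ with value $1$ there; such branches exist because $f$ (resp.\ $\tilde f_t$) is univalent, so $f'$ and $f(z)/z$ do not vanish on $\mathbb{D}$. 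Lemma~\ref{lemmefond} gives $e^t\tilde f_t(z)\to f_0(z)$ in law (with the $z$-derivative), whence $\lim_{t\to+\infty}\widetilde{G}(z,t)=G(z)$, provided the limit may be taken inside the expectation; as in \cite{BS,DNNZ} this needs a uniform integrability bound, valid on the portion of the $(p,q)$-plane where the moment is finite, an input I take for granted.

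Next, for fixed $t$ I would form the martingale $\mathcal{M}_s:=\mathbb{E}\big(e^{(p-q)t/2}(\tilde f_t'(z))^{p/2}/(\tilde f_t(z)/z)^{q/2}\mid\mathcal{F}_s\big)$. The chain‑rule factorizations $\tilde f_t'(z)=\tilde f_s'(z)\,\tilde f_{t-s}'(z_s)$ and $\tilde f_t(z)/z=(\tilde f_s(z)/z)(\tilde f_{t-s}(z_s)/z_s)$, with $z_s=\tilde f_s(z)/\lambda(s)$, together with the $\SLE$ Markov property, give $\mathcal{M}_s=P_s\,\widetilde{G}(z_s,t-s)$ with $P_s:=e^{(p-q)s/2}(\tilde f_s'(z))^{p/2}/(\tilde f_s(z)/z)^{q/2}$. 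By \eqref{dlogf}--\eqref{dzs} the prefactor $P_s$ has finite variation, $dP_s/P_s=\big[p-q-p(1-z_s)^{-2}+q(1-z_s)^{-1}\big]ds$, while $z_s$ obeys the It\^o equation \eqref{dzs}. Writing It\^o's formula for $\mathcal{M}_s=P_s\widetilde{G}(z_s,t-s)$ ($\widetilde{G}$ being holomorphic in its first slot, only $\partial_z$ and $\partial_z^2$ enter), cancelling the $ds$-drift by the local martingale property (legitimate since $\tilde f_s'$ never vanishes on $\mathbb{D}$), and finally letting $t-s\to+\infty$ so that $\widetilde{G}\to G$ and $\partial_\tau\widetilde{G}\to 0$, produces the ODE
\[
\left[p-q-\frac{p}{(1-z)^2}+\frac{q}{1-z}\right]G(z)+z\left(\frac{z+1}{z-1}-\frac{\kappa}{2}\right)G'(z)-\frac{\kappa}{2}z^2G''(z)=0,
\]
which for $(p,q)=(2,2)$ collapses to \eqref{equationG}.

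Then I would test the power ansatz $\varphi_\gamma(z):=(1-z)^\gamma$. Setting $u:=1-z$ and expanding $z$, $z^2$ and $(z+1)/(z-1)$ as Laurent series in $u$, the left‑hand side of the ODE becomes $A\,\varphi_\gamma+B\,\varphi_{\gamma-1}+C\,\varphi_{\gamma-2}$ with
\[
C=-p+\Big(2+\tfrac{\kappa}{2}\Big)\gamma-\tfrac{\kappa}{2}\gamma^2,\qquad
B=q-\Big(3+\tfrac{\kappa}{2}\Big)\gamma+\kappa\gamma^2,\qquad
A=(p-q)+\gamma-\tfrac{\kappa}{2}\gamma^2,
\]
and one checks the identity $A+B+C\equiv 0$. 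Since $\varphi_\gamma,\varphi_{\gamma-1},\varphi_{\gamma-2}$ are linearly independent, $\varphi_\gamma$ solves the ODE if and only if $B=C=0$; these two equations read $p=-\tfrac{\kappa}{2}\gamma^2+(2+\tfrac{\kappa}{2})\gamma$ and $2p-q=(1+\tfrac{\kappa}{2})\gamma$, i.e.\ precisely the parametrization \eqref{para} of the curve $\mathcal{R}$.

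Finally, to conclude that $G$ actually equals $(1-z)^\gamma$ rather than merely that $(1-z)^\gamma$ is one of the two solutions, I would note that $z=0$ is a regular singular point of the ODE with indicial exponents $0$ and $-2/\kappa$ (independent of $(p,q)$), so for $\kappa>0$ the solution analytic at $0$ normalized by $G(0)=1$ is unique, and — the only other finite singular point being $z=1$ — that solution is analytic throughout $\mathbb{D}$. As $G$ is itself analytic on $\mathbb{D}$ with $G(0)=1$ (since $f'(0)=1$ and $f(z)/z\to 1$), and $(1-z)^\gamma$ shares these properties, the two agree on $\mathbb{D}$. The hard part, I expect, will not be the algebraic verification along $\mathcal{R}$, which is mechanical, but the analytic prerequisites: justifying the $t\to+\infty$ interchange of limit and expectation, the analyticity (and differentiability in $z$) of $G$, and the vanishing $\partial_\tau\widetilde{G}\to 0$ — all of which presumably hold only on an appropriate sub‑arc of $\mathcal{R}$ and should be handled along the lines of \cite{BS,DNNZ}.
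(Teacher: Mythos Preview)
Your proposal is correct and follows essentially the same route as the paper: derive the Beliaev--Smirnov type ODE for $G$ via the martingale $\mathcal{M}_s$ built from the reverse radial flow, test the ansatz $(1-z)^\gamma$, and read off $\mathcal{R}$ from the vanishing of the coefficients $A,B,C$ (your $B=C=0$ is equivalent to the paper's $A=C=0$ since $A+B+C\equiv 0$). The only cosmetic difference is that you absorb the factor $e^{(p-q)t/2}$ into $\widetilde{G}$ from the outset, whereas the paper multiplies by it at the end; your uniqueness argument via the indicial exponents $0$ and $-2/\kappa$ at $z=0$ is more explicit than the paper's, which simply invokes the normalization $G(0)=1$.
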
    
    \begin{remark}  Eq. \eqref{para} describes a parabola in the $(p,q)$ plane (see  (Fig. \ref{whpl2}), which is given in Cartesian coordinates by 
    \begin{equation}\label{paraRCart}
    2\kappa \left(\frac{2p-q}{2+\kappa}\right)^2-\left(4+\kappa\right)\frac{2p-q}{2+\kappa}+p=0,  
    \end{equation}  
    with two branches,
     \begin{equation}\label{p,q}\begin{split}
      \gamma&=\gamma_0^{\pm}(p):= \frac{1}{2\kappa}\left( 4+\kappa \pm \sqrt{(4+\kappa)^2-8\kappa p} \right),\,\,\, p \leq \frac{(4+\kappa)^2}{8\kappa},
      \\ q&=2p-\left(1+\frac{\kappa}{2}\right)\gamma_0^{\pm}(p).
     \end{split}
\end{equation} 
or, equivalently,
\begin{equation}\label{p,qbis}
    2p =q +  \frac{2+\kappa}{8\kappa}\left(6+\kappa \pm \sqrt{(6+\kappa)^2-16\kappa q}\right),\,\,\,q \leq \frac{(6+\kappa)^2}{16\kappa}.     
 \end{equation}    
      \end{remark} 
 \begin{proof} Our aim is to derive an ODE satisfied by the whole-plane SLE one-point function,
\begin{equation}\label{Gpq} 
G(z):=\mathbb{E}\bigg(z^{\frac{q}{2}} \frac{(f'(z))^\frac{p}{2}}{(f(z))^\frac{q}{2}} \bigg),
\end{equation}
which, by construction, stays finite at the origin and  such that $G(0)=1$. 

Let us introduce the shorthand notation,
\begin{equation}\label{defX}
X_t(z):=\frac{(\tilde{f}'_t(z))^\frac{p}{2}}{(\tilde{f}_t(z))^\frac{q}{2}}, \end{equation}
where $\tilde f_t$ is the conjugate, reversed radial SLE process  in $\mathbb D$, as introduced in Definition \ref{reverse}, and such that 
by Lemma \ref{lemmefond}, the limit, $\lim_{t\to +\infty} e^t \tilde f_t(z)\stackrel{\rm (law)}{=} f_0(z)$, is the same in law as the whole-plane map at time zero. Applying the same method as in the previous section, we consider the time-dependent   function
 \begin{equation}\label{tildeGpq}
 \widetilde{G}(z,t):=\mathbb{E}\left(z^{\frac{q}{2}} X_t(z)\right),
 \end{equation}
 such that 
  \begin{equation}\label{inftytildeGpq}
 \lim_{t\to +\infty}\exp\left(\frac{p-q}{2}t\right)\widetilde{G}(z,t)=G(z).
 \end{equation}
 
  Consider now the martingale $(\mathcal{M}_s)_{t\geq s\ge 0}$, defined by
 $$\mathcal{M}_s=\mathbb{E}(X_t(z) \vert \mathcal{F}_s).$$
 By the SLE Markov property we get, setting $z_s:=\tilde{f}_s(z)/\lambda(s)$, 
  \begin{equation}\label{Ms}
      \mathcal{M}_s=X_s(z)\widetilde{G}(z_s,\tau),\,\,\,\tau:=t-s.
           \end{equation}
As before, the partial differential equation satisfied by $\widetilde{G}(z_s,\tau)$ is obtained by expressing the fact that the $ds$-drift term of  the It\^o differential of Eq. \eqref{Ms},
$$d\mathcal{M}_s= \widetilde G\,dX_s +X_s\, d\widetilde G,$$  vanishes.
The  differential of $X_s$  is simply computed from Eqs. \eqref{dlogf} and \eqref{dlogfp} above as:
\begin{equation}\label{diffX} \begin{split}
dX_s(z)&=X_s(z) F(z_s)ds,\\
F(z)&:=\frac{p}{2}\left[1-\frac{2}{(1-z)^2}\right]-\frac{q}{2}\left[1-\frac{2}{1-z}\right].
\end{split}\end{equation}
The  It\^o differential $d\widetilde G$ brings in the $ds$ terms proportional to $\partial_{z_s}\widetilde G,\,\partial^2_{z_s}\widetilde G$, and $\partial_\tau\widetilde G$; therefore, in the PDE satisfied by $\widetilde G$,  the latter terms are exactly the same as in the PDE \eqref{PtildeG}. We therefore directly arrive at the vanishing condition of the  overall drift term coefficient in $d\mathcal M_s$,
 \begin{equation}\label{PtildeGpq}
 X_s(z)\bigg[ F(z_s) + z_s\left(\frac{z_s+1}{z_s-1}-\frac{\kappa}{2}  \right)\partial_{z}-\partial_{\tau}- \frac{\kappa }{2}z_s^2\partial^2_{z} \bigg]\widetilde{G}(z_s,\tau)=0.
 \end{equation}
Since $X_s(z)$ does not vanish in $\mathbb D$, the bracket in \eqref{PtildeGpq} must identically vanish:
  \begin{equation}\label{PtildeGpqbis}
\bigg[ F(z_s) + z_s\frac{z_s+1}{z_s-1}\partial_{z}-\partial_{\tau}- \frac{\kappa }{2}(z_s\partial_{z})^2 \bigg]\widetilde{G}(z_s,\tau)=0,
 \end{equation}
 where we used  $z\partial_z+z^2\partial_z^2=(z\partial_z)^2$.
 
 To derive the ODE satisfied by $G(z)$ \eqref{Gpq}, we first recall its expression as the limit \eqref{inftytildeGpq}, which further implies 
$$\lim_{\tau\to +\infty}\exp{\left(\frac{p-q}{2}\tau\right)}\partial_\tau\tilde{G}(z,\tau)=-\frac{p-q}{2}G(z),$$
\textcolor{black}{with an exchange of derivative and limit similar to that in the proof of Proposition \ref{mart}.}
 Multiplying the PDE \eqref{PtildeGpq} satisfied by $\widetilde{G}$ by $\exp(\frac{p-q}{2}\tau)$ and letting $\tau\to+\infty$, we  get 
 \begin{align}\nonumber
 \mathcal{P}(\partial)[G(z)]&:=\left[-\frac{\kappa}{2}(z\partial_z)^2-\frac{1+z}{1-z}z\partial_z  +F(z)+\frac{p-q}{2} \right]G(z)\\ \label{PGpq}&=\left[-\frac{\kappa}{2}(z\partial_z)^2-\frac{1+z}{1-z}z\partial_z   -\frac{p}{(1-z)^2}+\frac{q}{1-z}+p-q \right]G(z)=0.
  \end{align} 
\textcolor{black}{Again, the space of  holomorphic solutions to (\ref{PGpq}) is one-dimensional, because $F(z)+(p-q)/2$ vanishes at $z=0$. For the boundary condition $G(0)=1$, we now look for special solutions of the form $\varphi_\alpha(z)=(1-z)^\alpha$.}  This function satisfies the simple differential operator algebra \cite{DNNZ}
\begin{equation}\label{Pdelta}\mathcal{P}(\partial)[\varphi_\alpha]=A(p,q,\alpha)\varphi_{\alpha}+B(q,\alpha)\varphi_{\alpha-1}+C(p,\alpha)\varphi_{\alpha-2},
\end{equation}
\begin{align}\label{A}
A(p,q,\alpha)&:=p-q+\alpha-\frac{\kappa}{2}\alpha^2,\\ \label{B}
B(q,\alpha)&:=q-\left(3+\frac{\kappa}{2}\right)\alpha+\kappa\alpha^2,
\\ \label{C}C(p,\alpha)&:=-p+\left(2+\frac{\kappa}{2}\right)\alpha-\frac{\kappa}{2}\alpha^2,
\end{align}
such that, identically,  $A+B+C=0$. Because $\varphi_{\alpha},\varphi_{\varphi-1},\varphi_{\alpha-2}$ are linearly independent, the condition $\mathcal{P}(\partial)[\varphi_\gamma]=0$ is equivalent to the system $A=C=0$, hence  $C(p,\gamma)=0$ and $A(p,q,\gamma)-C(p,\gamma)=2p-q-(1+\kappa/2)\gamma=0$. It yields precisely  the parabola parametrization \eqref{para} given in Theorem \ref{maintheorem1/2}, and has for solution \eqref{p,q}.
\end{proof}
    \section{SLE two-point function}\label{sec4}
\subsection{Beliaev--Smirnov type equations}\label{2pointGF}
 In this section, we will determine the mixed moments of moduli, $\mathbb{E}\left(\frac{\vert f'(z) \vert^p}{\vert f(z)\vert^q}\right)$, for  $(p,q)$ belonging to the same parabola $\mathcal R$ as in Theorem \ref{maintheorem1/2}, and where $f=f_0$ is the (time zero) interior whole-plane $\SLE_\kappa$ map. 
 
 In contradistinction to the method used in Refs. \cite{BS,DNNZ} for writing a PDE obeyed by $\mathbb E(|f'(z)|^p)$, we shall use here a slightly different approach, building on the results obtained in Section \ref{mart}. We shall  study the SLE two-point   function for $z_1,z_2\in \mathbb D$,
  \begin{equation}\label{gf2}
 G(z_1,\bar z_2):=\mathbb{E}\bigg(z_1^{\frac{q}{2}} \frac{(f'(z_1))^\frac{p}{2}}{(f(z_1))^\frac{q}{2}}  \overline{\left[z_2^{\frac{q}{2}}\frac{(f'(z_2))^\frac{p}{2}}{(f(z_2))^\frac{q}{2}}\right]}\bigg). 
 \end{equation}
 
As before, we define a time-dependent, auxiliary two-point function,
  \begin{equation}\label{tildeG2pq}
  \begin{split}
 \widetilde{G}(z_1,\bar z_2,t)&:=\mathbb{E}\left(z_1^{\frac{q}{2}}\frac{(\tilde{f}'_t(z_1))^\frac{p}{2}}{(\tilde{f}_t(z_1))^\frac{q}{2}}
\overline{\left[z_2^{\frac{q}{2}} \frac{(\tilde{f}'_t(z_2))^\frac{p}{2}}{(\tilde{f}_t(z_2))^\frac{q}{2}}\right]}\right)\\
 &=\mathbb{E}\left(z_1^{\frac{q}{2}} X_t(z_1)\overline{z_2^{\frac{q}{2}} X_t(z_2)}\right),
 \end{split}
 \end{equation} 
 where as above  $\tilde f_t$ is the reverse radial $\SLE_\kappa$ process \ref{reverse}, and where we used the shorthand notation \eqref{defX}. This time, the two-point function \eqref{gf2} is the limit
   \begin{equation}\label{inftytildeG2pq}
 \lim_{t\to +\infty}e^{(p-q)t}\widetilde{G}(z_1,\bar{z}_2,t)=G(z_1,\bar{z}_2).
 \end{equation} 

Let us define the two-point martingale $(\mathcal M_s)_{t\geq s\geq0}$, with 
$$\mathcal{M}_s:=\mathbb{E} (X_t(z_1)\overline{X_t(z_2)} \vert \mathcal{F}_s ).$$
 By the Markov property of SLE,
 \begin{equation}\label{Ms2}
 \mathbb{E} \big(X_t(z_1)\overline{X_t(z_2)} \vert \mathcal{F}_s \big) 
  = X_s(z_1)\overline{X_s(z_2)}\, \widetilde{G}(z_{1s},\bar z_{2s},\tau),\,\,\,\tau:=t-s,
 \end{equation}
 where 
 \begin{align}
 z_{1s}:=\tilde{f}_s(z_1)/\lambda(s); \,\,\, \bar z_{2s}:=\overline{\tilde{f}_s(z_2)/\lambda(s)}=\overline{\tilde{f}_s(z_2)}\lambda(s).
  \end{align} 
  Their It\^o differentials, $dz_{1s}$ and $d\bar{z}_{2s}$, are as in \eqref{dzs},
 \begin{equation}\label{dz12}\begin{split}
 dz_{1s}&=z_{1s}\bigg[ \frac{z_{1s}+1}{z_{1s}-1}-\frac{\kappa}{2} \bigg]ds-i\sqrt{\kappa}\,z_{1s}\,dB_s,\\
 d\bar z_{2s}&=\bar z_{2s}\bigg[ \frac{\bar{z}_{2s}+1}{\bar{z}_{2s}-1}-\frac{\kappa}{2} \bigg]ds+i\sqrt{\kappa}\,\bar{z}_{2s}\,dB_s. 
 \end{split}\end{equation} 
 As before, the partial differential equation satisfied by $\widetilde{G}(z_{1s},z_{2s},\tau)$ is obtained by expressing the fact that the $ds$-drift term of  the It\^o differential of Eq. \eqref{Ms2},
\begin{equation}\label{dMs2}
d\mathcal{M}_s= [dX_s(z_1)\overline{X_s(z_2)} +X_s(z_1)d\overline{X_s(z_2)}]\,\widetilde G+ X_s(z_1)\overline{X_s(z_2)}\, d\widetilde G,
\end{equation}  
vanishes.

The  differentials of $X_s$, $\overline{X_s}$  are as in  Eq. \eqref{diffX}  above:
\begin{equation}\label{diffXbis} \begin{split}
dX_s(z_1)&=X_s(z_1) F(z_{1s})ds,\,\,\,
d\overline{X_s(z_2)}=\overline{X_s(z_2)} F(\bar z_{2s})ds,\\
F(z)&:=\frac{p}{2}-\frac{q}{2}-\frac{p}{(1-z)^2}+\frac{q}{1-z}.
\end{split}\end{equation} 
We thus obtain the simple expression
\begin{equation}\label{dMs2bis}
d\mathcal{M}_s= X_s(z_1)\overline{X_s(z_2)} \left[\left[F(z_{1s})+F(\bar{z}_{2s})\right]\,\widetilde G\,ds+ d\widetilde G\right],
\end{equation}  
and the vanishing of the $ds$-drift term in $d\mathcal{M}_s$ requires that of the drift term in the right-hand side bracket in \eqref{dMs2bis}, since $X_s(z)$ does not vanish in $\mathbb D$. 

The It\^o differential of $\widetilde G(z_{1s},\bar{z}_{2s},\tau)$ can be obtained  from Eqs. \eqref{dz12} and It\^o calculus as
\begin{align}\label{difftG}
d\widetilde G(z_{1s},\bar{z}_{2s},\tau)=&\partial_1\widetilde G\, dz_{1s} +\bar{\partial}_2\widetilde G\, d\bar{z}_{2s} -\partial_\tau\widetilde G\,ds\\ \nonumber
&-\frac{\kappa}{2} z_{1s}^2\,\partial_1^2\widetilde G\, ds -\frac{\kappa}{2}  \bar{z}_{2s}^2\,\bar{\partial}_2^2\widetilde G\, ds +\kappa  z_{1s}\, \bar{z}_{2s}\, \partial_1\bar{\partial}_2\widetilde G\, ds,
\end{align} 
where use was made of the shorthand notations, $\partial_1:=\partial_{z_1}$ and  $\bar{\partial}_2:=\partial_{\bar{z}_2}$. 
We observe that the only coupling between the $z_{1s},\bar{z}_{2s}$ variables arises in the last term of \eqref{difftG}, the other terms simply resulting from the independent contributions of the $z_{1s}$ and $\bar{z}_{2s}$ parts.

Using again the It\^o differentials \eqref{dz12}, we can rewrite \eqref{difftG} as
\begin{align}\label{difftGf}
d\widetilde G&=-i\sqrt{\kappa} \left(z_{1s} \partial_1- \bar{z}_{2s} \bar{\partial}_2\right)\widetilde G\,dB_s\\ \nonumber &+\frac{z_{1s}+1}{z_{1s}-1} z_{1s} \partial_1\widetilde G\,ds +\frac{\bar{z}_{2s}+1}{\bar{z}_{2s}-1} \bar{z}_{2s} \bar{\partial}_2\widetilde G\,ds  -\partial_\tau\widetilde G\,ds\\ \nonumber
&-\frac{\kappa}{2} (z_{1s}\partial_1-  \bar{z}_{2s}\,\bar{\partial}_2)^2\widetilde G\, ds,
\end{align} 
where we used the obvious formal identity 
\begin{equation}\label{zdzd}
(z_{1}\partial_1)^2+ (\bar{z}_{2}\,\bar{\partial}_2)^2-2z_{1}\partial_1\bar{z}_{2}\,\bar{\partial}_2=(z_{1}\partial_1-  \bar{z}_{2}\,\bar{\partial}_2)^2.
\end{equation} 

At this stage, comparing the computations \eqref{dMs2bis} and \eqref{difftGf}
above with those  in the one-point martingale study in Section \ref{mart}, it is clear that the PDE obeyed by $\widetilde G=\widetilde G(z_{1s},\bar{z}_{2s},\tau)$ is obtained as two duplicates of Eq. \eqref{PtildeGpqbis},  completed  as in \eqref{zdzd} by the derivative coupling between variables $z_{1s}$, $\bar{z}_{2s}$:
 \begin{equation}\label{PtildeG2pqbis}
\bigg[ F(z_{1s}) + z_{1s}\frac{z_{1s}+1}{z_{1s}-1}\partial_{1} +F(\bar{z}_{2s}) + \bar{z}_{2s}\frac{\bar{z}_{2s}+1}{\bar{z}_{2s}-1}\bar{\partial}_{2}-\partial_{\tau}- \frac{\kappa }{2}(z_{1s}\partial_{1}-\bar{z}_{2s}\bar{\partial}_{2})^2 \bigg]\widetilde{G}=0.
 \end{equation}
The existence of the limit \eqref{inftytildeG2pq} further implies that of
$$\lim_{\tau\to \infty}e^{(p-q)\tau}\partial_\tau\tilde{G}(z_1,\bar{z}_2,\tau)=-(p-q)G(z_1,\bar{z}_2).$$
 Multiplying the PDE \eqref{PtildeG2pqbis} satisfied by $\widetilde{G}$ by $\exp((p-q)\tau)$ and letting $\tau\to+\infty$, then gives the expected PDE for $G(z_1,\bar{z}_2)$.  It can be most compactly written in terms of the ODE \eqref{PGpq} as  
\begin{equation}\label{compactPDE}\left[\mathcal P(\partial_1)+\mathcal P(\bar{\partial}_2) +\kappa z_1\partial_1\bar{z}_2\bar{\partial}_2\right]G(z_1,\bar{z}_2)=0,\end{equation}
and its fully explicit expression is
 \begin{align}\label{eq1}  &\mathcal{P}(D)[G(z_1,\bar{z}_2)]=-\frac{\kappa}{2}(z_1\partial_1-\bar{z}_2\bar{\partial}_2)^2G-\frac{1+z_1}{1-z_1}z_1\partial_1 G-\frac{1+\bar{z}_2}{1-\bar{z}_2}\bar{z}_2\bar{\partial}_2 G\\ \nonumber
  &+\bigg[ -\frac{p}{(1-z_1)^2}-\frac{p}{(1-\bar{z}_2)^2}+\frac{q}{1-z_1}+\frac{q}{1-\bar{z}_2}+2p-2q \bigg]G=0.
 \end{align}
 
\subsection{Moduli one-point function}
 Note that one can take the $z_1=z_2=z$ case in Definition \eqref{gf2} above, thereby obtaining the moduli one-point  function,
 \begin{equation}\label{Gzz}
 G(z,\bar z)= \mathbb{E}\left(|z|^q\frac{\vert f'(z) \vert^p}{\vert f(z)\vert^q}\right).
 \end{equation}
 Because of Eq. \eqref{eq1}, it obeys the corresponding ODE,
  \begin{align}\label{eq1zz}  &\mathcal{P}(D)[G(z,\bar{z})]=-\frac{\kappa}{2}(z\partial-\bar{z}\bar{\partial})^2G-\frac{1+z}{1-z}z\partial G-\frac{1+\bar{z}}{1-\bar{z}}\bar{z}\bar{\partial} G\\ \nonumber
  &+\bigg[ -\frac{p}{(1-z)^2}-\frac{p}{(1-\bar{z})^2}+\frac{q}{1-z}+\frac{q}{1-\bar{z}}+2p-2q \bigg]G=0,
 \end{align}
 which is the generalization to $q\neq 0$ of the Beliaev--Smirnov equation studied in Refs. \cite{DNNZ} and \cite{IL}. 
 \subsection{Integrable case}
  \begin{lemma}
 The space of formal series $F(z_1,\bar{z}_2)=\sum_{k,\ell \in \mathbb N} a_{k,\ell} z_1^k\bar{z}_2^{\ell}$, with complex coefficients and that are solutions of the PDE \eqref{eq1}, is one-dimensional.
 \end{lemma}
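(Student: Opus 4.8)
The plan is to prove the lemma by induction on the total degree $k+\ell$, showing that the PDE \eqref{eq1} forces every coefficient $a_{k,\ell}$ to be a fixed scalar multiple of $a_{0,0}$, so that the solution space is at most one-dimensional (and exactly one-dimensional since Theorem \ref{maintheorem} — or rather the explicit product solution $(1-z_1)^{\gamma}(1-\bar z_2)^{\gamma}$ along $\mathcal R$ together with the general solvability of the recursion — provides a nonzero element). First I would rewrite the operator $\mathcal P(D)$ in \eqref{eq1} in terms of the Euler operators $D_1:=z_1\partial_1$ and $\bar D_2:=\bar z_2\bar\partial_2$, which act diagonally on monomials: $D_1 z_1^k\bar z_2^\ell = k\, z_1^k\bar z_2^\ell$ and $\bar D_2 z_1^k\bar z_2^\ell = \ell\, z_1^k\bar z_2^\ell$. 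The geometric factors $\frac{1+z_i}{1-z_i}=-1+\frac{2}{1-z_i}$, $\frac{1}{(1-z_i)^2}$, and $\frac{1}{1-z_i}$ all have nonnegative power-series expansions in $z_i$; substituting $F=\sum a_{k,\ell}z_1^k\bar z_2^\ell$ and collecting the coefficient of $z_1^k\bar z_2^\ell$ yields a linear recursion.

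The key structural point is to isolate the diagonal term, i.e. the contribution to the coefficient of $z_1^k\bar z_2^\ell$ that involves $a_{k,\ell}$ itself. The terms $-\frac{\kappa}{2}(D_1-\bar D_2)^2$, the $-1$ part of $-\frac{1+z_i}{1-z_i}D_i$, and the constant $2p-2q$, together with the leading ($z_i^0$) parts of $-\frac{p}{(1-z_i)^2}$, $\frac{q}{1-z_i}$, all contribute $a_{k,\ell}$ times a scalar; this scalar is precisely $A(p,q,k)+A(p,q,\ell)+\kappa k\ell - \big[\tfrac{\kappa}{2}(k-\ell)^2 - \kappa k\ell\big]$ reorganized — concretely it equals
\[
\Lambda(k,\ell):=2(p-q)+k+\ell-\frac{\kappa}{2}(k-\ell)^2 .
\]
All remaining terms only involve $a_{k',\ell'}$ with $k'\le k$, $\ell'\le\ell$, and $k'+\ell'<k+\ell$ strictly (because every inhomogeneous factor contributes at least one extra power of $z_1$ or $\bar z_2$, and the purely-diagonal operators already accounted for). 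Hence the recursion reads $\Lambda(k,\ell)\,a_{k,\ell} = (\text{explicit linear combination of lower-degree } a_{k',\ell'})$.

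The main obstacle, and the step requiring genuine care, is the possibility that $\Lambda(k,\ell)=0$ for some $(k,\ell)\neq(0,0)$: there the recursion does not determine $a_{k,\ell}$ from lower terms, which a priori could enlarge the solution space. I would handle this by showing that whenever $\Lambda(k,\ell)=0$ the right-hand side (the lower-degree combination) also vanishes identically on $\mathcal R$ — equivalently, that the one-dimensionality is not spoiled because any such ``resonant'' degree of freedom is killed by a compatibility condition inherited from the fact that, along $\mathcal R$, the operator $\mathcal P(D)$ already annihilates $(1-z_1)^{\gamma}(1-\bar z_2)^{\gamma}$ and the zeros of $\Lambda$ coincide with $k=\gamma$ or $\ell=\gamma$-type loci where $A$ or $C$ vanish. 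Concretely one checks, using $A(p,q,\gamma)=C(p,\gamma)=0$ on $\mathcal R$, that at a resonance the solvability (Fredholm) condition of the recursion is automatically satisfied, so the resonant coefficient is still uniquely fixed (its value being forced by matching to the product ansatz). Away from resonances $\Lambda(k,\ell)\neq 0$ and $a_{k,\ell}$ is uniquely determined; together with the normalization $a_{0,0}=1$ this pins down every coefficient, giving a one-dimensional solution space. I would present the resonance analysis as the crux and relegate the monomial bookkeeping of the inhomogeneous factors to a routine computation.
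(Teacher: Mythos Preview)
Your overall strategy (recursion on total degree, isolate the diagonal coefficient) is exactly the paper's, but your computation of the diagonal scalar is wrong, and that error derails the rest of the argument. If you carefully expand the constant terms of the coefficient functions you listed, the $p,q$ contributions cancel completely: the leading parts of $-\frac{p}{(1-z_i)^2}$ and $\frac{q}{1-z_i}$ give $-2p+2q$, which cancels the additive $2p-2q$; and the constant term of $-\frac{1+z_i}{1-z_i}$ is $-1$, so the Euler pieces contribute $-(k+\ell)$, not $+(k+\ell)$. The correct diagonal coefficient of $a_{k,\ell}$ in $\mathcal P(D)[F]$ is therefore
\[
-\Big[\tfrac{\kappa}{2}(k-\ell)^2+k+\ell\Big],
\]
which is strictly negative for every $(k,\ell)\neq(0,0)$. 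There are \emph{no} resonances, for any $(p,q)$, and the recursion determines every $a_{k,\ell}$ uniquely from $a_{0,0}$. The paper's proof is just this observation.

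Your proposed resonance analysis is thus unnecessary, and in fact could not work as stated. First, the lemma is asserted for the PDE \eqref{eq1} with arbitrary real $p,q$, not just for $(p,q)\in\mathcal R$; invoking the explicit solution on $\mathcal R$ cannot establish uniqueness off $\mathcal R$. Second, your Fredholm-type argument has the logic backwards: if $\Lambda(k,\ell)=0$ and the lower-degree right-hand side also vanishes, then $a_{k,\ell}$ becomes a \emph{free} parameter, enlarging the solution space rather than fixing it. So even granting your (incorrect) $\Lambda$, the sketch would not yield one-dimensionality.
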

 \begin{proof}
 We assume that $F$ is a solution to (\ref{eq1}) with $F(0,0)=0$; it suffices to prove that, necessarily, $F=0$. We argue by contradiction: If not, consider the minimal (necessarily non constant) term $a_{k,l}z^k \bar{z}^{\ell}$ in the series of $F$, with $a_{k,\ell}\ne 0$ and $k+\ell$ minimal (and non vanishing). Then $\mathcal P(D)[F]$ (\ref{eq1}) will have a minimal  term, equal to 
 $-a_{k,\ell}\left[ \frac{\kappa}{2}(k-\ell)^2+k+\ell \right]z_1^k\bar{z}_2^{\ell},$ which is non-zero, 
 contradicting the fact that $\mathcal P(D)[F]$ vanishes.
 \end{proof}
 
  As a second step, following Ref. \cite{DNNZ}, 
let us  consider the action of the operator $\mathcal P(D)$ of (\ref{eq1}) on a function of the factorized form $\varphi(z_1)\varphi(\bar{z}_2)P(z_1,\bar{z}_2)$, which we write, in a shorthand notation, as $\varphi\bar{\varphi} P$.  By Leibniz's rule, it is given by
   \begin{align*}
 \mathcal P(D)[\varphi\bar{\varphi}P] = &-\frac{\kappa}{2}\varphi\bar{\varphi}(z_1\partial_1-\bar{z_2}\bar{\partial_2})^2P-\kappa(z_1\partial_1-\bar{z_2}\bar{\partial_2})(\varphi\bar{\varphi})(z_1\partial_1 - \bar{z_2}\bar{\partial_2})P\\
 & +\kappa(z_1\partial_1\varphi)(\bar{z_2}\bar{\partial_2}\bar{\varphi})P - \varphi\bar{\varphi}\frac{1+z_1}{1-z_1}z_1\partial_1 P - \varphi\bar{\varphi}\frac{1+\bar{z_2}}{1-\bar{z_2}}\bar{z_2}\bar{\partial_2} P\nonumber\\
 &- \bigg[ \frac{\kappa}{2}\bar{\varphi}(z_1\partial_1)^2\varphi +\frac{\kappa}{2}\varphi(\bar{z_2}\bar{\partial_2})^2\bar{\varphi} + \bar{\varphi}\frac{1+z_1}{1-z_1}z_1\partial_1\varphi + \varphi\frac{1+\bar{z_2}}{1-\bar{z_2}}\bar{z_2}\bar{\partial_2}\bar{\varphi} \bigg]P\nonumber\\
 &+\bigg[ -\frac{p}{(1-z_1)^2}-\frac{p}{(1-\bar{z_2})^2}+\frac{q}{1-z_1}+\frac{q}{1-\bar{z_2}}+2p-2q \bigg]\varphi\bar{\varphi}P.\nonumber
 \end{align*}   
  Note that  the operator $z_1\partial_1-\bar{z_2}\bar{\partial_2}$ is antisymmetric with respect to $z_1,\bar{z}_2$; therefore, if we choose a symmetric function, $P(z_1,\bar{z}_2)=P(z_1\bar{z}_2)$, the first line of $ \mathcal P(D)[\varphi\bar{\varphi}P]$ above identically vanishes.
  
 One then looks for  solutions to \eqref{eq1} of the particular form,
 $$G(z_1,\bar{z}_2)=\varphi_\alpha(z_1)\varphi_\alpha(\bar{z}_2)P(z_1\bar{z}_2),$$
 where, as before, $\varphi_\alpha(z)=(1-z)^\alpha$. 
 The action of the differential operator then takes the simple form, 
 \begin{align*}
 \mathcal {P}(D)[\varphi_\alpha\bar{\varphi}_\alpha P]=&z_1\bar{z}_2 \varphi_{\alpha-1}\bar{\varphi}_{\alpha-1}\left(\kappa\alpha^2 P
  - 2(1-z_1\bar{z}_2)P' \right)\\ &+ \mathcal P(\partial_1)[\varphi_\alpha]\bar{\varphi}_\alpha P+\mathcal P(\partial_2)[\bar{\varphi}_\alpha]\varphi_\alpha P,
 \end{align*}
 where $P'$ is the derivative of $P$ with respect to $z_1\bar{z}_2$, and  $\mathcal P(\partial)$ is the so-called  boundary operator  \eqref{PGpq}   \cite{DNNZ}. 
 
 The ODE, $\kappa\alpha^2 P(x) - 2(1-x)P'(x)=0$ with $x=z_1\bar{z}_2$ and $P(0)=1$, has for  solution 
$P(z_1\bar{z}_2)=(1-z_1\bar{z}_2)^{-\kappa\alpha^2/2}$. 
   It is then sufficient to pick for $\alpha$ the value $\gamma=\gamma^{\pm}_0(p)$ \eqref{p,q} such that $\mathcal{P}(\partial)[\varphi_\gamma]=0$, as obtained  in the proof of Theorem \ref{maintheorem1/2}, to get a solution of the PDE, $ \mathcal {P}(D)[\varphi_\gamma\bar{\varphi}_\gamma P]=0$ \eqref{eq1}. By uniqueness of the solution with $G(0,0)=1$, it gives the explicit form of the SLE two-point function,  
 $$G(z_1,\bar{z}_2)=\varphi_\gamma(z_1)\varphi_\gamma(\bar{z}_2)(1-z_1\bar{z}_2)^{-\kappa\gamma^2/2}.$$
 We thus get: 
  \begin{theorem}\label{maintheorem}
 Let $f(z)=f_0(z)$ be the interior whole-plane $\SLE_\kappa$ map in the setting of  Proposition \eqref{prop1}; then, for $(p,q)$ belonging to the parabola $\mathcal R$ defined in Theorem \ref{maintheorem1/2} by Eqs. \eqref{para} or \eqref{paraRCart} or  \eqref{p,q}, and for any pair $(z_1,z_2)\in \mathbb D\times \mathbb D$,
   \begin{align*}\mathbb{E}\bigg(z_1^{\frac{q}{2}} \frac{(f'(z_1))^\frac{p}{2}}{(f(z_1))^\frac{q}{2}}  \overline{\left[z_2^{\frac{q}{2}}\frac{(f'(z_2))^\frac{p}{2}}{(f(z_2))^\frac{q}{2}}\right]}\bigg) &= \frac{(1-z_1)^{\gamma}(1-\bar{z}_2)^{\gamma}}{(1-z_1\bar{z}_2)^\beta},\,\,\,\,\,\,\beta=\frac{\kappa}{2}\gamma^2.
     \end{align*}
 \end{theorem}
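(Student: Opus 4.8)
The plan is to run the two-point version of the Beliaev--Smirnov martingale computation set up in Section~\ref{2pointGF}. First I would work with the reversed radial $\SLE_\kappa$ process $\tilde f_t$ of Definition~\ref{reverse} and with the auxiliary time-dependent two-point function $\widetilde G(z_1,\bar z_2,t)$ of \eqref{tildeG2pq}, whose rescaled limit $e^{(p-q)t}\widetilde G(z_1,\bar z_2,t)\to G(z_1,\bar z_2)$ exists by Lemma~\ref{lemmefond} applied to each of the two factors. Forming the martingale $\mathcal M_s=\mathbb E\big(X_t(z_1)\overline{X_t(z_2)}\,\vert\,\mathcal F_s\big)$ and using the SLE Markov property to factor it as $X_s(z_1)\overline{X_s(z_2)}\,\widetilde G(z_{1s},\bar z_{2s},\tau)$ with $\tau=t-s$, I would compute the It\^o differential of $\mathcal M_s$ from \eqref{diffXbis} for $dX_s(z_1)$, $d\overline{X_s(z_2)}$ and from It\^o calculus for $d\widetilde G$. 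The key point is that the single driving Brownian motion couples the two variables $z_{1s},\bar z_{2s}$ only through the diffusion part of $d\widetilde G$, which --- owing to the opposite signs of the $dB_s$-coefficients in \eqref{dz12} --- assembles via the identity \eqref{zdzd} into $-\tfrac{\kappa}{2}(z_{1s}\partial_1-\bar z_{2s}\bar\partial_2)^2$. Requiring the $ds$-drift of $\mathcal M_s$ to vanish then yields \eqref{PtildeG2pqbis} for $\widetilde G$, and multiplying by $e^{(p-q)\tau}$ and letting $\tau\to+\infty$ produces the PDE \eqref{eq1} for $G$, in the compact form $\big[\mathcal P(\partial_1)+\mathcal P(\bar\partial_2)+\kappa z_1\partial_1\bar z_2\bar\partial_2\big]G=0$.

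Next I would invoke uniqueness. Since the expectation defining $G$ is analytic in $z_1$ and in $\bar z_2$ near the origin, with $G(0,0)=1$ because $f(z)/z\to1$ and $f'(z)\to1$ as $z\to0$ (so the normalized integrand tends to $1$), I may expand $G$ as a formal double power series. The formal-series lemma above then applies: the lowest-degree nonconstant term $a_{k,\ell}z_1^k\bar z_2^\ell$ of any solution of \eqref{eq1} would contribute the nonvanishing term $-a_{k,\ell}\big[\tfrac{\kappa}{2}(k-\ell)^2+k+\ell\big]z_1^k\bar z_2^\ell$ to $\mathcal P(D)[G]$, a contradiction; hence the solution normalized by $G(0,0)=1$ is unique, and it suffices to exhibit one.

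I would then try the factorized ansatz $G(z_1,\bar z_2)=\varphi_\alpha(z_1)\,\varphi_\alpha(\bar z_2)\,P(z_1\bar z_2)$ with $\varphi_\alpha(z)=(1-z)^\alpha$ and $P$ a function of the single variable $x=z_1\bar z_2$, normalized by $P(0)=1$. Because the first-order operator $z_1\partial_1-\bar z_2\bar\partial_2$ is antisymmetric under $z_1\leftrightarrow\bar z_2$ while this ansatz is symmetric, the entire second-order differential block annihilates it, and the Leibniz expansion of $\mathcal P(D)[\varphi_\alpha\bar\varphi_\alpha P]$ (writing $\bar\varphi_\alpha:=\varphi_\alpha(\bar z_2)$) collapses to $z_1\bar z_2\,\varphi_{\alpha-1}\bar\varphi_{\alpha-1}\big(\kappa\alpha^2 P-2(1-x)P'\big)+\mathcal P(\partial_1)[\varphi_\alpha]\,\bar\varphi_\alpha P+\mathcal P(\bar\partial_2)[\bar\varphi_\alpha]\,\varphi_\alpha P$, where $\mathcal P(\partial)$ is the one-point boundary operator \eqref{PGpq}. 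Choosing $\alpha=\gamma=\gamma_0^{\pm}(p)$, i.e.\ the root found in the proof of Theorem~\ref{maintheorem1/2} for which $\mathcal P(\partial)[\varphi_\gamma]=0$ (equivalently, for which $(p,q)\in\mathcal R$), kills the last two terms, and the residual ODE $\kappa\gamma^2 P(x)-2(1-x)P'(x)=0$, $P(0)=1$, integrates to $P(x)=(1-x)^{-\kappa\gamma^2/2}$. Hence $G(z_1,\bar z_2)=(1-z_1)^\gamma(1-\bar z_2)^\gamma(1-z_1\bar z_2)^{-\kappa\gamma^2/2}$ solves \eqref{eq1} with $G(0,0)=1$; by the uniqueness step it equals the asserted expectation, with $\beta=\tfrac{\kappa}{2}\gamma^2$.

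The main obstacle I anticipate is the careful derivation of \eqref{eq1}: one must check that the $\mathcal F_s$-conditional expectation really factorizes as $X_s(z_1)\overline{X_s(z_2)}\,\widetilde G(z_{1s},\bar z_{2s},\tau)$, that the quadratic variation of the single Brownian motion driving both $z_{1s}$ and $\bar z_{2s}$ produces precisely the cross term $+\kappa z_{1s}\bar z_{2s}\partial_1\bar\partial_2$, and that the remaining drift reassembles into two faithful copies of the one-point operator of Section~\ref{mart}. The accompanying analytic points --- the local-versus-true martingale issue, the existence of the rescaled limit $e^{(p-q)t}\widetilde G\to G$, and the legitimacy of passing the $\tau\to+\infty$ limit through the PDE --- are the remaining technicalities, to be handled exactly as in Refs.~\cite{BS,DNNZ}.
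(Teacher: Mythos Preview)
Your proposal is correct and follows essentially the same route as the paper: derive the two-point PDE \eqref{eq1} via the martingale/Markov computation for $\tilde f_t$, establish uniqueness of formal power-series solutions with $G(0,0)=1$, then verify the factorized ansatz $\varphi_\gamma(z_1)\varphi_\gamma(\bar z_2)(1-z_1\bar z_2)^{-\kappa\gamma^2/2}$ using the antisymmetry of $z_1\partial_1-\bar z_2\bar\partial_2$ and the one-point result $\mathcal P(\partial)[\varphi_\gamma]=0$. One small wording point: it is not quite that the second-order block $(z_1\partial_1-\bar z_2\bar\partial_2)^2$ annihilates the full product $\varphi_\alpha\bar\varphi_\alpha P$, but rather that in its Leibniz expansion every term containing a factor $(z_1\partial_1-\bar z_2\bar\partial_2)P$ vanishes, while the pure-$\varphi$ contributions are precisely what feed into the $\kappa\alpha^2$ cross term and the two copies of $\mathcal P(\partial)$; your stated collapsed form is nonetheless the right one.
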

 
 \begin{corollary}
 In the same setting as in Theorem \ref{maintheorem}, we have for $z\in \mathbb D$,
    \begin{align*}
     \mathbb{E}\bigg(\vert z \vert^q{\frac{\vert f'(z) \vert^p}{\vert f(z)\vert^q}}\bigg) &= \frac{(1-z)^{\gamma}(1-\bar{z})^{\gamma}}{(1-z\bar{z})^\beta},\,\,\,\,\,\,\beta=\frac{\kappa}{2}\gamma^2,
     \end{align*} 
     for
   \begin{equation*}
   \begin{split}
      \gamma&=\gamma_0^{\pm}(p):= \frac{1}{2\kappa}\left( 4+\kappa \pm \sqrt{(4+\kappa)^2-8\kappa p} \right),\,\,\, p \leq \frac{(4+\kappa)^2}{8\kappa},
      \\ q&=2p-\left(1+\frac{\kappa}{2}\right)\gamma_0^{\pm}(p).
     \end{split}
\end{equation*}    
      \end{corollary}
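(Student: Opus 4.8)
\emph{Proof proposal.} The plan is to obtain the corollary as the restriction of Theorem~\ref{maintheorem} to coinciding points $z_1=z_2=z$. The first point to record is that the two-point function $G(z_1,\bar z_2)$ of Eq.~\eqref{gf2} is finite and jointly (real-)analytic on all of $\mathbb D\times\mathbb D$: it is the sum of a normally convergent double power series in $(z_1,\bar z_2)$, whose uniqueness as a solution of \eqref{eq1} was established just above, so there is no issue in evaluating the closed form of Theorem~\ref{maintheorem} on the diagonal.

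Next I would identify the diagonal value of the integrand with the desired modulus. Write
\begin{equation*}
H(w):=w^{q/2}\,\frac{(f'(w))^{p/2}}{(f(w))^{q/2}}=(f'(w))^{p/2}\,\left(\frac{f(w)}{w}\right)^{-q/2}.
\end{equation*}
Since $f$ is univalent on the simply connected domain $\mathbb D$ with $f(0)=0$, both $f'$ and $w\mapsto f(w)/w$ are non-vanishing holomorphic functions on $\mathbb D$ (the latter tending to $1$ at the origin), so $H$ is a well-defined holomorphic function on $\mathbb D$ once the principal branches of the fractional powers are fixed — exactly the branch convention already in force in Sections~\ref{sec3}--\ref{sec4} — with $H(0)=1$. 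Using $|w^a|=|w|^a$ for $w\neq 0$ and $a$ real, one has
\begin{equation*}
G(z,\bar z)=\mathbb E\big(H(z)\,\overline{H(z)}\big)=\mathbb E\big(|H(z)|^2\big)=\mathbb E\Big(|f'(z)|^p\,|f(z)/z|^{-q}\Big)=\mathbb E\Big(|z|^q\,\frac{|f'(z)|^p}{|f(z)|^q}\Big),
\end{equation*}
which is precisely the moduli one-point function \eqref{Gzz}.

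On the right-hand side of Theorem~\ref{maintheorem}, substituting $z_1=z_2=z$ gives $(1-z)^{\gamma}(1-\bar z)^{\gamma}/(1-z\bar z)^{\beta}$; since $z\bar z=|z|^2\in[0,1)$ the denominator is the positive real number $(1-|z|^2)^{\beta}$ and nothing degenerates. Equating the two sides, and recalling from Theorem~\ref{maintheorem1/2} and the subsequent Remark that the admissible parameters along $\mathcal R$ are exactly $\gamma=\gamma_0^{\pm}(p)$ with $p\le (4+\kappa)^2/(8\kappa)$ and $q=2p-(1+\tfrac{\kappa}{2})\gamma_0^{\pm}(p)$, together with $\beta=\tfrac{\kappa}{2}\gamma^2$, yields the stated identity. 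There is no genuine obstacle here: the only delicate point is the consistent single-valued choice of the fractional powers, so that the product of $H$ with its complex conjugate really does produce $|f'|^p/|f|^q$; and this is the same convention already used throughout the derivations of Theorems~\ref{maintheorem1/2} and \ref{maintheorem}, so no new work is required.
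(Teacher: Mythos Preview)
Your proposal is correct and follows exactly the paper's approach: the corollary is obtained immediately from Theorem~\ref{maintheorem} by specializing to the diagonal $z_1=z_2=z$, using the identification $G(z,\bar z)=\mathbb{E}\big(|z|^q\,|f'(z)|^p/|f(z)|^q\big)$ already recorded in Eq.~\eqref{Gzz}. The paper treats this as self-evident and gives no separate argument; your added remarks on the well-definedness of the fractional-power branches are sound but go beyond what the paper spells out.
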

 Let us stress some particular cases of interest. First, the $p=0$ case gives some integral means of $f$.  
   \begin{corollary} The interior whole-plane $\SLE_\kappa$ map has the integrable moments, $$\mathbb{E} \left( \left[ \frac{f(z_1)}{z_1}\right] ^{\frac{(2+\kappa)(4+\kappa)}{4\kappa}} \left[\frac{\overline{f(z_2)}}{\bar{z}_2}\right]^{\frac{(2+\kappa)(4+\kappa)}{4\kappa}}\right) = \frac{(1-z_1)^{\frac{4+\kappa}{\kappa}}(1-\bar{z}_2)^{\frac{4+\kappa}{\kappa}}}{(1-z_1\bar{z}_2)^{\frac{(4+\kappa)^2}{2\kappa}}},$$
   $$\mathbb{E} \left( \left\vert \frac{f(z)}{z} \right\vert^{\frac{(2+\kappa)(4+\kappa)}{2\kappa}} \right) = \frac{(1-z)^{\frac{4+\kappa}{\kappa}}(1-\bar{z})^{\frac{4+\kappa}{\kappa}}}{(1-z\bar{z})^{\frac{(4+\kappa)^2}{2\kappa}}}.$$
  \end{corollary}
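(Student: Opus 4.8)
The plan is to obtain this statement as the $p=0$ specialization of Theorem~\ref{maintheorem} (equivalently, of the Corollary just above), since the point $p=0$ lies on the parabola $\mathcal R$. First I would set $p=0$ in the parametric equations~\eqref{para} of $\mathcal R$: the first relation reduces to $\gamma\left(-\frac{\kappa}{2}\gamma+2+\frac{\kappa}{2}\right)=0$, with roots $\gamma=0$ (the uninteresting point $(p,q)=(0,0)\in\mathcal R$) and the nontrivial branch $\gamma=\gamma_0^{+}(0)=\frac{4+\kappa}{\kappa}$. Substituting this value into the second relation $2p-q=\left(1+\frac{\kappa}{2}\right)\gamma$ would give $q=-\left(1+\frac{\kappa}{2}\right)\frac{4+\kappa}{\kappa}=-\frac{(2+\kappa)(4+\kappa)}{2\kappa}$, and hence the exponent $\beta=\frac{\kappa}{2}\gamma^2=\frac{(4+\kappa)^2}{2\kappa}$ appearing in Theorem~\ref{maintheorem}; the admissibility constraint $p\le\frac{(4+\kappa)^2}{8\kappa}$ is trivially met.

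Next I would feed these values into the closed form of Theorem~\ref{maintheorem}. Since $p=0$, the derivative factors $(f'(z_i))^{p/2}$ drop out, and each surviving factor $z_i^{q/2}(f(z_i))^{-q/2}$ becomes $\left(f(z_i)/z_i\right)^{-q/2}$ with $-q/2=\frac{(2+\kappa)(4+\kappa)}{4\kappa}$. I would note that $z\mapsto f(z)/z$ extends to a zero-free holomorphic function on $\mathbb D$ with value $1$ at the origin (because $f$ is univalent with $f(0)=0$, $f'(0)=1$), so the relevant real powers are unambiguously defined by the principal branch and commute with complex conjugation; this produces exactly the left-hand side of the first displayed identity, while the right-hand side of Theorem~\ref{maintheorem} with the above $\gamma,\beta$ produces its right-hand side. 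For the moduli version I would then set $z_1=z_2=z$, precisely as in the Corollary preceding this statement, and use $|w|^{2a}=w^{a}\overline{w}^{a}$ with $a=\frac{(2+\kappa)(4+\kappa)}{4\kappa}$ to convert the left-hand side into $\mathbb E\big(|f(z)/z|^{(2+\kappa)(4+\kappa)/(2\kappa)}\big)$ and the right-hand side into the claimed coinciding-point expression.

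I do not expect a genuine obstacle: the result is a direct substitution into the already-proven closed form of Theorem~\ref{maintheorem}. The only two points that deserve a sentence of care are the selection of the nontrivial root $\gamma=(4+\kappa)/\kappa$ (rather than $\gamma=0$) at $p=0$, and the elementary remark that $f(z)/z$ never vanishes on $\mathbb D$, which is what makes the powers appearing on the left-hand side well defined in the first place.
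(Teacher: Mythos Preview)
Your proposal is correct and follows exactly the paper's own route: the corollary is stated immediately after Theorem~\ref{maintheorem} with the remark that ``the $p=0$ case gives some integral means of $f$,'' and your computation of $\gamma=(4+\kappa)/\kappa$, $q=-\tfrac{(2+\kappa)(4+\kappa)}{2\kappa}$, and $\beta=\tfrac{(4+\kappa)^2}{2\kappa}$ is precisely the substitution the paper leaves implicit. Your added care about the well-definedness of the power $(f(z)/z)^{-q/2}$ is a nice touch the paper does not spell out.
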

    Second, taking $p=q$ yields the logarithmic integral means we started with:
  \begin{corollary}\label{theorem p=q}
 The interior whole-plane $\SLE_\kappa$ map has the 
   integrable logarithmic derivative two-point function, 
    $$\mathbb{E} \bigg({\left[ z_1\frac{f'(z_1)}{f(z_1)} \right]}^{\frac{2+\kappa}{2\kappa}}{\left[\bar z_2\frac{\overline{f'(z_2)}}{\overline{f(z_2)}} \right]}^{\frac{2+\kappa}{2\kappa}}\bigg) =\frac{(1-z_1)^{\frac{2}{\kappa}}(1-\bar{z_2})^{\frac{2}{\kappa}}}{(1-z_1\bar{z_2})^{\frac{2}{\kappa}}},$$   
  $$\mathbb{E} \bigg( {\left| z\frac{f'(z)}{f(z)} \right|}^{\frac{2+\kappa}{\kappa}} \bigg) =\frac{(1-z)^{\frac{2}{\kappa}}(1-\bar{z})^{\frac{2}{\kappa}}}{(1-z\bar{z})^{\frac{2}{\kappa}}}.$$
\end{corollary}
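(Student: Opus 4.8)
The plan is to derive Corollary~\ref{theorem p=q} as the special case $p=q$ of Theorem~\ref{maintheorem}, the only point requiring care being the verification that the line $p=q$ actually meets the parabola $\mathcal R$ and the identification of the corresponding value of $\gamma$. First I would substitute $p=q$ into the Cartesian equation \eqref{paraRCart}: setting $u:=(2p-q)/(2+\kappa)=p/(2+\kappa)$, the relation $2\kappa u^2-(4+\kappa)u+p=0$ becomes $2\kappa p^2/(2+\kappa)^2-(4+\kappa)p/(2+\kappa)+p=0$, i.e.\ after dividing by $p$ (the root $p=0$ being the origin), $2\kappa p/(2+\kappa)^2=(4+\kappa)/(2+\kappa)-1=2/(2+\kappa)$, hence $p=q=(2+\kappa)/\kappa$. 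Equivalently, from the parametrization \eqref{para}, $p=q$ forces $-\tfrac{\kappa}{2}\gamma^2+(2+\tfrac{\kappa}{2})\gamma=(1+\tfrac{\kappa}{2})\gamma$, which gives $\gamma=2/\kappa$ (discarding $\gamma=0$).

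Next I would feed $\gamma=2/\kappa$ into the conclusion of Theorem~\ref{maintheorem}. The exponent on the right-hand side is $\beta=\tfrac{\kappa}{2}\gamma^2=\tfrac{\kappa}{2}\cdot 4/\kappa^2=2/\kappa$, and the boundary exponents are $\gamma=2/\kappa$, so the two-point function on $\mathcal R$ at $p=q=(2+\kappa)/\kappa$ reads
\[
\mathbb{E}\!\left(z_1^{q/2}\frac{(f'(z_1))^{p/2}}{(f(z_1))^{q/2}}\,\overline{\left[z_2^{q/2}\frac{(f'(z_2))^{p/2}}{(f(z_2))^{q/2}}\right]}\right)=\frac{(1-z_1)^{2/\kappa}(1-\bar z_2)^{2/\kappa}}{(1-z_1\bar z_2)^{2/\kappa}}.
\]
With $p=q=(2+\kappa)/\kappa$ one has $p/2=q/2=(2+\kappa)/(2\kappa)$, and since $q/2=p/2$ the quantity $z_1^{q/2}(f'(z_1))^{p/2}/(f(z_1))^{q/2}$ is exactly $\big(z_1 f'(z_1)/f(z_1)\big)^{(2+\kappa)/(2\kappa)}$; substituting this rewriting into both factors yields the first displayed identity of the corollary verbatim.

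Finally, the second identity is the coincidence-point specialization $z_1=z_2=z$, for which $z_1^{q/2}(f'(z_1))^{p/2}/(f(z_1))^{q/2}$ times its conjugate becomes $|z f'(z)/f(z)|^{2\cdot(2+\kappa)/(2\kappa)}=|z f'(z)/f(z)|^{(2+\kappa)/\kappa}$, while the right-hand side collapses to $(1-z)^{2/\kappa}(1-\bar z)^{2/\kappa}/(1-z\bar z)^{2/\kappa}$; this is exactly the statement. I do not anticipate a genuine obstacle here: the only thing to watch is the innocuous-looking branch issue, namely that the holomorphic root $z\mapsto\big(zf'(z)/f(z)\big)^{(2+\kappa)/(2\kappa)}$ used in the definition of $G(z_1,\bar z_2)$ is the one normalized to equal $1$ at the origin (consistent with $G(0)=1$ in \eqref{Gpq} and with $zf'(z)/f(z)\to1$ as $z\to0$), so that raising to the power $(2+\kappa)/(2\kappa)$ and then taking moduli is unambiguous; once this normalization is in place, the corollary is a pure substitution into Theorem~\ref{maintheorem}.
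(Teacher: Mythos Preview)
Your proof is correct and follows exactly the paper's approach: the corollary is stated immediately after Theorem~\ref{maintheorem} as the specialization $p=q$, and your explicit identification of $\gamma=2/\kappa$ (hence $p=q=(2+\kappa)/\kappa$, $\beta=2/\kappa$) together with the coincidence-point reduction $z_1=z_2=z$ is precisely the substitution the paper has in mind. Your remark on the branch normalization at the origin is a welcome clarification that the paper leaves implicit.
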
  
 Theorem \ref{theorem kappa2} describes the $\kappa = 2$ case of the latter result.

  \subsection{Generalization to processes with $m$-fold symmetry}\label{mfold}
  
The moments,
$ \mathbb {E}(\vert (f^{[m]})'(z)\vert^p)$ (for $m \in \mathbb N\setminus \{0\}$), 
as well as their associated integral means spectra were studied in Ref. \cite{DNNZ}. Using It\^o calculus, a PDE satisfied by these moments was derived for each value of $m$. 
The introduction of mixed $(p,q)$ moments allows us to circumvent these calculations in a unified approach for $m\in \mathbb Z\setminus \{0\}$. To see this, notice that
$$(f^{[m]})'(z)=z^{m-1}f'(z^m)f(z^m)^{\frac{1}{m}-1}.$$
As a consequence,
$$ \frac{\vert z\vert^q\vert (f^{[m]})'(z)\vert^p}{\vert f^{[m]}(z)\vert^q}=\vert z\vert^{q+p(m-1)}\frac{\vert f'(z^m)\vert^p}{\vert f(z^m)\vert^{p+\frac{q-p}{m}}},$$
so that we identically have
\begin{align}\label{midentity} &\mathbb {E}\left(\vert z\vert^q \frac{\vert (f^{[m]})'(z)\vert^p}{\vert f^{[m]}(z)\vert^q}\right)=G(z^m;p,q_m),\\ \label{qm}
&q_m=q_m(p,q):=p+\frac{q-p}{m}, \end{align}
with the notation, \begin{align}\label{Gzpq}
& G(z;p,q):=G(z,\bar z)=\mathbb {E}\left(\vert z\vert^q  \frac{\vert f'(z)\vert^p}{\vert f(z)\vert^q}\right),
\end{align}
where we have made explicit the dependence on the $(p,q)$ parameters of the SLE moduli one-point function \eqref{Gzz} introduced  in Section \ref{sec3}. From Theorem \ref{maintheorem}, we immediately get the following.
\begin{theorem}\label{mcase} Let $f^{[m]}$ be the $m$-fold whole-plane $\SLE_\kappa$ map, $m \in \mathbb Z\setminus \{0\}$, with $z\in \mathbb D$ for $m>0$ and $z\in \mathbb C\setminus \overline{\mathbb D}$ for $m<0$. Then, 
$$\mathbb {E}\left(\vert z\vert^q \frac{\vert (f^{[m]})'(z)\vert^p}{\vert f^{[m]}(z)\vert^q}\right)=\frac{(1-z^m)^\alpha (1-\bar{z}^m)^\alpha}{(1-(z\bar{z})^m)^{\frac{\kappa}{2}\alpha^2}},$$
for $(p,q)$ belonging to the $m$-dependent parabola $\mathcal R^{[m]}$, given in parametric form by
\begin{align}  \label{param}p=\left(2+\frac{\kappa}{2}\right)\alpha-\frac{\kappa}{2}\alpha^2,\,\,\,
q=\left(m+2+\frac{\kappa}{2}\right)\alpha-\frac{\kappa}{2}(m+1)\alpha^2, \,\,\,\alpha\in \mathbb R.
\end{align}
In Cartesian coordinates, an equivalent statement is 
$$\alpha = \frac{(m+1)p - q}{m\left(1+\frac{\kappa}{2}\right)},$$
with
\begin{align*}
q = (m+1)p  - m\frac{2+\kappa}{4\kappa} \left(4+\kappa \pm \sqrt{(4+\kappa)^2-8\kappa p}\right),\,\,\,
p\le \frac{(4+\kappa)^2}{8\kappa},
\end{align*}
or,
\begin{align*}p &= \frac{q}{m+1} + \frac{m}{(m+1)^2}\frac{2+\kappa}{4\kappa}\left(2m+4+\kappa \pm \sqrt{(2m+4 +\kappa)^2 - 8(m+1)\kappa q}\right),\\
q &\le \frac{(2m+4+\kappa)^2}{8(m+1)\kappa}.
\end{align*}
\end{theorem}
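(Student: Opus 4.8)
The plan is to reduce everything to Theorem \ref{maintheorem} via the algebraic identity relating $f^{[m]}$ to $f$, and then to translate the parabola $\mathcal R$ in the $(p,q_m)$-variables into the claimed parabola $\mathcal R^{[m]}$ in the $(p,q)$-variables. First I would record the chain-rule identity $(f^{[m]})'(z)=z^{m-1}f'(z^m)f(z^m)^{1/m-1}$, which follows at once by differentiating the defining relation $f^{[m]}(z)^m=f(z^m)$ and using that $(f^{[m]})'(0)=1$ fixes the branch. Taking moduli to the appropriate powers gives the displayed identity $|z|^q|(f^{[m]})'(z)|^p/|f^{[m]}(z)|^q=|z|^{q+p(m-1)}|f'(z^m)|^p/|f(z^m)|^{p+(q-p)/m}$, hence $\mathbb E(\cdots)=G(z^m;p,q_m)$ with $q_m=p+(q-p)/m$ as in \eqref{qm}. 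For $m>0$ this is literally a substitution $z\mapsto z^m$ (which maps $\mathbb D$ to $\mathbb D$), and for $m<0$ one uses the inversion convention $f^{[m]}(z)=1/f^{[-m]}(1/z)$ together with the same identity for $f^{[-m]}$; I would note that the inversion $z\mapsto 1/z$ sends $\mathbb C\setminus\overline{\mathbb D}$ to $\mathbb D\setminus\{0\}$, so the substitution $z^m=(1/z)^{-m}$ again lands in $\mathbb D$, and the formula $(1-z^m)^\alpha(1-\bar z^m)^\alpha/(1-(z\bar z)^m)^{\kappa\alpha^2/2}$ is checked to be invariant in form under this change of variables.

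Next I would invoke Theorem \ref{maintheorem} (its coinciding-points Corollary): $G(w;p,q_m)=(1-w)^\gamma(1-\bar w)^\gamma/(1-w\bar w)^{\kappa\gamma^2/2}$ precisely when $(p,q_m)\in\mathcal R$, i.e. when there is $\gamma\in\mathbb R$ with $p=-\tfrac\kappa2\gamma^2+(2+\tfrac\kappa2)\gamma$ and $2p-q_m=(1+\tfrac\kappa2)\gamma$. Substituting $w=z^m$ and renaming $\gamma=\alpha$ gives the first displayed equation of the theorem with $\beta=\tfrac\kappa2\alpha^2$. The remaining task is bookkeeping: express the condition $(p,q_m)\in\mathcal R$ back in terms of $(p,q)$. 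From $q_m=p+(q-p)/m$ one solves $q=(m-1)p+m q_m = (m-1)p + m(2p - (1+\tfrac\kappa2)\alpha)$, which after using $p=(2+\tfrac\kappa2)\alpha-\tfrac\kappa2\alpha^2$ simplifies to $q=(m+2+\tfrac\kappa2)\alpha-\tfrac\kappa2(m+1)\alpha^2$; together with the unchanged first equation $p=(2+\tfrac\kappa2)\alpha-\tfrac\kappa2\alpha^2$ this is exactly the parametric form \eqref{param} of $\mathcal R^{[m]}$. The Cartesian forms then follow by eliminating $\alpha$: from $p=(2+\tfrac\kappa2)\alpha-\tfrac\kappa2\alpha^2$ one gets $\alpha=\tfrac1\kappa\big(4+\kappa\pm\sqrt{(4+\kappa)^2-8\kappa p}\big)$ (note $2+\tfrac\kappa2=\tfrac{4+\kappa}2$), whence $\alpha=\big((m+1)p-q\big)\big/\big(m(1+\tfrac\kappa2)\big)$ by solving the linear combination $(m+1)p-q=m(1+\tfrac\kappa2)\alpha$, and substituting back yields the stated expressions for $q$ in terms of $p$, and dually for $p$ in terms of $q$ (the bound $q\le (2m+4+\kappa)^2/(8(m+1)\kappa)$ being the discriminant condition of the quadratic in $\alpha$ defining $q$).

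The genuinely delicate point is not any single computation but the case $m<0$: one must check that the inversion conjugation is compatible with the branch choices (the $p/2$ and $q/2$ powers) and with the convergence factor $e^{(p-q)t}$ that defines the whole-plane limit, so that Theorem \ref{maintheorem}, proved for the interior map, does transfer to the exterior map after inversion. I would handle this by remarking — as the text already does — that $f^{[-1]}(1/z)$ is literally the Beliaev--Smirnov exterior map, so the $m=-1$ identity is a definition-level statement, and for general $m<0$ the identity $f^{[m]}=1/f^{[-m]}(1/\cdot)$ reduces the moduli moment of $f^{[m]}$ to that of $f^{[-m]}$ with $m>0$, which is already covered; the only thing to verify is that the parabola formulas \eqref{param} are invariant under $m\mapsto -m$ combined with the corresponding $(p,q)$ transformation, which is a short algebraic check. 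Everything else is routine substitution and quadratic-formula manipulation, so the proof is essentially a translation of Theorem \ref{maintheorem} through a change of variables.
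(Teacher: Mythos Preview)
Your proposal is correct and follows essentially the same route as the paper: reduce to Theorem~\ref{maintheorem} via the chain-rule identity and the substitution $q_m=p+(q-p)/m$, then translate the parabola $\mathcal R$ into $\mathcal R^{[m]}$. There is a harmless sign slip in the intermediate step (you should have $q=(1-m)p+mq_m$, not $(m-1)p+mq_m$), but your final parametric and Cartesian expressions are correct, and your handling of the $m<0$ case is in fact more explicit than the paper's.
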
 

As for logarithmic coefficients, first observe that trivially,
\begin{equation}\label{logm}
\log\frac{f^{[m]}(z)}{z}=\frac{1}{m} \log\frac{f(z^m)}{z^m}.
\end{equation}
From this, and Theorem  \ref{logarithmic theorem}, we thus get 
\begin{corollary}
Let $f^{[m]}(z)$ be the $m$-fold whole-plane $\SLE_2$ map and
   \begin{equation}\label{ptlog2}
   \log \frac{f^{[m]}(z)}{z}=2\sum_{n\geq 1}\gamma^{[m]}_{n} z^n;
   \end{equation}
   then
   $$\mathbb{E}(\vert \gamma^{[m]}_{n} \vert^2)= \left\{ 
   \begin{array}{ll}
   \frac{1}{2n^2}  &n=mk,\,\,\,\,k\geq 1,\\
  0 &\mathrm{otherwise.}
   \end{array} \right.$$
  \end{corollary}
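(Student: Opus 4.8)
The plan is to leverage the elementary identity \eqref{logm} relating the logarithmic coefficients of $f^{[m]}$ to those of $f$, together with the $\kappa=2$ result of Theorem \ref{logarithmic theorem}. First I would expand both sides of \eqref{logm} as power series in $z$: the left-hand side is $2\sum_{n\geq 1}\gamma^{[m]}_n z^n$ by definition \eqref{ptlog2}, while the right-hand side is $\frac{1}{m}\log\frac{f(w)}{w}$ evaluated at $w=z^m$, which equals $\frac{2}{m}\sum_{k\geq 1}\gamma_k z^{mk}$ using the expansion \eqref{ptlog}. Matching coefficients of $z^n$ on both sides immediately gives
\begin{equation*}
\gamma^{[m]}_n=\begin{cases}\dfrac{1}{m}\gamma_k,&n=mk,\ k\geq 1,\\[4pt]0,&m\nmid n.\end{cases}
\end{equation*}

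Next I would take the squared modulus and expectation. For $n=mk$ we get $\mathbb{E}(|\gamma^{[m]}_n|^2)=\frac{1}{m^2}\mathbb{E}(|\gamma_k|^2)$, and by Theorem \ref{logarithmic theorem} (the $\kappa=2$ case) this equals $\frac{1}{m^2}\cdot\frac{1}{2k^2}$. Since $n=mk$, we have $k=n/m$, so $\frac{1}{m^2}\cdot\frac{1}{2k^2}=\frac{1}{2m^2k^2}=\frac{1}{2n^2}$, which is precisely the claimed value. For $n$ not a multiple of $m$, the coefficient $\gamma^{[m]}_n$ vanishes identically (not merely in expectation), so its squared modulus has expectation zero. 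This disposes of both cases and completes the proof.

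There is essentially no obstacle here: the statement is a direct corollary of Theorem \ref{logarithmic theorem} via the purely algebraic substitution $w=z^m$ in the logarithm. The only point requiring a moment's care is the bookkeeping of indices — confirming that the factor $1/m^2$ from squaring the $1/m$ prefactor in \eqref{logm} combines correctly with the $1/(2k^2)$ from Theorem \ref{logarithmic theorem} and the relation $n=mk$ to reproduce exactly $1/(2n^2)$, independently of $m$. One should also note that the independence of the branch in the definition of $f^{[m]}$ plays no role for the logarithmic coefficients, since \eqref{logm} holds for whichever consistent branch is chosen; this is why the result is stated as a corollary rather than requiring the full machinery of Theorem \ref{mcase}.
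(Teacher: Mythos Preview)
Your proof is correct and follows exactly the paper's approach: the paper simply states that the corollary follows from the identity \eqref{logm} together with Theorem \ref{logarithmic theorem}, without spelling out the coefficient-matching and index bookkeeping that you have made explicit. Your write-up is a faithful expansion of what the paper leaves as a one-line remark.
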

 We can also see this result as a corollary of  Theorem \ref{mcase}, which, for the logarithmic case $p=q$, and for any value of  $m$, yields  $p=q=2$ for $\kappa=2$ as the only integrable case.

\section{Generalized integral means spectrum}\label{secintmean}
In this section we aim at generalizing to the setting of the present work the integral means spectrum analysis of Refs. \cite{BS}, \cite{DNNZ} and \cite{BDZ} (see also \cite{IL,2012arXiv1203.2756L,2013JSMTE..04..007L}) concerning the whole-plane $\SLE$. The original work by Beliaev--Smirnov \cite{BS} and Ref. \cite{BDZ} deal with the exterior case, whereas Ref. \cite{DNNZ} and this work are concerned with the interior case, both being related by duality \eqref{qq'}. \subsection{Modified One-Point Function}
Let us first consider the {\it modified} SLE moduli one-point function,
 \begin{equation}\label{Fzz}
 F(z,\bar z):=\frac{1}{\vert z\vert^{q}}G(z,\bar z)= \mathbb{E}\left(\frac{\vert f'(z) \vert^p}{\vert f(z)\vert^q}\right).
 \end{equation}
 Because of Eq. \eqref{eq1zz},  it obeys the modified PDE,
  \begin{align}\label{eq1Fzz}  \mathcal{P}(D)[F(z,\bar{z})]=&-\frac{\kappa}{2}(z\partial-\bar{z}\bar{\partial})^2F-\frac{1+z}{1-z}z\partial F-\frac{1+\bar{z}}{1-\bar{z}}\bar{z}\bar{\partial} F\\ \nonumber
  &+\bigg[ -\frac{p}{(1-z)^2}-\frac{p}{(1-\bar{z})^2}+2p-q \bigg]F(z,\bar z)=0,
 \end{align}
which, of course, differs from Eq. \eqref{eq1zz}. We can rewrite it as
  \begin{align}\label{eq1Fzzbis}  \mathcal{P}(D)[F(z,\bar{z})]=&-\frac{\kappa}{2}(z\partial-\bar{z}\bar{\partial})^2F-\frac{1+z}{1-z}z\partial F-\frac{1+\bar{z}}{1-\bar{z}}\bar{z}\bar{\partial} F\\ \nonumber
  &-p\bigg[ \frac{1}{(1-z)^2}+\frac{1}{(1-\bar{z})^2}+\sigma-1 \bigg]F=0,
 \end{align}in term of the important new parameter, 
\begin{equation}\label{sigmapq}
\sigma:=q/p-1.
\end{equation}
 This PDE then exactly coincides with Eq. (106) in Ref.  \cite{DNNZ}, where $\sigma$ was meant to represent $\pm 1$, whereas here $\sigma\in\mathbb R$. 
 
 The value $\sigma=+1$ corresponds to the original Beliaev--Smirnov case, where the integral means spectrum successively involves three functions \cite{BDZ,BS,BDone,2000PhRvL..84.1363D,2014arXiv1412.8764G,PhysRevLett.88.055506,0911.3983}:
\begin{align} \label{tip}
\beta_{\textrm{tip}}(p,\kappa):=& -p-1+\frac 14(4+\kappa-\sqrt{(4+\kappa)^2-8\kappa p}), \\   \label{p0**} & \textrm{for}\,\,\, p\leq p'_0(\kappa):=-1-\frac{3\kappa}{8}; \\ \label{bulk} \beta_0(p,\kappa):=&-p+\frac{4+\kappa}{4\kappa}(4+\kappa-\sqrt{(4+\kappa)^2-8\kappa p}),\\ \nonumber & \mathrm{for}\,\,\, p'_0(\kappa)\leq p\leq p_{0}(\kappa);\\ \label{lin}
\beta_{\textrm{lin}}(p,\kappa):=& \,\,p-\frac{(4+\kappa)^2}{16\kappa},\\ \label{p0*}  &\textrm{for}\,\,\,  p\geq  p_{0}(\kappa):=\frac{3(4+\kappa)^2}{32\kappa}.
\end{align}

As shown in Refs. \cite{DNNZ,IL,2012arXiv1203.2756L,2013JSMTE..04..007L} in the $\sigma=-1$ interior case, because of the unboundedness of the interior whole-plane $\SLE$ map, there exists a phase transition at $p=p^*(\kappa)$, with 
\begin{align}
\nonumber p^*(\kappa)&:=
  \frac{1}{16\kappa}\left((4+\kappa)^2-4-2\sqrt{2(4+\kappa)^2+4}\right)\\ \label{pstar} &=\frac{1}{32\kappa}\left(\sqrt{2(4+\kappa)^2+4}-6\right) \left(\sqrt{2(4+\kappa)^2+4}+2\right).\end{align}
 The 
   integral means spectrum is afterwards given by \begin{equation}\label{beta+} 
\beta(p,\kappa):=3p-\frac{1}{2}-\frac 12\sqrt{1+2\kappa p},\,\, \textrm{for}\,\,\,  p\geq p^{*}(\kappa).
\end{equation}
  Since $p^*(\kappa)< p_{0}(\kappa)$ \eqref{p0*}, this transition precedes and supersedes the transition from the bulk spectrum  \eqref{bulk} towards the linear behavior \eqref{lin}. 
 
\textcolor{black}{The search in Ref. \cite[Section 4.2.1]{DNNZ} for exact boundary solutions to Eq. \ref{eq1Fzz} (in the $\sigma=\pm 1$ cases) led to the introduction of the $\sigma$-dependent spectrum function \cite[Eq. (180)]{DNNZ}} 
\begin{equation} \label{betasigma+}\beta_{+}^\sigma(p,\kappa)=(1-2\sigma)p-\frac{1}{2}\big(1+ \sqrt{1-2\sigma\kappa p}\big).
\end{equation}
For $\sigma=-1$, it recovers for interior whole-plane SLE the integral means spectrum \eqref{beta+} mentioned above. For $\sigma=+1$ it introduces a new spectrum,
\begin{equation} \label{betasigma++}\beta_{+}^{(+1)}(p,\kappa)=-p-\frac{1}{2}\big(1+ \sqrt{1-2\kappa p}\big),
\end{equation} the relevance of which for the exterior whole-plane SLE case is analyzed in a joint work by D. Beliaev and two of the present authors  \cite{BDZ}. 

For general real values of $\sigma$ \eqref{sigmapq}, we can rewrite \eqref{betasigma+} as a function of $(p,q,\kappa)$,  
\begin{equation} \label{betapq}\beta_{+}^{\sigma}(p,\kappa)=\beta_1(p,q;\kappa):=3p-2q-\frac{1}{2}-\frac{1}{2}\sqrt{1+2\kappa (p-q)}.
\end{equation}
We claim that the spectrum generated by the integral means in Definition \eqref{def:gims} in the general $(p,q)$ case will involve the standard multifractal spectra \eqref{tip}, \eqref{bulk}, \eqref{lin}, that are independent of $q$, and also the new $(p,q)$-dependent multifractal spectrum \eqref{betapq}. Phase transitions between these spectra will occur along lines drawn  in the $(p,q)$ plane. \textcolor{black}{As we shall see in Section  \ref{proofs}, the analysis of the integral means spectrum performed in Ref. \cite{DNNZ},  in particular that concerning the range of validity of the form \eqref{beta+} of \eqref{betasigma+} for $\sigma=-1$, as well as that given in Ref. \cite{BDZ} for the range of validity of the form \eqref{betasigma++} for $\sigma=+1$, and their corresponding proofs, can be extended to general values of the  $\sigma$ parameter via   Propositions \ref{theo:racket} and \ref{theo:DD} to establish Theorem \ref{propseparbis}.} 

We first describe the corresponding partition of the  $(p,q)$ plane into the respective domains of validity of the four spectra above. For this, we  need to determine boundary curves where  pairs (possibly triplets) of these spectra coincide, that are marking the onset of the respective phase transitions.

\subsection{Phase transition lines}\label{PTLines}
\textcolor{black}{Following Ref. \cite{DNNZ}, let us first introduce the analytical form of the various multifractal spectra based on the use of functions $A$ \eqref{A}, $B$ \eqref{B} and $C$ \eqref{C}.} It will be convenient to use the notation \cite{DNNZ},
\begin{align}\label{Asigma}
A^\sigma(p,\gamma):=- \frac{\kappa}{2}\gamma^2+\gamma -\sigma p,
\end{align}
such that for $\sigma=q/p-1$ \eqref{sigmapq},
\begin{equation}\label{AA}
A^\sigma(p,\gamma)=A(p,q;\gamma):=p-q+\gamma-\frac{\kappa}{2}\gamma^2,\;
\end{equation}
as well as, 
\begin{align}\label{Bgamma}
&B(q,\gamma)=q-\left(3+\frac{\kappa}{2}\right)\gamma+\kappa\gamma^2,\\
\label{Cgamma}
&C(p,\gamma)=- \frac{\kappa}{2}\gamma^2+\left(2+\frac{\kappa}{2}\right)\gamma - p,\\ \label{betaCgamma}
&\beta(\gamma):=\beta(p,\gamma):=\frac{\kappa}{2}\gamma^2-C(p,\gamma)=\kappa\gamma^2-\left(2+\frac{\kappa}{2}\right)\gamma +p,
\end{align}
where the last function, $\beta(p,\gamma)$, is the so-called ``spectrum function'' of Ref. \cite{DNNZ}. Recall also that this function  possesses an important duality property \cite{DNNZ},
\begin{equation}\label{duality}
\beta(p,\gamma)=\beta(p,\gamma'),\,\,\, \gamma+\gamma':=\frac{2}{\kappa}+\frac{1}{2}.
\end{equation}


\begin{remark} \label{remark46}The B--S  
 parameter $\gamma_0$, and bulk spectrum \eqref{bulk} $\beta_0:=\beta(p,\gamma_0)$, (corresponding to Eqs. (11)  and (12) in Ref. \cite{BS}) are obtained from the equations (see Ref. \cite{DNNZ}),
\begin{eqnarray}\label{BScondition1}
C(p,\gamma_0)=0;\,\,\,\,\beta_0=\beta(p,\gamma_0)=\kappa\gamma^2_0/2. \end{eqnarray}   
The two solutions to \eqref{BScondition1} are $\gamma_0^{\pm}(p)$ as in Eq. \eqref{p,q}, 
	where the lower branch $\gamma_0:=\gamma_0^{-}$ is the one selected for the bulk spectrum, $\beta_0(p)=\frac{1}{2}\kappa{\gamma_0^{-}}(p)^2$. 
	
This spectrum \eqref{bulk} is defined only to the {\it left} of a vertical line in the $(p,q)$ plane,  as given by (see Fig. \ref{droiteter})
\begin{equation}\label{Delta0}
\Delta_0 :=\left\{p= \frac{(4+\kappa)^2}{8\kappa}, q\in \mathbb R\right\}.
\end{equation}
 \end{remark} 

\begin{remark} The $\sigma$-dependent spectrum \eqref{betasigma+}  is obtained from the equations
\begin{align}
\label{gamma}
A^\sigma(p,\gamma)=0;\,\,\,\,\beta(p,\gamma)=\kappa\gamma^2/2-C(p,\gamma).
\end{align}
The solutions to Eq. \eqref{gamma} are
\begin{eqnarray}\label{gammasigma}
&&\gamma^\sigma_{\pm}(p)=\frac{1}{\kappa}\big(1\pm \sqrt{1-2\sigma\kappa p}\big),\\
\label{betapgammasigma}
&&\beta^\sigma_{\pm}(p)=(1-2\sigma)p-\frac{\kappa}{2}\gamma^\sigma_{\pm}(p)=(1-2\sigma)p -\frac{1}{2}\big(1\pm \sqrt{1-2\sigma\kappa p}\big).
\end{eqnarray}
The multifractal spectrum \eqref{betasigma+} is then given by the upper branch $\beta^\sigma_{+}(p)$ \cite{DNNZ}. Note also that this spectrum is defined only for $2 \sigma \kappa p \leq 1$, hence for points  in the $(p,q)$ plane {\it below} the oblique line (Fig. \ref{droiteter}):
\begin{equation}\label{Delta1}\Delta_1 :=\left\{(p,q) \in \mathbb R^2, q=p+\frac{1}{2\kappa}\right\}.
\end{equation}
\end{remark}
 \begin{figure}[htbp]
\begin{center}
\includegraphics[angle=0,width=.603290\linewidth]{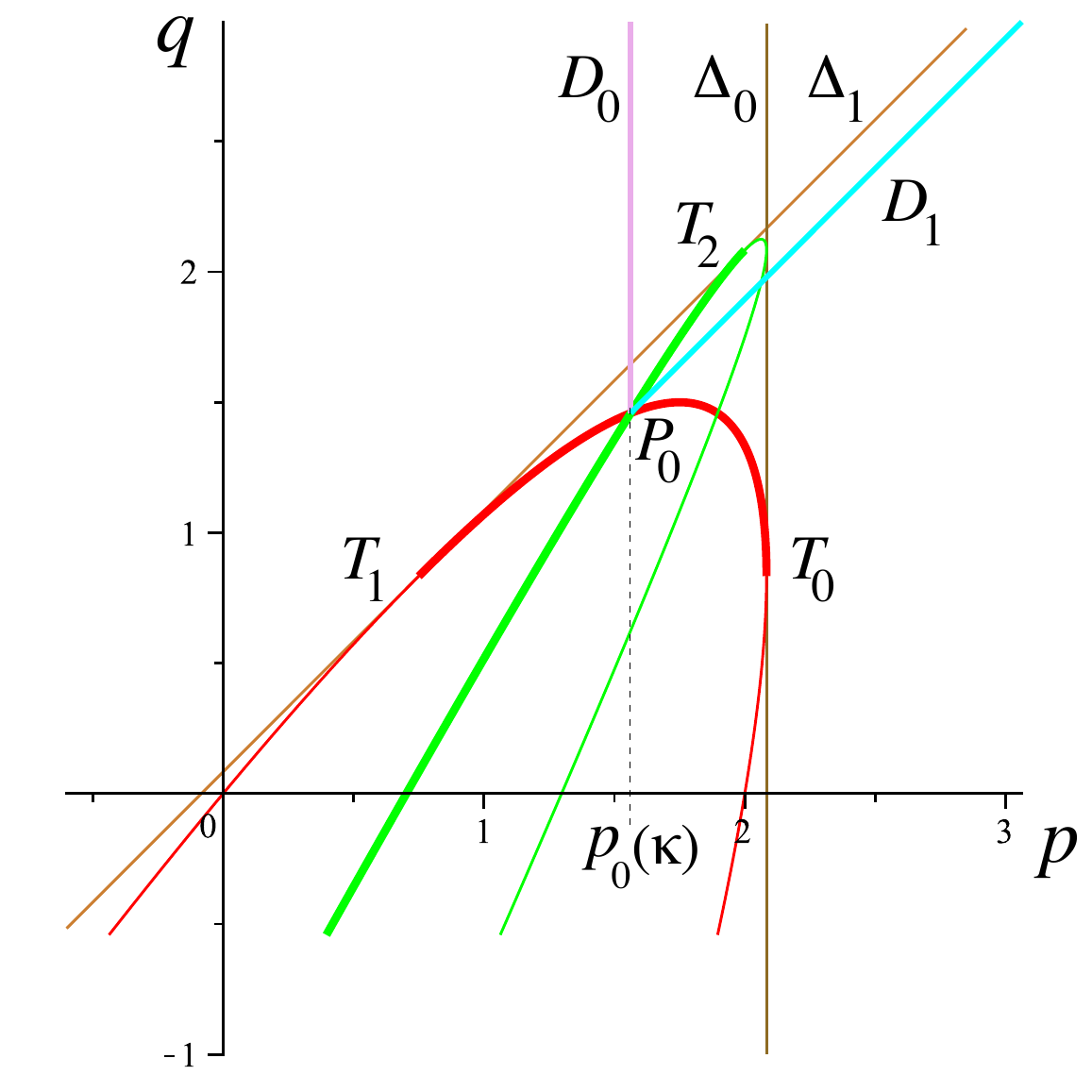}
\caption{{\it Red parabola $\mathcal R$  \eqref{C++} and green parabola $\mathcal G$ \eqref{C+dual} (for $\kappa=6$).  From the intersection point  $P_0$ \eqref{P0} originate the two (half)-lines $D_0$ \eqref{D0} and $D_1$ \eqref{D1}. The  bulk spectrum $\beta_0(p)$ and the generalized spectrum $\beta_1(p,q)$ coincide along the arc \eqref{2} of red parabola  between its tangency points $T_0$ and $T_1$ with $\Delta_0$  and $\Delta_1$ (thick red line). They also coincide along the infinite left branch \eqref{3}  of the green parabola, up to its tangency point $T_2$ to $\Delta_1$ (thick green line). The $\beta_0(p)$ spectrum and the  linear one $\beta_{\mathrm{lin}}(p)$ coincide along $D_0$, whereas $\beta_1(p,q)$ and $\beta_{\mathrm{lin}}(p)$ coincide  along $D_1$. 
}}
\label{droiteter}
\end{center}
\end{figure}   
\subsubsection{`Red' Parabola}
The parabola  $\mathcal R$ of Theorems  \ref{maintheorem1/2} and \ref{maintheorem}, which we shall hereafter call (and draw in) \textcolor{red}{\bf red} (see Fig. \ref{droiteter}), is given by the simultaneous conditions, 
\begin{equation}\label{seed1}
A^\sigma(p,\gamma)=A(p,q,\gamma)=0,\,\,C(p,\gamma)=0,
\end{equation} 
hence also $B(q,\gamma)=0$, which recovers the parametric form \eqref{para}
\begin{equation}\label{C++}\begin{split}
p&=p_{\mathcal R}(\gamma):=\left(2+\frac{\kappa}{2}\right)\gamma-\frac{\kappa}{2}\gamma^2,
\\ 
q&=q_{\mathcal R}(\gamma):=\left(3+\frac{\kappa}{2}\right)\gamma-\kappa\gamma^2,\,\,\gamma\in \mathbb R.
\end{split}\end{equation}
By construction, the associated spectrum $\beta(p,\gamma)$ is therefore both of the B--S type, $\beta_0^{\pm}(p)$, and of the novel type, $\beta^\sigma_{\pm}(p)$. We successively have:
\begin{align}\label{2.0}
&\gamma=\gamma^\sigma_{-}(p)=\gamma_0^{-}(p);\,\, \beta^\sigma_{-}(p)=\beta_0^{-}(p), \gamma\in\left(-\infty, 1/{\kappa}\right],\\ \label{2}
&\gamma=\gamma^\sigma_{+}(p)=\gamma_0^{-}(p);\,\, \beta^\sigma_{+}(p)=\beta_0^{-}(p), \gamma\in\left[1/{\kappa},{2}/{\kappa}+{1}/{2}\right],\\ \label{2bis}
&\gamma=\gamma^\sigma_{+}(p)=\gamma_0^{+}(p);\,\, \beta^\sigma_{+}(p)=\beta_0^{+}(p), \gamma\in\left[{2}/{\kappa}+{1}/{2},+\infty\right),
\end{align}
where the change of analytic branch from the first to the second line corresponds to a tangency at $T_1$ of the red parabola to the  boundary line $\Delta_1$, whereas the change from second to third corresponds to a tangency at $T_0$ to the vertical boundary line $\Delta_0$. 
The interval where the multifractal spectra coincide, i.e., when $\beta^\sigma_{+}(p)=\beta_0^{-}(p)$, is thus given by  line \eqref{2} in the equations above.
 
 In Cartesian coordinates, the red parabola $\mathcal R$ \eqref{C++} has for equation \eqref{paraRCart}. 
\subsubsection{`Green' Parabola} A second parabola  in the $(p,q)$ plane, hereafter called \textcolor{dgreen}{\bf green} (see Fig. \ref{droiteter}) and denoted by $\mathcal G$, is such that the multifractal spectra $\beta_0^-(p)$ and $\beta^\sigma_+(p)=\beta(p,q;\kappa)$ coincide on part of it. We use the  {\it duality} property \eqref{duality} of the spectrum function \cite{DNNZ}, and set the  simultaneous seed conditions, 
\begin{equation}\label{seed2}\begin{split}
&A^\sigma(p,\gamma')=A(p,q,\gamma')=0,\,\,C(p,\gamma'')=0,\\ 
&\gamma'+\gamma''={2}/{\kappa}+{1}/{2},
\end{split}
\end{equation}
where $\gamma'$ and $\gamma''$  are {\it dual} of each other and such that $\beta(p,\gamma')=\beta(p,\gamma'')$.
 
 Eqs. \eqref{A} and \eqref{C} immediately give the parametric form for the green parabola,
\begin{equation}\label{C+dual}\begin{split}
p&=p_{\mathcal G}(\gamma'):=\frac{(4+\kappa)^2}{8\kappa}-\frac{\kappa}{2}\gamma'^2,
\\ 
q&=q_{\mathcal G}(\gamma'):=\frac{(4+\kappa)^2}{8\kappa}+\gamma'-\kappa\gamma'^2,\,\,\gamma'\in \mathbb R.
\end{split}\end{equation}
Along this locus, we successively have:
\begin{align} \label{1}
&\gamma'=\gamma^\sigma_{-}(p), \gamma''=\gamma_0^{+}(p);\,\, \beta^\sigma_{-}(p)=\beta_0^{+}(p), \gamma'\in\left(-\infty, 0\right],\\ \nonumber
&\gamma'=\gamma^\sigma_{-}(p), \gamma''=\gamma_0^{-}(p);\,\, \beta^\sigma_{-}(p)=\beta_0^{-}(p), \gamma'\in\left[0,{\kappa}^{-1}\right],\\ \label{3}
&\gamma'=\gamma^\sigma_{+}(p),\gamma''=\gamma_0^{-}(p);\,\, \beta^\sigma_{+}(p)=\beta_0^{-}(p), \gamma'\in\left[{\kappa}^{-1},+\infty\right),
\end{align}
where the changes of branches correspond to a tangency of the green parabola to $\Delta_0$ followed by a tangency to $\Delta_1$. The multifractal spectra coincide when $\beta^\sigma_{+}(p)=\beta_0^{-}(p)$, which corresponds to the third line \eqref{3} in the equations above, i.e., to the domain where $\gamma'\geq 1/\kappa$. 
\subsubsection{Quadruple point}
The intersection of the red and green parabolae \eqref{C++} and \eqref{C+dual} can be found by combining the seed equations \eqref{seed1} and \eqref{seed2}. 
We find either $\gamma=\gamma'=1/\kappa+1/4$, or $\gamma=2/\kappa+1/4, \gamma'=-1/4$, 
which lead to the two intersection points,
\begin{align}\label{P0}
&P_0: p_0=p_0(\kappa)=\frac{3(4+\kappa)^2}{32\kappa},\,\,\quad q_0=\frac{(4+\kappa)(8+\kappa)}{16\kappa},\\ \label{P1}
&P_1: p_1=\frac{(8+\kappa)(8+3\kappa)}{32\kappa},\,\,\,\,\,\,\, \quad q_0=\frac{(4+\kappa)(8+\kappa)}{16\kappa}.
\end{align}
Note that these points have same ordinate, while the abscissa of the left-most one, $P_0$, is  $p_0(\kappa)$ \eqref{p0*}, where the integral means spectrum transits from the bulk form \eqref{bulk}  to its linear form \eqref{lin}.  
 
Through this intersection point $P_0$ further pass two important straight lines in the $(p,q)$ plane.
\begin{definition}\label{D0D1}
$D_0$ and $D_1$ are, respectively, the vertical line and the slope one line passing through point $P_0$, of equations
\begin{align}\label{D0}
&D_0:=\{(p,q): p=p_0 \},
\\ \label{D1}
&D_1:=\left\{(p,q): q-p=q_0-p_0=\frac{16-\kappa^2}{32\kappa} \right\}.
\end{align}
\end{definition}
\textcolor{black}{On $D_0$, one has $\beta_0(p,\kappa)=\beta_{\mathrm{lin}}(p,\kappa)$.} A key property of $D_1$ is the following. 
The difference,
\begin{equation}\label{beta-betalin}
\beta_1(p,q;\kappa)-\beta_{\mathrm{lin}}(p,\kappa)=\frac{1}{\kappa}\left(\frac{\kappa}{4}-\sqrt{1+2\kappa (p-q)}\right)^2,
\end{equation}
 is always positive, and vanishes only on line $D_1$, where  
\begin{equation}\label{fusionD1bis}
 \forall (p,q)\in  D_1,\,\,\, \beta_1(p,q;\kappa)=\beta_{\mathrm{lin}}(p,\kappa)=p-\frac{(4+\kappa)^2}{16\kappa}.
\end{equation}
 \begin{figure}[htbp]
\begin{center}
\includegraphics[angle=0,width=.603290\linewidth]{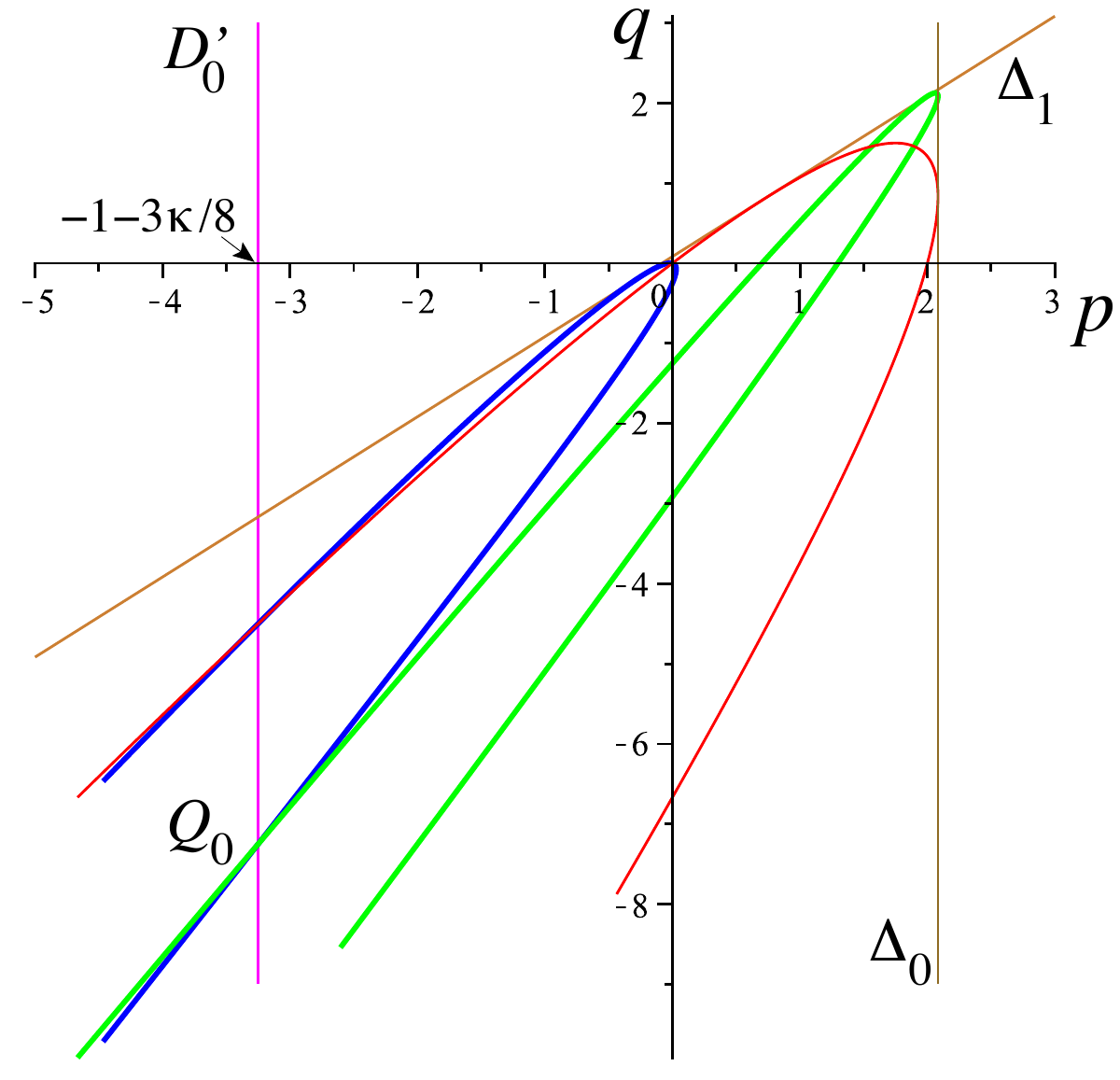}
\caption{{\it The blue quartic \eqref{pqquartic} for $\kappa=6$. It intersects the green parabola at point $Q_0$ \eqref{Q0} and the red parabola at point $Q_1$ \eqref{Q1} (not marked), both of  abscissa $p'_0(\kappa)=-1-3\kappa/8$.}}
\label{gauche}
\end{center}
\end{figure}   
\subsubsection{`Blue' Quartic} \label{bluequartic} A third locus, the \textcolor{blue}{\bf blue} quartic $\mathcal Q$, will also play an important role, that is where the tip-spectrum, $\beta_{\textrm{tip}}(p;\kappa)$ \eqref{tip}, coincides with the novel spectrum, $\beta^\sigma_+(p)=\beta_1(p,q;\kappa)$. The tip spectrum is given by $\beta_{\textrm{tip}}(p;\kappa)=\beta(p,\gamma_0)-2\gamma_0-1$, where $\gamma_0$ is solution to $C(p,\gamma_0)=0$ and such that the tip contribution is positive, $2\gamma_0+1 \leq 0$ \cite{BS,DNNZ}; this corresponds to the tip condition \eqref{p0**} \cite{BS}.  In the $(p,q)$ plane, this descibes the domain to the left of the straight line  $D_0'$ (Fig. \ref{gauche}),  defined by
\begin{equation}\label{D0'}
D_0':=\left\{(p,q): p=p'_0(\kappa)=-1-3\kappa/8 \right\}.
\end{equation}
The generalized spectrum is given by $\beta^\sigma_+(p)=\beta(p,\gamma)$ where $\gamma$ is solution to $A^\sigma(p,\gamma)=0$. We therefore look for simultaneous solutions to the seed equations, 
\begin{equation}\label{betatip}\begin{split}
&\beta(p,\gamma)=\beta(p,\gamma_0)-2\gamma_0-1,\,\,2\gamma_0+1\leq 0,\\ 
&A^\sigma(p,\gamma)=0,\,\,\,C(p,\gamma_0)=0.
\end{split}\end{equation}
Using Eq. \eqref{AA}, we first find, as for the red and green parabolae, 
\begin{equation}\label{q-pg}
q-p=\gamma-\frac{\kappa}{2}\gamma^2,
\end{equation}
and from  \eqref{betaCgamma} and \eqref{Cgamma}, by substitution in the above,
\begin{align}
\label{2p-q}
&2p-q+\frac{1}{2}=\frac{\kappa}{4}(\gamma+\gamma_0),\\ \label{quad}
&\frac{4+\kappa}{2}\gamma-\kappa\gamma^2-1=\frac{8+\kappa}{2}\gamma_0-\kappa\gamma_0^2.
\end{align}
Solving for $\gamma_0$ in terms of $\gamma$ gives
\begin{align}\label{gamma0gamma}
&\gamma_0=\gamma_0^{\pm}:=\frac{8+\kappa}{4\kappa}\pm\frac{1}{2\kappa}\Delta^{\frac{1}{2}}(\gamma),\\ \label{Deltagamma}
&\Delta(\gamma):=4\kappa^2\gamma^2-2\kappa(4+\kappa)\gamma+\frac{1}{4}(8+\kappa)^2+4\kappa,
\end{align}
with $\Delta(\gamma) >0, \forall \gamma\in \mathbb R$. 
The tip relevance inequality in \eqref{betatip}, $2\gamma_0+1\leq 0$, implies the choice of the negative branch in \eqref{gamma0gamma}:
$\gamma_0=\gamma_0^{-}.$ 
 We thus get the desired explicit parameterization of that branch of the quartic,
\begin{equation}\label{pqquartic}\begin{split}
&p=p_{\mathcal Q}(\gamma):=\frac{\kappa}{16}+\left(1+\frac{\kappa}{4}\right)\gamma-\frac{\kappa}{2}\gamma^2-\frac{1}{8}\Delta^{\frac{1}{2}}(\gamma),\\ 
&q=q_{\mathcal Q}(\gamma):=p_{\mathcal Q}(\gamma)+\gamma-\frac{\kappa}{2}\gamma^2,\,\,\,\gamma \in \mathbb R.
\end{split}\end{equation}
\begin{remark}\label{constitutive} Note that because of the very choice to parameterize the  parabolae and the quartic by $\gamma$,  such that $A$ \eqref{AA} vanishes, Eq. \eqref{q-pg} holds for each of the  pairs of parametric equations.\end{remark}
We successively have along the  branch \eqref{pqquartic} of the blue quartic:
\begin{align}\nonumber
&\gamma=\gamma^\sigma_{-}(p);\,\, \beta^\sigma_{-}(p)=\beta_{\textrm{tip}}(p), \gamma\in\left(-\infty, {1}/{\kappa}\right],\\  \label{3ter}
&\gamma=\gamma^\sigma_{+}(p);\,\, \beta^\sigma_{+}(p)=\beta_{\textrm{tip}}(p)<\beta_0^{-}(p), \gamma\in\left[{1}/{\kappa},1+{2}/{\kappa}\right],\\   \label{3bis}
&\gamma=\gamma^\sigma_{+}(p);\,\, \beta^\sigma_{+}(p)=\beta_{\textrm{tip}}(p)\geq \beta_0^{-}(p), \gamma\in\left[1+{2}/{\kappa},+\infty\right),
\end{align}
The intersection of the \textcolor{blue}{{\bf blue}} quartic \eqref{pqquartic} with the \textcolor{red}{{\bf red}} parabola $\mathcal R$ \eqref{C++} is located at
\begin{equation}\label{Q1}Q_1:  p'_0=-1-\frac{3\kappa}{8},q= -\frac{1}{2}(3+\kappa); \,\,\,\gamma=\gamma_0=-\frac{1}{2},
\end{equation} 
followed by a second intersection at the origin, 
$p=q=0,$ for $\gamma=\frac{2}{\kappa}$ and $\gamma_0=0$. 

The intersection of the \textcolor{blue}{{\bf blue}} quartic \eqref{pqquartic} 
with the  \textcolor{dgreen}{{\bf green}} parabola $\mathcal G$ \eqref{C+dual}
is located at  
\begin{equation}\label{Q0}
Q_0:  p'_0=-1-\frac{3\kappa}{8}, q'_0:=-2-\frac{7\kappa}{8};\,\,\, \gamma=\gamma'=1+\frac{2}{\kappa},\gamma_0=-\frac{1}{2}.
\end{equation} 
Notice that these two intersection points have same abscissae, $p'_0(\kappa)$ \eqref{p0**}, where the transition for $\gamma_0=-\frac{1}{2}$ from the bulk spectrum $\beta_0$ to the tip spectrum $\beta_{\mathrm{tip}}$  takes place. They are found by combining Eqs. \eqref{seed1}  or Eqs. \eqref{seed2}  with \eqref{betatip}. 

The tip spectrum and the generalized one coincide in both $\gamma$-intervals  \eqref{3ter} and \eqref{3bis}, which together parameterize the branch of the quartic located below its contact with $\Delta_1$ (see Fig. \ref{gauche}). 
Because of the tip relevance condition \eqref{p0**},  only the interval \eqref{3bis} describing the lower infinite branch of the  quartic located to the {\it left} of $Q_0$  will matter for  the integral means spectrum. 
\subsection{Whole-plane $\SLE_\kappa$ generalized spectrum}
\subsubsection{Phase diagram} Let us briefly summarize the results of Section \ref{PTLines}. We know from Eq. \eqref{2} that the bulk spectrum $\beta_0(p)$ and the mixed spectrum $\beta_1(p,q)$ coincide along the finite sector of  parabola $\mathcal R$ located between tangency points $T_0$ and $T_1$ (Fig. \ref{droiteter}). From Eq. \eqref{3}, we also know that they coincide along the infinite left branch of parabola $\mathcal G$ below the tangency point $T_2$ (Fig. \ref{droiteter}). 

 The linear bulk spectrum $\beta_{\mathrm{lin}}(p)$ coincides with $\beta_0(p)$ along line $D_0$ and supersedes the latter to the right of $D_0$ (Fig. \ref{droiteter}). We  know from  \eqref{fusionD1bis} that $\beta_{\mathrm{lin}}(p)$ and $\beta_1(p,q)$ coincide along the line $D_1$ (Fig. \ref{droiteter}). 
 
 The tip spectrum $\beta_{\mathrm{tip}}(p)$ coincides with $\beta_0(p)$ along line $D_0'$, and  supersedes it to the left of $D_0'$. We finally know from Eq. \eqref{3bis}  that this tip spectrum $\beta_{\mathrm{tip}}(p)$ coincides with $\beta_1(p,q)$ along the lower branch of the blue quartic located below point $Q_0$ \eqref{Q0} (Fig. \ref{gauche}).

 The only possible scenario which thus emerges to construct the average generalized integral means spectrum by a continuous matching of the 4 different spectra along the phase transition lines described above, is the partition of the $(p,q)$ plane in 4 different regions as indicated in Fig. \ref{separ}: \begin{itemize}
  \item a  part (I) to the left of $D_0'$ and located above the blue quartic up to point $Q_0$, where the average integral means spectrum is $\beta_{\mathrm{tip}}(p)$;
   \item an upper part (II) bounded by lines $D_0'$, $D_0$, and located above the section of the green parabola between points $Q_0$ and $P_0$, where the spectrum is given by  $\beta_0(p)$; 
 \item an infinite wedge (III) of apex $P_0$ located between the upper half-lines $D_0$ and $D_1$, where the spectrum is given by  $\beta_{\mathrm{lin}}(p)$; 
  \item a lower part  (IV) whose boundary is the blue quartic up to point $Q_0$, followed by the arc of green parabola between points $Q_0$ and  $P_0$, followed by the half-line $D_1$ above $P_0$ where the spectrum is $\beta_1(p,q)$.
 \end{itemize}
The two wings $T_1P_0$ and $P_0T_0$ of the red parabola (Fig. \ref{droiteter}), where we know from Theorem \ref{maintheorem} that the 
average spectrum is given by $\beta_0(p)=\beta_1(p,q)$, can thus be seen as the respective extensions of region IV into  II and of region II into IV. The validity of this geometrical analysis of the phase diagram of Figure \ref{separ}, associated with the generalized integral means spectrum of whole-plane $\mathrm{SLE}_\kappa$, is established in Theorem \ref{propseparbis}. 
\subsubsection{The B--S line}\label{BSline}
As mentioned above, 
the whole-plane SLE case studied by Beliaev and Smirnov corresponds  to the $q=2p$ line.   Because of Eq. \eqref{paraRCart}, it intersects the red parabola $\mathcal R$ only at $p=0$. The green parabola $\mathcal G$ \eqref{C+dual} has for Cartesian equation, 
\begin{equation}
\label{cartesiangreen}
\frac{\kappa}{2}(2p-q)^2-\frac{1}{8}(4+\kappa)^2\left(2p-q\right)+p+\frac{1}{128}(4+\kappa)^2(8+\kappa)=0,
\end{equation}
which shows that it intersects the B--S line at \cite{BDZ}
 \begin{equation}\label{pseconde0}
 p=p_0''(\kappa):=-\frac{1}{128}(4+\kappa)^2(8+\kappa),
 \end{equation} which is to the {\it left} of the tip transition line at $p_0'(\kappa)=-1-\frac{3}{8}\kappa$ \eqref{p0**}. 
The quartic $\mathcal Q$ \eqref{pqquartic} obeys 
\begin{align}\nonumber
\left[\left(2p-q-\frac{\kappa}{16}\right)^2-\frac{c}{4}\right]\left(2p-q-1-\frac{\kappa}{8}\right)\left(2p-q\right)
&=\frac{\kappa}{2}(p-q)\left(2p-q-\frac{1}{4}-\frac{\kappa}{8}\right)^2\\ \label{cartesianquartic}
c&=c(\kappa):=\frac{1}{64}(8+\kappa)^2+\frac{\kappa}{4},
\end{align}
which immediately shows that  the B--S line $q=2p$ intersects $\mathcal Q$ only at the origin and stays above its lower branch.

The B--S line therefore does not intersect the segment of green parabola $\mathcal G$ between $P_0$ and $Q_0$, nor the  quartic $\mathcal Q$ below $Q_0$ (Fig. \ref{separ}). Thus, as shown rigorously in Ref. \cite{BDZ}, the novel spectrum $\beta_1$ {\it does not} a priori appear in the version of whole-plane SLE considered in Ref. \cite{BS}. The B--S line nevertheless intersects $\mathcal G$ at $p''_0$ \eqref{pseconde0}  to the left of $Q_0$, in a domain lying above the quartic and where the integral mean receives a non-vanishing contribution from the SLE tip. But if that integral mean is restricted to avoid a neighborhood of $z=1$, whose image is the tip, only the bulk spectrum remains, and a phase transition will take place from $\beta_0$ to $\beta_1$ when the line $q=2p$ crosses $\mathcal G$, as shown in Ref. \cite{BDZ}. As we shall see in Section \ref{mfoldspectrum}, the $\beta_1$ spectrum can also directly appear in the averaged integral means spectra of higher $m$-fold transforms of the B--S version of whole-plane SLE.
\subsubsection{The Koebe $\kappa \to 0$ limit}\label{koebe} In this limit, \textcolor{black}{the whole-plane SLE map tends to the Koebe function.} Eq. \eqref{paraRCart} for the red parabola $\mathcal R$ becomes $3p-2q=0$, Eq. \eqref{cartesiangreen} for the green one $\mathcal G$ becomes $3p-2q-1=0$, and Eq. \eqref{cartesianquartic} for the quartic factorizes into that of four parallel lines, among which $q=2p$ gives the relevant lower branch. Point $P_0$ moves up to infinity, whereas $Q_0\to (-1,-2)$. The phase diagram is thus made of only three different regions, I, where $\beta_{\mathrm{tip}}(p)=-p-1$, II, where $\beta_0=0$, and IV, where $\beta_1(p,q)=3p-2q-1$.
\subsection{Proof of Theorem \ref{propseparbis}}\label{proofs}
 \begin{figure}[htbp]
\begin{center}
\includegraphics[angle=0,width=.73290\linewidth]{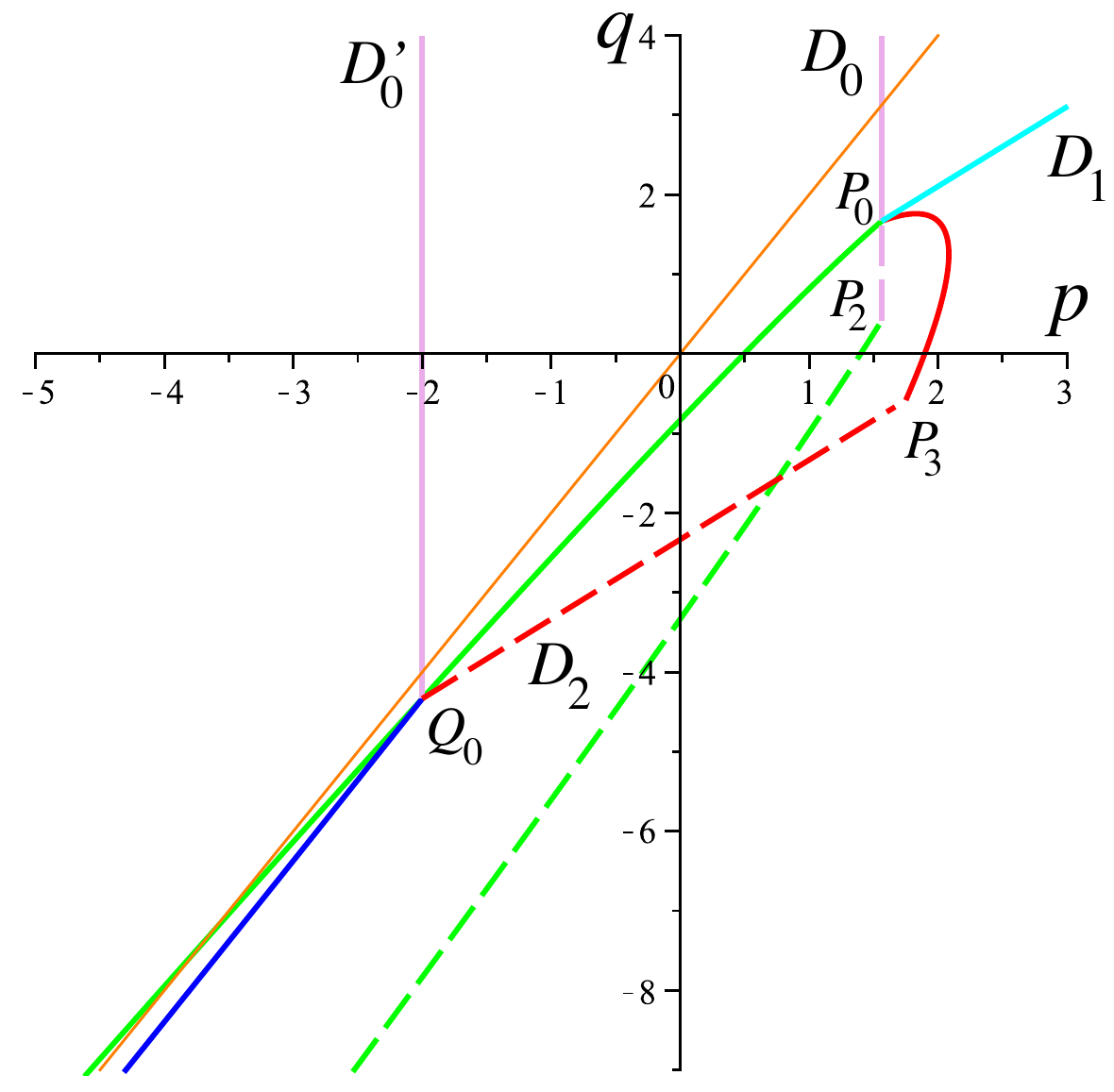}
\caption{{\it \textcolor{black}{(Case $\kappa=8/3$ shown here.)} \textcolor{black}{Various geometrical elements appearing in Section \ref{proofs}, when establishing the respective domains of validity of integral means spectra $\beta_{\mathrm{tip}}(p)$, $\beta_0(p)$, $\beta_{\mathrm{lin}}(p)$, and $\beta_1(p,q)$. \textcolor{black}{The green parabola $\mathcal G$ intersects $D_0$ at points $P_0$ \eqref{P0} and $P_2=\big(\frac{3(4+\kappa)^2}{32\kappa},\frac{4+\kappa}{16}\big)$}. The dashed red line $D_3$ corresponds to Eq. \eqref{D_3}, and intersects $D'_0$ at $Q_0$ and the red parabola $\mathcal R$ at point \textcolor{black}{$P_3=(1+\frac{2}{\kappa},\frac{4-\kappa^2}{2\kappa})$.} The $q=2p$ continuous straight line in coral, corresponding to the whole-plane SLE version of Ref. \cite{BS}, does not intersect the blue quartic, but intersects the green parabola at a point of abscissa \eqref{pseconde0}. 
}}}
\label{dfk}
\end{center}
\end{figure}   
\textcolor{black}{Let us first consider the following geometrical elements (Fig. \ref{droiteter}). The red parabola $\mathcal R$ \eqref{C++}  partitions the half-plane below $\Delta_1$ into an open interior $\mathcal I$ of $\mathcal R$,  an open exterior $\mathcal E_-$ located to the left of tangency point $T_1$, and  an open exterior $\mathcal E_+$ to the right of $T_1$. 
\begin{lemma}\label{lemmefondamental}
The average generalized integral means spectrum $\beta(p,q)$ of whole-plane SLE is bounded below as $\beta(p,q)\geq \beta_1(p,q)$  in $ \mathcal E_-\cup\mathcal I$, whereas  $\beta(p,q)\leq \beta_1(p,q)$ in $\mathcal E_+$.
\end{lemma}
\begin{proof}
The red parabola $\mathcal R$  is parameterized by $\gamma$ in \eqref{C++} such that $C(p,\gamma)=0$, where owing to \eqref{2}, \eqref{2.0} and \eqref{2bis}, $\gamma=\gamma^\sigma_-(p)$ before the tangency point $T_1$, and $\gamma=\gamma^\sigma_+(p)$ after it (Fig. \ref{droiteter}). In the open interior $\mathcal I$ of $\mathcal R$,  $C(p,\gamma^\sigma_+)>0$ and $C(p,\gamma^\sigma_-)<0$; in the  open exterior $\mathcal E_-$ to the left of the tangency point $T_1$,  $C(p,\gamma^\sigma_{\pm})>0$; in the open exterior $\mathcal E_+$ to the right of $T_1$,  $C(p,\gamma^\sigma_{\pm})<0$. 
According to \cite[Section {4.2.5}]{DNNZ}, and the generalization thereof to PDE \eqref{eq1Fzzbis}, there exists then in $ \mathcal E_-\cup \mathcal I$ a supersolution to Eq. \eqref{eq1Fzzbis} of critical exponent \eqref{betaCgamma} $\beta(\gamma^\sigma_+)=\beta^\sigma_+(p)=\beta_1(p,q)$, such that the average integral means spectrum $\beta(p,q)$ is bounded below as $\beta(p,q)\geq \beta_1(p,q)$, whereas there exists in $\mathcal E_+$ a subsolution to \eqref{eq1Fzzbis} with the same critical exponent $\beta(\gamma_+^\sigma)=\beta_1(p,q)$, such that now $\beta(p,q)\leq \beta_1(p,q)$.
\end{proof}}
 \subsubsection*{i) Linear spectrum}\label{linspec}
\textcolor{black}{\begin{proposition}\label{theo:lin}Consider in the $(p,q)$ plane the upward wedge-like domain delimited by lines $D_0$ and $D_1$ interecting at $P_0$ (Figs. \ref{separ} and \ref{dfk}). In this domain, the average generalized integral means spectrum has the {\it linear} form $\beta_{\mathrm{lin}}$ \eqref{lin}.
\end{proposition}} 
\begin{proof}For this purpose, we consider the convex function,
\begin{equation}\label{convex} 
\mathfrak b(p,q):=\beta(p,q)-\beta_{\mathrm{lin}}(p,q),
\end{equation}
where $\beta$ is the true average generalized integral means spectrum. As all integral means spectra, $\beta(p,q)$, hence $\mathfrak b(p,q)$,  are convex  functions in $\mathbb R^2$, as can be seen by using H\"older's inequality (see also Section \ref{univ}). We know that the function $\mathfrak b$ \eqref{convex} is equal to $0$ on the boundary part of  the wedge sector along $D_0$, since $\beta=\beta_0=\beta_{\mathrm{lin}}$ there \textcolor{black}{(recall Definition \ref{D0D1})}.  On the other hand, \textcolor{black}{from Lemma \ref{lemmefondamental}}, we know that on the boundary part of  the same sector along $D_1$,  $\beta\leq \beta_1=\beta_{\mathrm{lin}}$, so that $\mathfrak b\leq 0$ there. We now use the Beliaev--Smirnov result asserting that $\mathfrak b=0$ on the intersection of the sector with the  B--S line $q=2p$. We will then conclude with the maximum principle satisfied by the convex function $\mathfrak b$ in the wedge sector.  To be precise, let us consider the bounded triangular domain, intersection of the wedge with the half-plane $q\leq M$, with $M$ a large positive number. We already know that  $\mathfrak b\leq 0$ on the vertical and oblique parts of the triangle's boundary. On the horizontal part, we know that at the end-points the convex function satisfies $\mathfrak b$ $\leq 0$, and is equal to $0$ at the interior point of intersection of the horizontal side with the line $q=2p$. It follows that $\mathfrak b$ is identically  $0$ on that side. We can then apply the maximum principle in the interior of the bounded triangle to conclude that $\mathfrak b=0$ there; letting $M\to+\infty$ gives the result in the whole domain.
\end{proof}
\subsubsection*{ii) Duplantier--Hastings--Beliaev--Smirnov spectrum}\label{BSgreenplum}
\textcolor{black}{\begin{proposition}\label{theo:DHBS}Consider the infinite domain of the $(p,q)$ plane above the infinite upper branch \eqref{3} of parabola $\mathcal G$ \eqref{seed2} located below point $P_0$, and to the left of the half-line $D_0$ above $P_0$ (Fig. \ref{dfk}). 
The average generalized integral means spectrum $\beta(p,q)$ is given in this domain by the standard bulk spectrum $\beta_0(p)$ for $p\geq -1-3\kappa/8$  or by  the tip version $\beta_{\mathrm{tip}}(p)$ in the opposite case.
\end{proposition}}
\begin{proof}We use the analysis performed in Ref. \cite{DNNZ}, Section 4.2.4, in the proof of Proposition 4.1 there.
Two important quantities are defined as
\begin{align}\label{a0}
a(p):=&\gamma_0(p)-\gamma_+^\sigma(p)=\gamma_0(p)-\frac{1}{\kappa}-\frac{1}{\kappa}\sqrt{1-2\sigma\kappa p},\\
\label{b0}
b(p):=&\gamma_0(p)-\gamma_-^\sigma(p)=\gamma_0(p)-\frac{1}{\kappa}+\frac{1}{\kappa}\sqrt{1-2\sigma\kappa p},
\end{align}
where $\gamma_0(p)=\gamma_0^-(p)$  is the B--S parameter of Eq. \eqref{p,q},
\begin{equation}\label{gamma0}
\gamma_0^{\pm}(p)= \frac{1}{2\kappa}\left( 4+\kappa \pm \sqrt{(4+\kappa)^2-8\kappa p} \right),\,\,\, p \leq \frac{(4+\kappa)^2}{8\kappa}.
\end{equation} The analysis in Refs. \cite{BS} and \cite{DNNZ} was originally meant to apply to $\sigma=\pm 1$, but it identically goes through for $\sigma \in \mathbb R$. It is shown there  that two (necessary and sufficient) conditions for the validity of the B--S method, hence of the corresponding spectra $\beta_0$ and $\beta_{\mathrm{tip}}$, are respectively:
\begin{align}\label{a0b0}
\frac{1}{2}-a(p)-b(p)\geq 0,\,\,\,\,\,\,\,
\frac{1}{2}-b(p)\geq 0.
\end{align}
The first condition is independent of $\sigma$ and simply amounts to $p\leq p_0(\kappa)=\frac{3(4+\kappa)^2}{32\kappa}$, i.e., to points $(p,q)$ belonging to the half-plane located to the left of $D_0$. The second condition can be rewritten as the inequality,
\begin{equation}\label{ineqb0}
\frac{1}{2}+\frac{2}{\kappa}-\gamma_0(p)-\gamma_+^\sigma(p)\geq 0.
\end{equation}
The equality case corresponds to the duality condition \eqref{duality} obeyed by $\gamma_0(p)$ and $\gamma_+^\sigma(p)$, as in Eq. \eqref{seed2}, the solution of which is precisely given by the branch \eqref{3} of the green parabola below $P_0$. Since  at the origin, $a_0(0)=b_0(0)=-1/\kappa$, the inequalities \eqref{a0b0} hold together in the unbounded domain to the left of the union of the left branch of $\mathcal G$ below $P_0$ and of the half-line $D_0$ above $P_0$. We thus conclude that the average generalized integral means spectrum $\beta(p,q)$ is given in this domain by $\beta_0(p)$ for $p\geq -1-3\kappa/8$  or by  $\beta_{\mathrm{tip}}(p)$ in the opposite case. 
\end{proof}
\subsubsection*{iii) Inside the green parabola}\label{ugp}

 \textcolor{black}{In Ref. \cite{BDZ}, a new method of proof was given, which establishes the validity of the D--H--BS spectrum along the $\sigma=+1$ line $q=2p$, below the transition point \eqref{pseconde0}. There $\beta_0(p) \leq \beta_1(p, 2p)\leq \beta_{\mathrm{tip}}(p)$, and the integral means spectrum is still given by  $\beta_{\mathrm{tip}}$; however avoiding the tip neighborhood in the integral mean reveals the existence of a phase transition to the $\beta_1$ spectrum. We now generalize this method to $\sigma \in \mathbb R$. 
 \begin{definition}\label{def:D}
 The domain $\mathcal D$, defined by the inequalities,
\begin{align}\label{a0b0bis}
0<\frac{1}{2}-a(p)-b(p),\,\,\,\,\,\,\,
 0 < b(p)-\frac{1}{2}< \frac{2}{\kappa}, 
 \end{align}
is the interior part of the green parabola $\mathcal G$ located to the left of $D_0$ \eqref{D0}.
 \end{definition}}
 \textcolor{black}{For general $\sigma$ in the $(p,q)$ plane, the first inequality in \eqref{a0b0bis} corresponds to the left of $D_0$, while the second set corresponds to the interior of the $\mathcal G$ parabola. Indeed, Eqs. \eqref{b0} and \eqref{gamma0} show that $b=(\gamma_0^-+\gamma^\sigma_+)-2/\kappa=1+4/\kappa-(\gamma_0^++\gamma^\sigma_-)$. Thus, the r.h.s. of \eqref{a0b0bis} is equivalent to $\gamma_0^-+\gamma^\sigma_+ \geq 1/2+2/\kappa$ and $\gamma_0^++\gamma^\sigma_->1/2+2/\kappa$. The first inequality is saturated on the top branch \eqref{3} of the green parabola \eqref{seed2} (the phase transition line), while the second one is saturated on the lower branch \eqref{1}. The domain $\mathcal D$ is closed to the right by the segment of line $D_0$ in between the intersection points $P_0$ \eqref{P0} and $P_2:=\big(\frac{3(4+\kappa)^2}{32\kappa},\frac{4+\kappa}{16}\big)$ of $D_0$ with the green parabola (Fig. \ref{dfk}).
 \begin{proposition}\label{theo:DD}
 In domain $\mathcal D$, the average generalized integral means spectrum is given by $\beta(p,q)=\max \{\beta_{\mathrm{tip}}(p),\beta_1(p,q)\}$. 
 \end{proposition}}
\begin{corollary}\label{wedge} 
\textcolor{black}{Since in $\mathcal D$ the blue quartic $\mathcal Q$ below $Q_0$ is the separatrix for $\beta_{\mathrm{tip}}(p)=\beta_1(p,q)$, the average generalized integral means spectrum is given  by $\beta_{\mathrm{tip}}(p)$ in the infinite thin wedge $\mathcal W$  in $\mathcal D$ of apex $Q_0$, located to the left of line $D'_0$, inbetween the green parabola $\mathcal G$ and the blue quartic $\mathcal Q$, and by $\beta_1(p,q)$ in the remaining part $\mathcal D\setminus \mathcal W$ of $\mathcal D$ (Fig. \ref{dfk}).}
\end{corollary}
\begin{proof} 
\textcolor{black}{The extension to $\sigma \in \mathbb R$ of the proof given in Ref. \cite{BDZ} for $\sigma=+1$ and for $|z|>1$, involves, now for $z\in \mathbb D$ and $u:=|1-z|^2$, the function (see \cite[Eq. (4.1)]{BDZ}):
\begin{equation}\label{eq:psifin}
\psi:=\varsigma \psi_0+\psi_1:= \varsigma g_0(u)u^{\gamma_0} (1-z\bar z)^{-\beta_0} + u^{\gamma_1}(1-z\bar z)^{-\beta_1},
\end{equation}
where we set $\gamma_0=\gamma_0^{-}(p)$ \eqref{gamma0} with $\beta_0:=\beta_0(p)$ \eqref{bulk}, together with $\gamma_1:=\gamma^\sigma_+(p)$ \eqref{gammasigma} and $\beta_1=\beta^\sigma_+(p)=\beta_1(p,q)$ \eqref{betapgammasigma}. The function $g_0$ is a peculiar combination  of two hypergeometric functions, parameterized by $a$ \eqref{a0} and $b$ \eqref{b0}, with $\varsigma:=\sign g_0(0)$ (see \cite[Eqs. (3.5)--(3.7)]{BDZ}),
\begin{equation}\label{eq:g0}
g_0(u)=C_0\,\F(a,b,c,{u}/4)-C_0'\, ({u}/4)^{1/2-a-b}\F(a',b',c',{u}/4),
\end{equation} 
where
\begin{align}
\label{eq:abc}
&a=\gamma_0-\gamma_1=\gamma_0-\gamma^\sigma_+, \quad b=\gamma_0-\gamma_-^\sigma, \quad c=\frac{1}{2}+a+b,\\ \nonumber
&a'=\frac{1}{2}-a, \quad b'=\frac{1}{2}-b, \quad c'=\frac{1}{2}+a'+b'.
\end{align}
The hypergeometric functions are singular at ${u}=4$, but  the coefficients $C_0$ and $C_0'$ are  chosen in such a way that the function $g_0$ is  smooth at $z=-1$ \cite{BS}:
\begin{equation}
\label{eq:C0}
g_0(0)=C_0=\frac{\Gamma(3/2-a-b)}{\Gamma(1/2-a)\Gamma(1/2-b)},\,\,\,C_0'=\frac{\Gamma(c)}{\Gamma(a)\Gamma(b)},
\end{equation}
such that near $u=4$,
$g_0(u)= \frac{1}{\sqrt{\pi}}\left(\frac{1}{2}-a-b\right)+ O(4-u).$
This, together with the condition $1/2-a-b>0$ in $\mathcal D$, insures that the function $g_0$ stays bounded on the whole interval $u\in[0,4]$.}

\textcolor{black}{One further defines 
\begin{equation}\label{elldelta}
\ell_\delta:=(-\log(1-z\bar z))^\delta,\,\,\, \delta \in \mathbb R .
\end{equation} As in the original Ref. \cite{BS}, the proof then consists in showing that $\psi \ell_\delta$ provides a positive sub- or supersolution of the differential operator \eqref{eq1zz}, depending on the sign of $\delta$.}

\textcolor{black}{The key steps are Lemmas 4.1 and 4.2 in Ref. \cite{BDZ}. {\it Mutatis mutandis}, these Lemmas and their proofs are as follows for general $\sigma$ in the $(p,q)$ plane. 
\begin{lemma}\label{lemma4.1}
For $(p,q)\in \mathcal D$, there is $r_0<1$ such that $\psi=\varsigma\psi_0+\psi_1>0$ for all $z$ such that $r_0<|z|<1$.
\end{lemma}
\begin{proof}
According to Ref. \cite[Lemma 4.1]{BDZ}, the condition for this to hold is $\beta_1>\beta_0$; this is satisfied in the intersection of the interior domains of the green parabola $\mathcal G$ and of the red parabola $\mathcal R$, and $\mathcal D$ is contained in this intersection.
\end{proof}
Along the top branch \eqref{3} of the green parabola \eqref{seed2} (the onset of the phase transition), one has $b=1/2$, and  this precisely corresponds to $g_0(0)=0$ in Eq. \eqref{eq:C0}. Actually, this \textcolor{black}{vanishing}, which is critical to the proof below,  also happens for  half-integer values  $b=n+1/2, n\in \mathbb N$.  Proposition \ref{pr:tkn} below shows that there indeed exist a finite set of integers $ \mathcal J_\kappa$, and a discrete set $\mathcal T_\kappa$ of parabolic trajectories in the $(p,q)$-plane, where  this is realized. We need the following 
\begin{definition}\label{tkappa}
For $n\in \mathbb N$, define $\mathcal P_{n}$ as the parabola given in parametric equations by
\begin{equation}\label{Pn}\begin{split}&p=p_n(\gamma)=-\frac{\kappa}{2}\gamma^2+\kappa n\gamma+\frac{\kappa}{2}\left(\frac{(4+\kappa)^2}{4\kappa^2}-n^2\right),\,\,\,\gamma\in \mathbb R,\\
&q=q_n(\gamma)=-\kappa\gamma^2+(\kappa n+1)\gamma+\frac{\kappa}{2}\left(\frac{(4+\kappa)^2}{4\kappa^2}-n^2\right); 
\end{split}
\end{equation}
note that for $n=0$, $\mathcal P_0=\mathcal G$. Define then $\mathcal P^+_{n}$ as the branch corresponding to $\gamma\geq \max\{n,\kappa^{-1}\}$. 
\end{definition} For $n=0$, $\mathcal P^+_0$ coincides with the upper branch of the green parabola $\mathcal G$ stemming from its   tangency point to $\Delta_1$. For $\kappa n<1$, the branch originates from the point of tangency of $\mathcal P_n$ to $\Delta_1$ \eqref{Delta1}, whereas for $\kappa n>1$, it originates from the point of tangency to $\Delta_0$ \eqref{Delta0}. For $\kappa n=1$, it degenerates into the bisector of $\Delta_0$ and $\Delta_1$, $q=2p-p(\kappa)$ with $p(\kappa)={(6+\kappa)(2+\kappa)}/{8\kappa}$. (See Fig. \ref{tram}.)
\begin{figure}[htbp]\label{tram}
\begin{center}
\includegraphics[angle=0,width=.63290\linewidth]{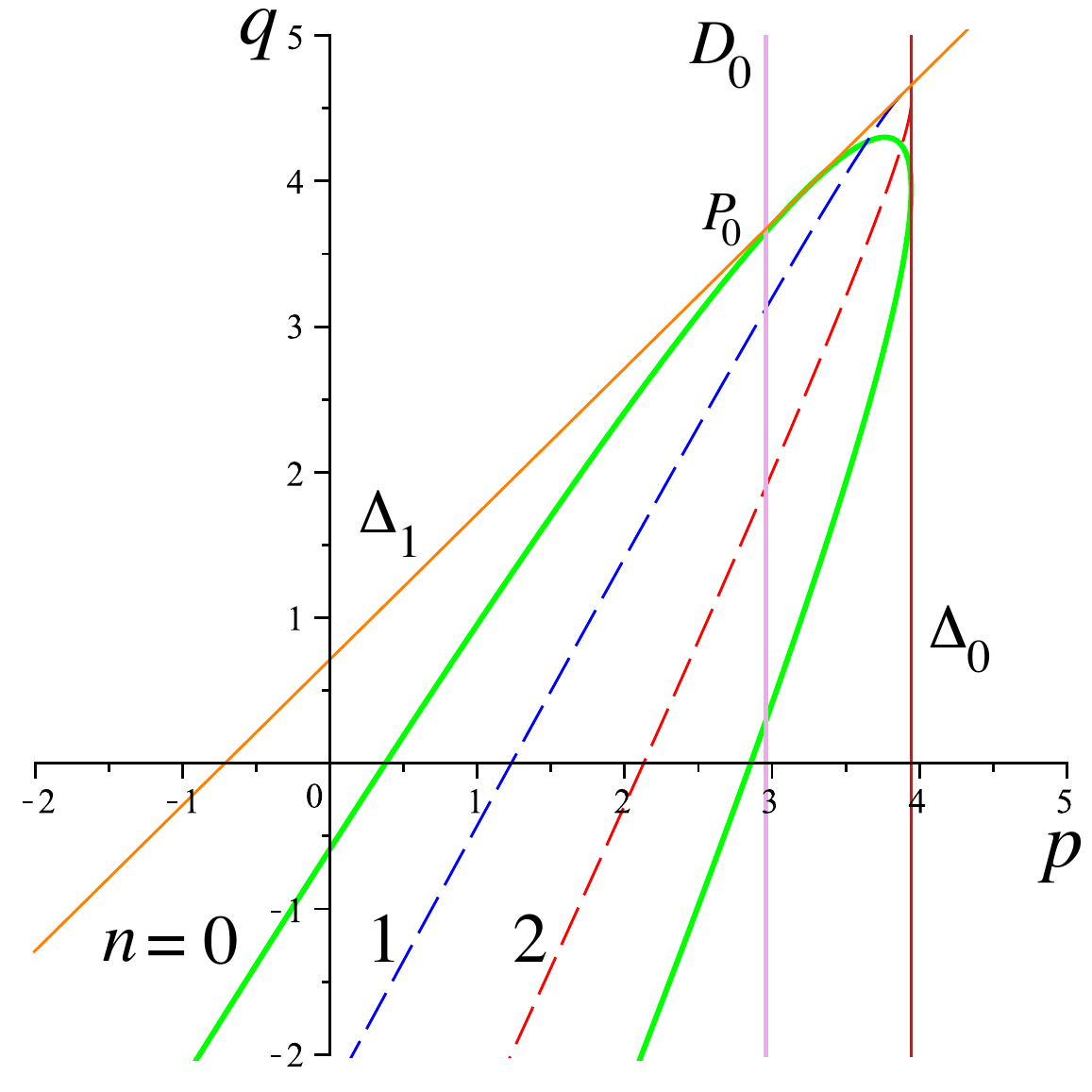}
\caption{{\it \textcolor{black}{Green parabola $\mathcal G=\mathcal P_0$, and parabolic branches $\mathcal P^+_n$ for $n\in \{0,1,2\}$, and $\kappa=0.7$. They cross into domain $\mathcal D$ at the intersection points \eqref{eq:pn} with $D_0$. 
}}}
\end{center}
\end{figure}
\begin{lemma}\label{pr:tkn}
The set of equations, $b=n+\frac{1}{2}, n\in \mathbb N$, where $g_0(0)=0$, is realized on the set of parabola branches $\mathcal P^+_{n}$ as defined in Def. \ref{tkappa}. Inside domain $\mathcal D$,  this yields the set $\mathcal T_\kappa:=\cup_{n \in \mathcal J_\kappa} \mathcal P^+_{n}\cap\mathcal D$, where $\mathcal J_\kappa:=\{n\in \mathbb N, 0\leq n\leq \lfloor2\kappa^{-1}\rfloor\}$. 
\end{lemma}
\begin{proof} 
From definition \eqref{b0} or \eqref{eq:abc} of $b$, we have 
$$b=\gamma_0^-+\gamma_+^\sigma-\frac{2}{\kappa}=\gamma_0+\gamma_1-\frac{2}{\kappa},$$
and we get  the condition $\gamma_1+\gamma_0=n+\frac{1}{2}+\frac{2}{\kappa}$, together with, as in \eqref{seed2}, $A^\sigma(p,\gamma_1)=0,C(p,\gamma_0)=0$. This yields the parabola equations \eqref{Pn} for $\mathcal P_n$, where $\gamma=\gamma_1$.  The root determinations $\gamma_1=\gamma_+^\sigma$ and $\gamma_0=\gamma_0^-$ then determine the range $\gamma=\gamma_1>\max\{n,1/\kappa\}$ of branch $\mathcal P_n^+$. The restriction on the values of $n$ in $\mathcal J_\kappa$ directly comes from definition \eqref{a0b0bis} of $\mathcal D$.
\end{proof}
\begin{remark}
Note that  for $\kappa >2$, $\mathcal T_\kappa$ is reduced to the $n=0$ phase transition line on $\mathcal G$, whereas positive values of $n$ exist in $\mathcal J_\kappa$ only for $0<\kappa\leq 2$. 
\end{remark}
\begin{remark}
The parabolic branch $\mathcal P^+_n$ intersects $D_0$ for $\gamma= \gamma^{(n)}:=n+\frac{4+\kappa}{4\kappa}$,  at 
\begin{equation} 
\label{eq:pn}p_n=p_0=\frac{3}{32\kappa}(4+\kappa)^2, q_n=\frac{(4+\kappa)(8+\kappa)}{16 \kappa}-\frac{\kappa}{2}n\left(n+\frac{1}{2}\right),
\end{equation}
its intersection with $\mathcal D$ corresponding to the range $\gamma> \gamma^{(n)}$.
\end{remark}
\begin{remark} The  parabolic branch $\mathcal P^+_n$ intersects the B--S line $q=2p$ only when $0\leq n\leq \lfloor\kappa^{-1}\rfloor$, and their intersection is given by 
\begin{equation} 
\label{eq:tkn}
 p''_{n}(\kappa):=-\frac{(1+2n)(8+\kappa-2n\kappa)(4+\kappa+2n\kappa)(4+\kappa-2n\kappa)}{128(1-n\kappa)^2},
\end{equation}
in agreement with Ref. \cite[Eq. (3.18)]{BDZ}. The $n=0$ case corresponds, for any value of $\kappa$, to the phase transition point $p''_0(\kappa)$ \eqref{pseconde0}, as studied in Ref. \cite{BDZ}.
\end{remark}
\begin{lemma}\label{subsupD}
For $(p,q)\in \mathcal D$ and $(p,q) \notin \mathcal T_\kappa$, there is $r_0<1$ such that  $\mathcal P(D) [\psi \ell_{\delta}]$ for $\psi$ \eqref{eq:psifin} has a constant sign  in the annulus $ r_0< |z|<1$, which depends only on that of $\delta$.
\end{lemma}}
\begin{proof}
\textcolor{black}{We follow the proof of the similar \cite[Lemma 4.2]{BDZ}, which distinguishes three cases according to the relative rates at which  $r:=|z| \to 1$ and $u=|1-z|^2\to 0$, with $1-r\leq u^{1/2}$. In case I, $u$ is bounded away from zero, and the only requirement is $\beta_0(p)< \beta_1(p,q)$, which here holds in domain $\mathcal D$.  In case II, one assumes that $(1-r)^{2-\varepsilon}<u<u_0$, with $\varepsilon >0$, and $u_0>0$ chosen such that $\varsigma g_0(u) >0$ for $0<u<u_0$. This $u_0$ exists here thanks to $(p,q) \notin \mathcal T_\kappa$ and Lemma \ref{pr:tkn}. In case III, one assumes that $1-r>u^{1/2+\varepsilon}$, for some $\varepsilon>0$ to be determined such that the following bound holds,  $\psi_1\leq \varsigma\psi_0 u^{1/2}$ \cite[Eq. (4.3)]{BDZ}. For that, the proof of \cite[Lemma 4.2]{BDZ}  makes a crucial use of the inequality $\beta_1<\beta_{\mathrm{tip}}$. Here, this is precisely valid in the wedge $\mathcal W \subset \mathcal D$ (recall Corollary \ref{wedge}), which thus establishes Lemma \ref{subsupD} in $\mathcal W$. In $\mathcal D\setminus \mathcal W$, we need a different, more general argument.}  

\textcolor{black}{For functions $A$ and $B$ of $r$, we shall use the short-hand notations, $A \lesssim B$ for $A \leq c B$ with $c$ some positive constant, and $A \approx B$ when both $A \lesssim B$ and $A\gtrsim B$ hold. We start with 
$$
\psi_1\approx (1-r)^{-\beta_1}{u}^{\gamma_1}=(1-r)^{-\beta_0}{u}^{\gamma_0}{u}^{\gamma_1-\gamma_0}(1-r)^{\beta_0-\beta_1}\approx \varsigma \psi_0 {u}^{\gamma_1-\gamma_0}(1-r)^{\beta_0-\beta_1}.
$$
Note that the last estimate requires that $g_0(0)\neq 0$, hence the condition $(p,q)\notin \mathcal T_\kappa$ of Lemma \ref{subsupD}. We expect ${u}^{\gamma_1-\gamma_0}(1-r)^{\beta_0-\beta_1}$ to be bounded by some positive power of $u$. 
Recall that in $\mathcal D$, $\beta_0-\beta_1<0$. Hence, for case III where $1-r>u^{1/2+\varepsilon}$,
$$
u^{\gamma_1-\gamma_0}(1-r)^{\beta_0-\beta_1}<u^{\gamma_1-\gamma_0+(1/2+\varepsilon)(\beta_0-\beta_1)}.
$$
Recalling notation \eqref{betaCgamma}, we have 
$$\beta_0-\beta_1=\beta(\gamma_0)-\beta(\gamma_1)=(\gamma_0-\gamma_1)\big(\kappa(\gamma_0+\gamma_1)-\left(2+{\kappa}/{2}\right)\big),$$
which yields 
$
u^{\gamma_1-\gamma_0}(1-r)^{\beta_0-\beta_1}<u^{\alpha},
$ 
with 
\begin{align*}
\alpha&:=(\gamma_1-\gamma_0)\left[1-(1/2+\varepsilon)\big(\kappa(\gamma_0+\gamma_1)-\left(2+{\kappa}/{2}\right)\big)\right]\\
&=(\gamma_1-\gamma_0)\big[1-\kappa\left({1}/{2}+\varepsilon\right)\left(b-{1}/{2}\right)\big].
\end{align*}
Recall that $\gamma_1-\gamma_0>0$ and  $b-1/2<2/\kappa$ by Definition \eqref{def:D} of $\mathcal D$, so that for sufficiently small $\varepsilon$, we have $\alpha >0$. Hence for this $\varepsilon$, we have that
\begin{equation}\label{eq:psi0psi1}
\psi_1\lesssim \varsigma \psi_0 {u}^{\alpha},\,\,\,\alpha >0.
\end{equation}
As in the proof of \cite[Lemma 4.2]{BDZ}, this estimate insures that Lemma \ref{subsupD} holds  in case III.}
\end{proof}
\textcolor{black}{To conclude the proof of Proposition \ref{theo:DD}, Lemmas \ref{lemma4.1} and \ref{subsupD} show that for $(p,q) \notin \mathcal T_\kappa$, $\psi\ell_\delta$ provides both positive sub- and supersolutions depending on the sign of $\delta$, and the generalized spectrum $\beta(p,q)$ can then be read off from the explicit form \eqref{eq:psifin} of $\psi$. The spectrum function $\beta(p,q)$ is extended by  convexity, hence continuity on $\mathcal T_\kappa$, which  by Lemma \ref{pr:tkn} has no interior.}
\end{proof} 

 \textcolor{black}{\subsubsection*{iv) Between the green and red parabolae}
We now concentrate on the infinite domain $\tilde{\mathcal D}$ located between the top branch \eqref{3} of the green parabola $\mathcal G$ stemming from point $P_0$ and the right branch of the red parabola $\mathcal R$ \eqref{seed1} starting from $P_0$ (Fig. \ref{dfk}). Ref. \cite[Theorem 1.4]{DNNZ},  states the validity of the spectrum $\beta_1(p,0)$ on the $q=0$ line,  i.e., for $\sigma=-1$, between the transition point $p^*(\kappa)$ \eqref{pstar}, which is the (left) intersection point of the green parabola $\mathcal G$ with that line,  up to the point $\min \{\hat p(\kappa), p(\kappa)\}$,  where $\hat p(\kappa):=1+\kappa/2$, and where $p(\kappa):=(2+\kappa)(6+\kappa)/8\kappa$ is the rightmost intersection point of the red parabola $\mathcal R$ with the $q=0$ line.}  
\textcolor{black}{\begin{proposition}\label{theo:racket}
Define $\tilde{\mathcal D}$ as the infinite domain located between the left branch \eqref{3} of the green parabola $\mathcal G$ stemming from point $P_0$ and the right branch of the red parabola $\mathcal R$ \eqref{seed1} starting from $P_0$. Define $\hat{\mathcal D}\subset \tilde{\mathcal D}$ as the racket-shaped finite domain located above the straight line $D_3$ of equation $p-q= \hat p(\kappa)=1+\frac{\kappa}{2}$, which intersects  $\mathcal G$ at point $Q_0$ and  $\mathcal R$ at point $P_3:=(1+\frac{2}{\kappa},\frac{4-\kappa^2}{2\kappa})$  (Fig. \ref{dfk}). In $\hat{\mathcal D}$, the average generalized integral means spectrum is $\beta_1(p,q)$.
\end{proposition}}
\begin{proof}
 \textcolor{black}{The proof given for $\sigma=-1$ in \cite[Section 4.2.7]{DNNZ},  was based on the key {\it duality} property  \eqref{duality}  of function  \eqref{betaCgamma}, i.e.,  $\beta(\gamma)=\beta(\gamma')$ for $\gamma'=1/2+2/\kappa-\gamma$. 
  By using the general $\sigma$-formalism set up in Ref. \cite[Sections 4.2.1 -- 4.2.3]{DNNZ}, the same proof applies word for word to the  $\sigma\in \mathbb R$ case. It suffices to replace everywhere the quantities there, $\gamma_{\pm}=\frac{1}{\kappa}(1\pm \sqrt{1+2\kappa p})$, by \eqref{gammasigma}, $\gamma_{\pm}^\sigma=\frac{1}{\kappa}(1\pm \sqrt{1-2\kappa \sigma p}).$ Instead of the test function  \eqref{eq:psifin}, \eqref{eq:g0} as above, use is made of 
\begin{equation}\label{eq:psifinbis}
\psi:=  g(u)(1-z\bar z)^{-\beta(\gamma)},
\end{equation}
where  $g$ is the weighted combination  of two hypergeometric functions,
\begin{equation}\label{eq:g0bis}
g(u)=({u}/4)^{\gamma}\F(a,b,c,{u}/4)-({C_0'}/{C_0})\, ({u}/4)^{\gamma'}\F(a',b',c',{u}/4),
\end{equation} 
where now $\gamma$ is a free parameter, taken such that $\gamma>\gamma_1=\gamma^\sigma_+(p)$ \eqref{gammasigma}, and   
\begin{align}
\label{eq:abcbis}
&a=\gamma -\gamma_1=\gamma -\gamma^\sigma_+, \quad b=\gamma -\gamma_-^\sigma, \quad c=\frac{1}{2}+a+b,\\ \nonumber
&a'=\frac{1}{2}-a, \quad b'=\frac{1}{2}-b, \quad c'=\frac{1}{2}+a'+b'.
\end{align}
At the end of the argument, one lets $a\to 0^+$, $\gamma \to \gamma_1=\gamma^\sigma_+$,  so that $\beta(\gamma)\to \beta(\gamma_1)=\beta_1(p,q)$ \eqref{betapgammasigma}. 
}
 
  \textcolor{black}{The proof given in Section 4.2.7 of Ref. \cite{DNNZ} holds provided that two simultaneous conditions are fulfilled, $C(p,\gamma_+)> 0$ and $C(p,\gamma'_+)< 0$, where $\gamma_+':=\frac{1}{2}+\frac{2}{\kappa}-\gamma_+$ is the dual value of $\gamma_+$. These two conditions precisely gave $p^*(\kappa)<p<p(\kappa)$. By definition, on the branch \eqref{2}, \eqref{2bis} of the red parabola $\mathcal R$ \eqref{seed1}, hence on its right branch starting from $P_0$, we have $C(p,\gamma^\sigma_+)=0$. On the branch \eqref{3} of the green parabola $\mathcal G$ \eqref{seed2},  hence on its left branch  starting from $P_0$, we have $C(p,{\gamma^\sigma_+}')=0$, where ${\gamma^\sigma_+}'$ is the dual value of ${\gamma^\sigma_+}$. By continuity, since $C(p,\gamma^\sigma_+)>0$ to the right of  $\mathcal G$ on the $q=0$, $\sigma=-1$ line, and $C(p,{\gamma^\sigma_+}')<0$ to the left of  $\mathcal R$ on the same  line, the generalized proof conditions $C(p,\gamma^\sigma_+) > 0$ and $C(p,{\gamma^\sigma_+}') < 0$ simultaneously hold  in the domain $\tilde{\mathcal D}$ of the $(p,q)$-plane, as defined in Proposition \ref{theo:racket}.}
  
    \textcolor{black}{Lastly, the rationale for the occurrence of $\min \{\hat p(\kappa), p(\kappa)\}$ in  \cite[Theorem 1.4]{DNNZ} was the technical requirement that the dual value $\gamma_+'$ of $\gamma_+$  
  is such that there is no dual tip contribution, i.e., that it satisfies $2\gamma'_++1\geq 0$, yielding the auxiliary condition $p\leq \hat p(\kappa)$. 
 Since $\sigma=q/p-1$, we simply have $\gamma_{\pm}^\sigma(p)=\gamma_{\pm}(p-q)$, and the condition in the $(p,q)$-plane is now for a general value of $\sigma$,             
 \begin{equation}\label{D_3}
p-q\leq \hat p(\kappa)=1+\frac{\kappa}{2}.
\end{equation}
This restricts the domain of validity of the argument to the part $\hat {\mathcal D}$ of $\tilde{\mathcal D}$ located above the straight line $D_3$ of equation $p-q= \hat p(\kappa)$ (Fig. \ref{dfk}), and concludes the proof of the validity of the spectrum $\beta_1(p,q)$ \eqref{betapq} in $\hat{\mathcal D}$.}
\end{proof}
\noindent{\bf Proof of Theorem \ref{propseparbis}.} \textcolor{black}{\begin{proof} Propositions \ref{theo:lin} in item {\it i)}, \ref{theo:DHBS} in {\it ii)}, \ref{theo:DD} and its corollary \ref{wedge} in {\it iii)}, and \ref{theo:racket} in {\it iv)},  altogether establish the validity of the various spectra given in Theorem \ref{propseparbis} and Fig. \ref{separ}, in the following maximal domain of the $(p,q)$-plane. It is in Figure \ref{dfk} the infinite domain above the frontier line made by the union  of the lower branch of the green parabola up to its intersection point with $D_3$, obtained for $\gamma'=-1$ in Eq. \eqref{C+dual} as $\big((4-\kappa)(4+3\kappa)/8\kappa,(16-7\kappa^2)/8\kappa\big)$, of the segment of $D_3$ from that point up to $P_3$, of the branch of the red parabola $\mathcal R$ between $P_3$ and $P_0$, and finally of $D_1$ from $P_0$ up to infinity. This concludes the proof of Theorem \ref{propseparbis}.\end{proof}}

\subsection{$m$-fold spectrum}\label{mfoldspectrum}
For $m\geq 1$, the generalized integral means spectrum $\beta^{[m]}(p,q;\kappa)$, associated with the $m$-fold transform $f^{[m]}$ of the SLE whole-plane map, can be directly derived from the analysis given in Section \ref{mfold}. Definition \ref{def:gims} and identities  \eqref{midentity} and \eqref{qm} immediately imply that
\begin{equation}\label{bmb1} \begin{split}
&\beta^{[m]}(p,q;\kappa)=\beta^{[1]}(p,q_m;\kappa),\\
&q_m=q_m(p,q)=\left(1-{1}/{m}\right)p+{q}/{m},
\end{split}
\end{equation}
 where $\beta^{[1]}(p,q;\kappa):=\beta(p,q;\kappa)$ is the $m=1$ average generalized integral means spectrum of whole-plane $\SLE_\kappa$ studied above. As we shall see in Section \ref{mneg}, Definition \ref{def:gims} can be formally extended to $m\leq -1$, and the resulting $\beta^{[m]}$  
 still obeys \eqref{bmb1}. 
\textcolor{black}{\subsubsection{Standard $m$-fold integral means spectrum} Let us first focus on the $m$-fold \emph{standard}  spectrum, i.e., on the $q=0$ case, for which 
\begin{equation} \label{q0qm}
q_m(p,0)=\left(1-1/m\right)p, \,\,\, m\in \mathbb Z\setminus\{0\}, p\in \mathbb R.
\end{equation}
This defines a line $q=q_m(p,0)$ in the original $(p,q)$ plane. 
 For the $\beta_1$ part of the spectrum, this yields
 \begin{align} \label{betam0} 
 \beta_1\big(p,(1-1/m)p;\kappa\big)
 =\left(1+\frac{2}{m}\right)p-\frac{1}{2}-\frac{1}{2}\sqrt{1+\frac{2\kappa p}{m}},
\end{align}
in agreement with the result obtained in \cite[Eq. (22)]{DNNZ}. Note, however, that for $m=-2$ this function is negative and cannot appear in the spectrum.}
\begin{figure}[htbp]
\begin{center}
\includegraphics[width=.45\linewidth]{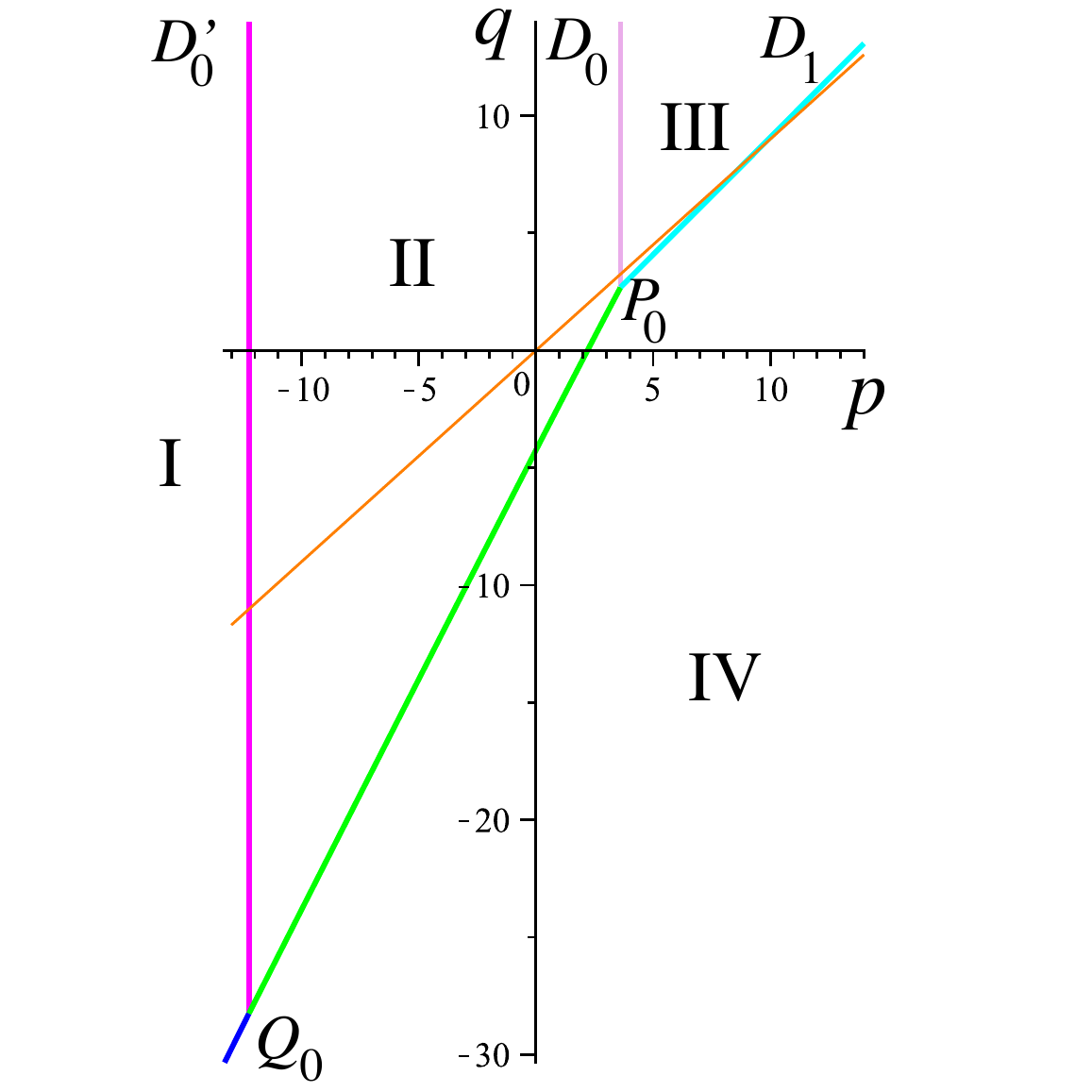}
\hskip1cm
\includegraphics[width=.45\linewidth]{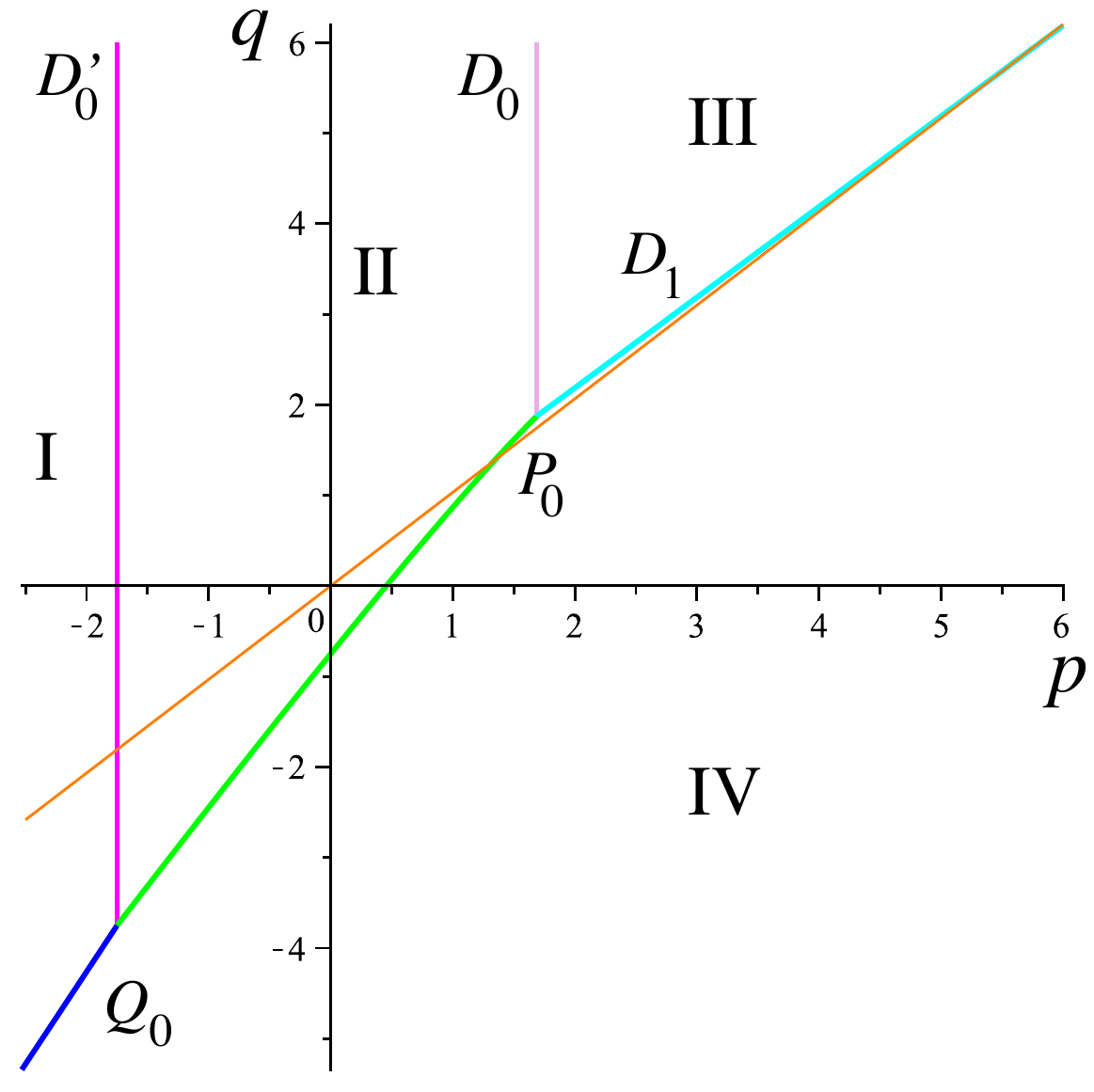}
\caption{{\it Phase diagram for the $m$-fold whole-plane $\SLE_{\kappa}$ in the $(p,q_m)$-plane, along trajectories \eqref{q0qm} (coral color). 
 Left:  For $m=+10$ and $\kappa=30$, the line successively crosses  domains I, II, III and IV. Right:   For $m=-30$ and $\kappa=2$, it successively crosses domains I, II, IV and III.}}
\label{figpqmline}
\end{center}
\end{figure}

Let us consider the point $P_0$ whose coordinates $(p_0,q_0)$ are given by \eqref{P0}. The line $OP_0$ has slope
$\frac{q_0}{p_0}=
1-\frac{1}{3}\frac{\kappa-4}{\kappa+4},$
a quantity that decreases from $4/3$ to $2/3$ as $\kappa$ runs from $0$ to $\infty$ and takes the value $1$ for $\kappa=4$. 

In the case $\kappa>4$, it is thus possible to find $m>0$ such that $\frac{q_0}{p_0}\leq 1-\frac{1}{m}<1.$  \textcolor{black}{This holds for $m\geq m(\kappa)$, with the definition, 
\begin{equation}\label{m+}\begin{split}
&m(\kappa):={3}\frac{\kappa+4}{\kappa-4};\,\,\, m({16}/{\kappa})=-m(\kappa),\\ &\kappa>4, m(\kappa) >3;\,\,\,0<\kappa <4, m(\kappa)<-3.
\end{split}\end{equation}} The line \eqref{q0qm} then first intersects the vertical line $p=p_0$ {\it above} point $P_0$, and  the line $D_1$ of unit slope afterwards (Fig. \ref{figpqmline}, left). So in this case the standard $m$-fold integral means spectrum $\beta^{[m]}(p):=\beta^{[m]}(p,0)$, for $p \in \mathbb R$, has four phases, $\beta_{\mathrm{tip}}$, $\beta_{0}$, $\beta_{\mathrm{lin}}$ and $\beta_1$ \eqref{betam0}. 
In the second case, $\kappa <4$, it is possible to find $m<0$ such that 
$1<1-\frac{1}{m}\leq\frac{q_0}{p_0},$ which is equivalent to $m\leq m(\kappa)<-3$. 
The line \eqref{q0qm} then crosses the line $p=p_0$ at a point {\it below} point $P_0$. It follows that it first crosses the green parabola and  afterwards crosses the line $D_1$, after which the spectrum becomes linear (Fig. \ref{figpqmline}, right). So the standard $m$-fold spectrum again has four phases, but now in the order $\beta_{\mathrm{tip}}$, $\beta_{0}$, $\beta_1$ \eqref{betam0} and $\beta_{\mathrm{lin}}$ for $p\in\mathbb R$. 
 \textcolor{black}{For the critical case $\kappa=4$, $\frac{q_0}{p_0}=1$, and for $m \geq 1$, we observe the ordered set $\beta_{\mathrm{tip}}$, $\beta_{0}$, $\beta_1$, or for $m \leq -1$, the set $\beta_{\mathrm{tip}}$, $\beta_{0}$ and $\beta_{\mathrm{lin}}$.}
 
\textcolor{black}{We finally obtain the following table for the successive expressions taken by $\beta^{[m]}(p)$ for $p\in \mathbb R$.
\bigskip
\begin{center}
\begin{tabular}{|c|c|c|c|c|c|c|}
\hline
\multirow{2}{*}{$\kappa>4$} & \multicolumn{2}{c|}{$m\leq -1$} &
\multicolumn{2}{c|}{$1\leq m\leq m({\kappa})$} & \multicolumn{2}{c|}{$3<m({\kappa})\leq m$}\\
\cline{2-7}
& \multicolumn{2}{c|}{$\beta_{\mathrm{tip}}, \beta_0, \beta_{\mathrm{lin}}$} & \multicolumn{2}{c|}{$\beta_{\mathrm{tip}}, \beta_0, \beta_1$} & \multicolumn{2}{c|}{$\beta_{\mathrm{tip}}, \beta_0, \beta_{\mathrm{lin}}, \beta_1$}\\
\hline
\multirow{2}{*}{$\kappa=4$} & \multicolumn{2}{r}{$m\leq -1$} &  \multicolumn{2}{c}{$\vert$} & \multicolumn{2}{l|}{$1\leq m$}\\
\cline{2-7}
& \multicolumn{2}{r}{$\beta_{\mathrm{tip}}, \beta_0, \beta_{\mathrm{lin}}$} &  \multicolumn{2}{c}{$\vert$} & \multicolumn{2}{l|}{$\beta_{\mathrm{tip}}, \beta_0, \beta_1$}\\
\hline
\multirow{2}{*}{$\kappa<4$} & \multicolumn{2}{c|}{$m\leq m({\kappa})<-3$}  & \multicolumn{2}{c|}{$m({\kappa})\leq m\leq -1$} & \multicolumn{2}{c|}{$1\leq m$}\\
\cline{2-7} 
& \multicolumn{2}{c|}{$\beta_{\mathrm{tip}}, \beta_0, \beta_1, \beta_{\mathrm{lin}}$} & \multicolumn{2}{c|}{$\beta_{\mathrm{tip}}, \beta_0, \beta_{\mathrm{lin}}$} & \multicolumn{2}{c|}{$\beta_{\mathrm{tip}}, \beta_0, \beta_1$}\\
\hline
\end{tabular}
\end{center}}
\textcolor{black}{\begin{remark}
In the three cases $m\in\{-1, -2,-3\}$, and for any $\kappa>0$, only the usual spectra $\beta_{\mathrm{tip}}$, $\beta_{0}$, and $\beta_{\mathrm{lin}}$ appear in the $m$-fold standard integral means spectrum $\beta^{[m]}(p)$.
\end{remark}}  

 
The discussion shows that the spectrum $\beta_1$ (at $(p, q_m)$) \eqref{betam0} may appear in $\beta^{[m]}(p)$ even if the boundary of the SLE  image domain is {\it bounded}. This indeed happens \textcolor{black}{for $m \leq -4$ and $\kappa$ small enough:} for negative $m$, the $m$-fold transform of the outer whole-plane SLE  is then conjugate by $z\mapsto 1/z$ of the $(-m)$-fold transform of the inner whole-plane $\SLE$, which gives rise to an univalent function map onto a domain with bounded boundary. In this case, the appearance of the $\beta_1$ spectrum is due to a higher $|m|$-fold branching at the origin for $\kappa<4$ and $|m|\geq |m(\kappa)|$.

\subsubsection{Generalized $m$-fold integral means spectrum ($m\geq 1$)} 
Let  now $T_m=\bigl(\begin{smallmatrix}
1&0\\ 1-1/m&1/m
\end{smallmatrix} \bigr)$ be the endomorphism of $\mathbb R^2$ defined by $T_m(p,q)=(p,q_m)$, with inverse $T_m^{-1}=\bigl(\begin{smallmatrix}
1&0\\ 1-m&m
\end{smallmatrix} \bigr)$. 
The separatrix lines for the $m$-fold case are the images by $T_m^{-1}$ of those for $m=1$.  
\begin{figure}[htbp]
\begin{center}
\includegraphics[width=.55\linewidth]{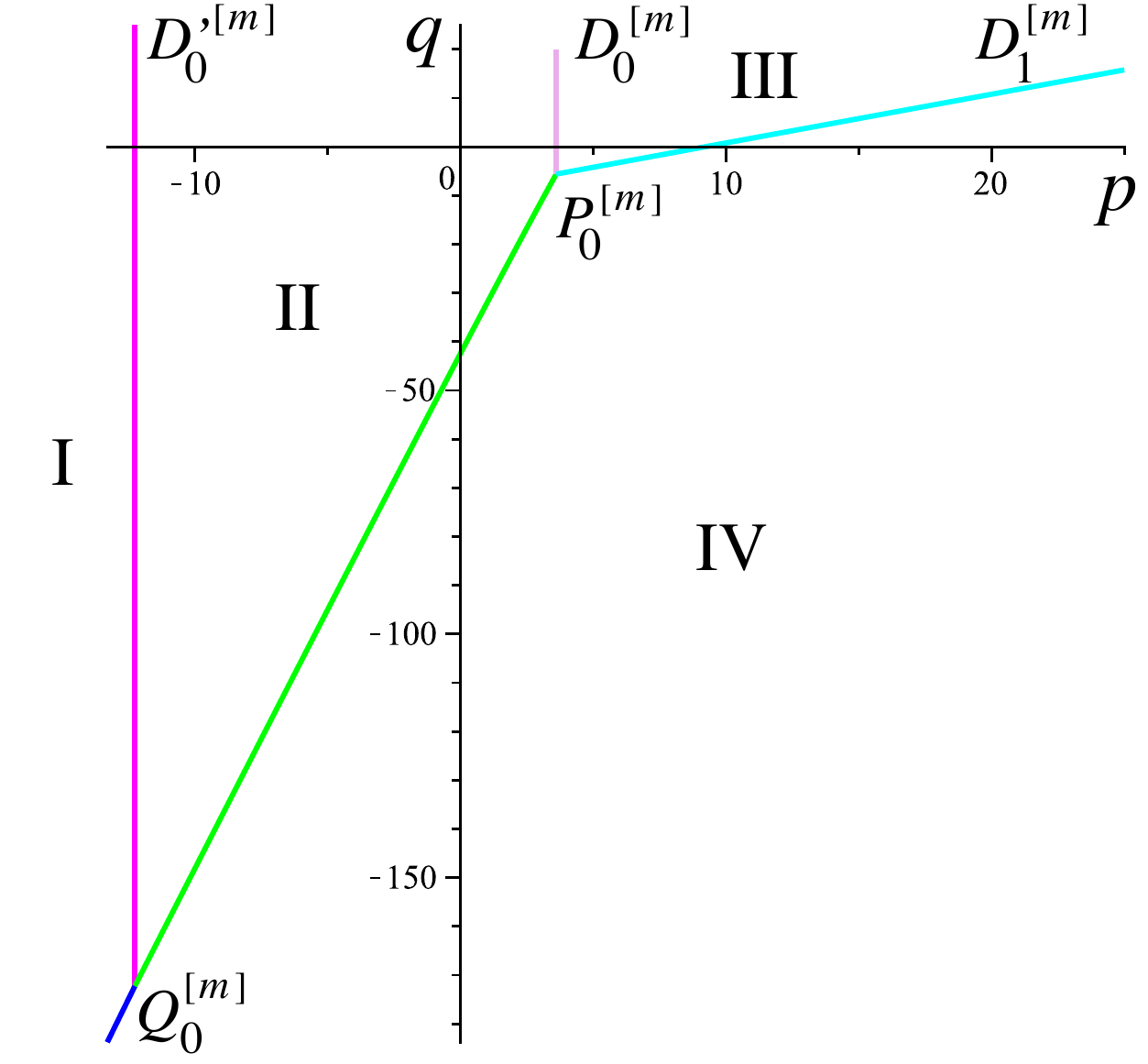}
\caption{{\it Phase diagram for the $m$-fold whole-plane $\SLE_{\kappa}$ and domains of validity of spectra $\beta_{\mathrm{tip}}$ (I), $\beta_0$ (II), $\beta_{\mathrm {lin}}$ (III), and $\beta_m$ (IV). 
  For $m=+10$ and $\kappa=30$, the $q=0$ line successively crosses  domains I, II, III and IV.}}
\label{figpqline}
\end{center}
\end{figure}
Theorem \ref{propseparbis} then yields: 
\begin{theorem}\label{propseparmlightbis} (Figure \ref{figpqline}) 
  Separatrix curves for the generalized integral means spectrum $\beta^{[m]}(p,q;\kappa)$ of the $m$-fold  whole-plane $\SLE_\kappa$ are given, for $m\geq 1$, by the same as in Theorem \ref{propseparbis} for $m=1$, provided that one replaces there, 
 \begin{itemize}
  \item $D_0$ by $D_0^{[m]}$, $P_0$ by $P^{[m]}_0=(p_0,q_0^{[m]})$,  
$q_0$ by $q_0^{[m]}:=p_0+m(16-\kappa^2)/32\kappa$;
  \item $D_1$ by $D_1^{[m]}$ of equation $q-p=m(16-\kappa^2)/32\kappa$;
   \item $q_{\mathcal G}(\gamma)$ by 
   $p_{\mathcal G}(\gamma)+m\left(\gamma-\kappa\gamma^2/2\right)$, $Q_0$ by $Q_0^{[m]}=(p'_0,q'^{[m]}_0)$,  
     with\\ $q'^{[m]}_0:=p'_0-m\left(1+\kappa/2\right)$;    
    \item 
 $D_0'$ by the vertical half-line $D'^{[m]}_0$ above $Q_0^{[m]}$;
      \item  
      $q_{\mathcal Q}(\gamma)$ by $p_{\mathcal Q}(\gamma)+m\left(\gamma-\kappa\gamma^2/2\right)$. 
 \end{itemize}
\end{theorem}
The figures of the $m$-separatrices are easily deduced from Theorem \ref{propseparmlightbis}; in particular, the transformed quartic $\mathcal Q^{[m]}$ is asymptotic for $p\to -\infty$ to a straightline,
\begin{equation}\label{slope}
q=(m+1)p-m(2+\kappa)/8,
\end{equation} 
whose direction is also that of the axis of the transformed green parabola $\mathcal G^{[m]}$. In region IV, the $m$-fold integral means spectrum is thus given by 
\begin{align}\label{betam}
\beta^{[m]}(p,q;\kappa)=\beta_1(p,q_m;\kappa)=\left(1+\frac{2}{m}\right)p-\frac{2}{m}q-\frac{1}{2}-\frac{1}{2}\sqrt{1+\frac{2\kappa}{m}(p-q)}.
\end{align} 
\subsubsection{The $m\leq -1$ case}\label{mneg}
\textcolor{black}{In this range, $f^{[m]}$ is an exterior map from $\mathbb D_-$, whose continuation to $\partial \mathbb D$ vanish on $-m$ points, and the image domain is the plane slit by $-m$ curves in a star configuration at the origin. When defined via a formal extension of Definition \ref{def:gims}, this generalized means spectrum $\beta^{[m]}(p,q;\kappa)$ now describes for $q > 0$ the singular behavior of the map near $(f^{[m]})^{-1}(0)$. It still obeys identity \eqref{bmb1},} and the same conclusions as in Theorem \ref{propseparmlightbis} hold formally, except that, because $\det T_m^{-1}=m <0$, the vertical positions of the respective domains of validity of the spectra are all {\it reversed}, the vertical separatrix lines $D_0^{[m]}$ and $D_0'^{[m]}$ being now half-lines going downwards from  $P_0^{[m]}$ and $Q_0^{[m]}$ to $-\infty$, and the domain IV lying  {\it above} the half-line $D_1^{[m]}$, the transformed green parabola and the transformed quartic. The concavity of the separatrix curves is correspondingly inverted (Fig. \ref{figm=-1}).

\begin{figure}[htbp]
\begin{center}
\includegraphics[width=.43290\linewidth]{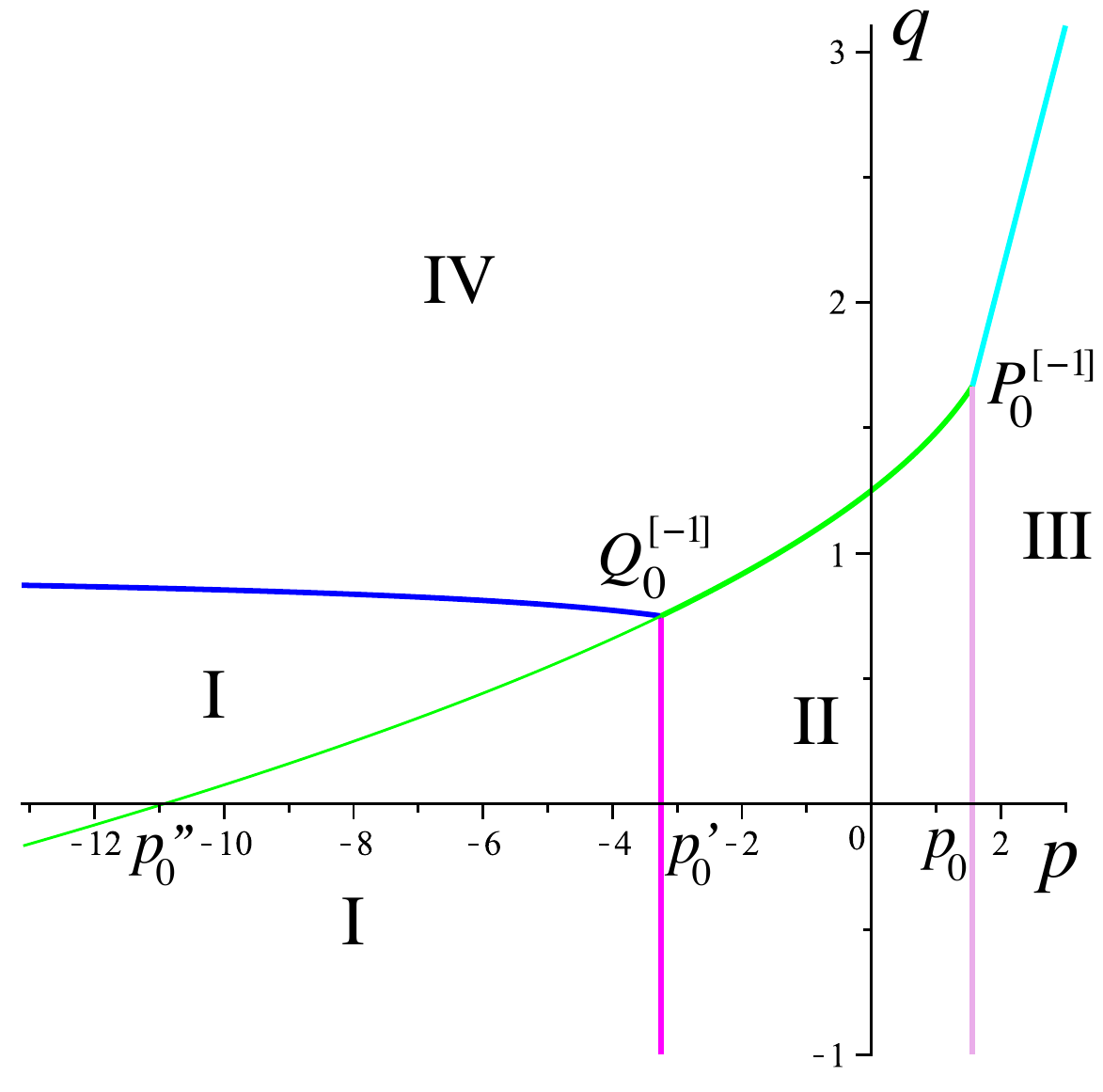}
\includegraphics[width=.45\linewidth]{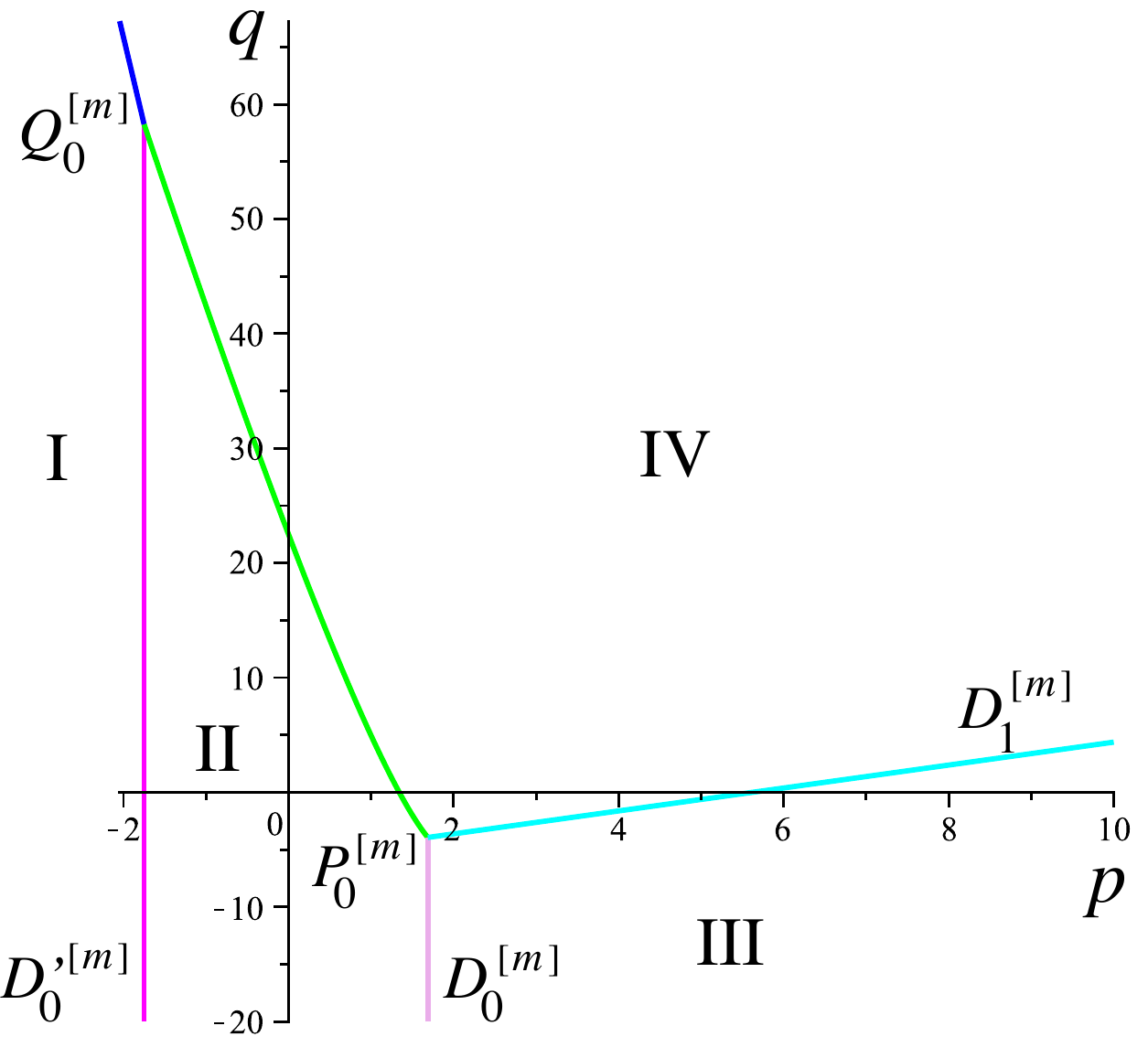}
\caption{{\it Left:  Phase diagram for $m=-1$ 
and $\kappa=6$.  Right: Phase diagram for $m=-30$ and $\kappa=2$; the $q=0$ line successively crosses domains I, II, IV and III.}}
\label{figm=-1}
\end{center}
\end{figure}

Note the slope inversion in \eqref{slope} when going from $m\geq 1$ to $m\leq -1$.  
The $m=-1$ case, i.e., the B--S exterior version of whole-plane SLE, is thus peculiar: the parabola's axis and the quartic's linear asymptote are both horizontal, and $\mathcal G^{[-1]}$ intersects the $p$-axis at $p_0''(\kappa) \leq p_0'(\kappa)$, in agreement with Section \ref{BSline}. (See Fig. \ref{figm=-1}, left.) \textcolor{black}{As hinted above, the $m=-2$ case is special, and one can check that the green parabola $\mathcal G^{[-2]}$ stays strictly above the $p$-axis for any $\kappa>0$. Hence, this axis never enters region IV, rendering the (degenerate) negative spectrum \eqref{betam0} irrelevant.}

In Theorem \ref{propseparmlightbis}, the coordinate $q_0^{[m]}$ of point $P_0^{[m]}$ can become {\it negative}. This corresponds to the fact that the $q=0$ axis intersects all four different regions in the $m$-fold spectrum (Figs. \ref{figpqline} and \ref{figm=-1}). \textcolor{black}{This happens for $\frac{m}{3}\frac{\kappa-4}{\kappa+4} \geq 1$, 
which gives either $m\geq m(\kappa)>3$ for $\kappa > 4$ or $m\leq m(\kappa)<-3$ for $\kappa<4$, in agreement with the discussion of Fig. \ref{figpqmline} above.} 
\section{Universal spectrum}\label{univ}
It is worthwhile to compare the above results to universal ones. We aim at generalizing the universal spectrum for integral means of derivatives of univalent functions to the case of mixed integrals of the type 
$\int_0^{2\pi}\frac{\vert f'(re^{i\theta})\vert^p}{\vert f(re^{i\theta})\vert^q}d\theta.$
More precisely, for $f$ injective and holomorphic in the unit disk, we define $\beta_f(p,q)$ as being the smallest number such that
$$\int_0^{2\pi}\frac{\vert f'(re^{i\theta})\vert^p}{\vert f(re^{i\theta})\vert^q}d\theta\leq O(1-r)^{-\beta_f(p,q)-\varepsilon}, \forall \varepsilon >0,r \to 1.$$
   The {\it universal spectrum} $B(p,q)$ is then defined as the supremum of $\beta_f(p,q)$ over all holomorphic and injective $f$'s on the disk.
   
It should be first noticed that if one restricts oneself to bounded univalent functions, there will be no change with respect to the usual integral means spectrum, i.e., the denominator in the integrand (and thus $q$) plays no role in this case. In the general case, we will mimic the Feng-McGregor approach \cite{FengMcg}. 
\begin{theorem} \label{Universalspectrum} Let  $f$ be holomorphic and injective in the unit disk. For $p\in \mathbb R^+,q \in \mathbb R$  such that $q<\min\{2,\frac{5}{4} p-\frac{1}{2}\}$,  there exists a constant $C>0$ such that
\begin{equation} \int_0^{2\pi} \frac{\vert f'(re^{i\theta})\vert^p}{\vert f(r e^{i\theta})\vert^q}d\theta\leq \frac{C}{(1-r)^{3p-2q-1}} \label{FMcggen}.\end{equation}
\end{theorem}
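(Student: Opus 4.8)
The plan is to adapt the Feng-McGregor method for integral means of univalent functions \cite{FengMcg}, carrying the weight $|f(z)|^{-q}$ through every step. One may normalize only by a dilation, $f\mapsto cf$ (which multiplies the integrand by $|c|^{p-q}$); unlike in the classical situation one cannot translate, since $|f|$ is not translation covariant and the position of $0$ relative to $f(\mathbb D)$ matters. The geometric input is scale invariant and survives: by the Koebe one-quarter theorem applied to the normalized disc automorphisms centred at $z$,
\begin{equation*}
\tfrac14(1-|z|^2)\,|f'(z)|\ \le\ \operatorname{dist}\!\bigl(f(z),\partial f(\mathbb D)\bigr)\ \le\ (1-|z|^2)\,|f'(z)|,
\end{equation*}
and since $0\in\partial f(\mathbb D)$ — or, if $0\in f(\mathbb D)$, since $\operatorname{dist}(0,\partial f(\mathbb D))=\operatorname{dist}(f(z_0),\partial f(\mathbb D))$ is a finite constant — one gets $\operatorname{dist}(f(z),\partial f(\mathbb D))\le|f(z)|+O_f(1)$, hence the pointwise bound $|f'(z)|\le C_f\,|f(z)|/(1-|z|)$ for $|z|=r$ near $1$. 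The other tool is Prawitz's inequality applied to $f-f(0)$, which with the growth theorem gives $\int_0^{2\pi}|f(re^{i\theta})|^{s}\,d\theta\lesssim_f(1-r)^{1-2s}$ for $s>\tfrac12$ (and $O_f(1)$ for $s<\tfrac12$).

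With this, the elementary part of the range is immediate: from $|f'|^{p}/|f|^{q}\le C_f^{\,p}(1-r)^{-p}|f|^{p-q}$ and Prawitz with $s=p-q$,
\begin{equation*}
\int_0^{2\pi}\frac{|f'(re^{i\theta})|^{p}}{|f(re^{i\theta})|^{q}}\,d\theta\ \lesssim_f\ (1-r)^{-p}(1-r)^{1-2(p-q)}\ =\ (1-r)^{-(3p-2q-1)},
\end{equation*}
valid for $q<p-\tfrac12$; and by the extra splitting $|f'|^{p}/|f|^{q}=(|f'|/|f|)^{t}\,|f'|^{p-t}|f|^{t-q}$ followed by Hölder's inequality, applying Feng-McGregor's theorem $\int_0^{2\pi}|f'(re^{i\theta})|^{p'}d\theta\lesssim_f(1-r)^{1-3p'}$ for $p'\ge\tfrac25$ to the first factor and Prawitz to the second extends this, after collecting exponents, to the strictly larger range $q<p-\tfrac25$.

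To reach the full range $q<\tfrac54p-\tfrac12$ one must revisit the proof of Feng-McGregor's theorem itself rather than invoke it, because the crude use of $|f'|\lesssim|f|/(1-r)$ wastes a full power $(1-r)^{-p}$. The plan is to freeze a height $\lambda>0$ and split $r\,\partial\mathbb D$ into the good arc $\{|f'|\le\lambda\}$, on which $|f'|^{p}/|f|^{q}\le\lambda^{p}/|f|^{q}$, and the bad arc $E_\lambda=\{|f'|>\lambda\}$, on which $|f'|>\lambda\Rightarrow|f|\gtrsim(1-r)\lambda$, so $E_\lambda\subset\{|f|>c(1-r)\lambda\}$; a Chebyshev estimate off Prawitz's bound at a well-chosen exponent then makes $|E_\lambda|$ small, and a dyadic decomposition of the super-level set of $|f|$ together with a Hölder splitting of the weight $|f|^{p-q}$ against $|E_\lambda|$ controls the bad contribution. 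Optimizing over $\lambda$ and the auxiliary exponents should reproduce the exponent $3p-2q-1$ — which is also what the Koebe-type behaviour near $z=1$ produces, so the bound is sharp — with the set of admissible exponents closing up precisely under $q<\tfrac54p-\tfrac12$ (for $q=0$, the Feng-McGregor threshold $p>\tfrac25$). On the geometric side one already sees that $q<p$ is necessary — the univalent map $f(z)=(1-z)^2$, which sends the boundary point $1$ to $0$ at an outward cusp, gives $\int_{r\partial\mathbb D}|f'|^{p}|f|^{-q}|dz|\asymp(1-r)^{1+p-2q}$, which exceeds $(1-r)^{-(3p-2q-1)}$ as soon as $q>p$ — and $q<2$ is the (slightly stronger, and only for $p\ge2$ binding) technical restriction under which the above optimization remains admissible.

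The hard part is exactly this last optimization: the Feng-McGregor bookkeeping has to be redone with the weight $|f|^{-q}$ present in every inequality — in particular in the bad-set estimate, where $|f|^{-q}$ must be Hölder-split against $|E_\lambda|$ with an exponent that is then balanced against the Prawitz exponent used to bound $|E_\lambda|$ — and one must check that the resulting system of linear inequalities among the auxiliary exponents has a common solution precisely on $\{q<\min\{2,\tfrac54p-\tfrac12\}\}$, and that the limiting exponent is $3p-2q-1$. Everything else (the dilation normalization, the Koebe inequalities, Prawitz's theorem, and the two easy sub-ranges above) is routine; the content is the optimization that upgrades $q<p-\tfrac12$ to $q<\tfrac54p-\tfrac12$.
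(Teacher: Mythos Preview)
Your proposal is incomplete: you correctly obtain the sub-range $q<p-\tfrac12$ (and a slight improvement), but the passage to the full range $q<\tfrac54p-\tfrac12$ is only a plan. You describe a level-set decomposition in the spirit of Feng--McGregor's original paper, say the optimization ``should reproduce'' the right exponent, and leave the system of auxiliary inequalities unchecked. That is precisely the content of the theorem, so as it stands there is a genuine gap.

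The paper's argument is both different and much shorter; the ingredient you are missing is \emph{Hardy's inequality} in the form
\[
\int_0^{2\pi}\vert f'(re^{i\theta})\vert^{2}\,\vert f(re^{i\theta})\vert^{p'-2}\,d\theta\ \le\ \frac{C'}{(1-r)^{2p'+1}},\qquad p'>0,
\]
valid for univalent $f$. With this in hand, a single H\"older split does the job for $p<2$: write, with $a-b=1$,
\[
\int\frac{\vert f'\vert^{p}}{\vert f\vert^{q}}
=\int\frac{\vert f'\vert^{p}}{\vert f\vert^{aq}}\,\vert f\vert^{bq}
\le\Bigl(\int\frac{\vert f'\vert^{2}}{\vert f\vert^{2aq/p}}\Bigr)^{p/2}\Bigl(\int\vert f\vert^{2bq/(2-p)}\Bigr)^{(2-p)/2},
\]
apply Hardy with $p'=2-2aq/p$ to the first factor and Prawitz with $p''=2bq/(2-p)$ to the second. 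The exponents combine to exactly $3p-2q-1$, and the constraints $p'>0$, $p''>\tfrac12$, $a-b=1$ admit a solution iff $q+\tfrac12<\tfrac54p$. The case $p\ge2$ then follows by peeling off $\vert f'\vert^{p-p'}$ with the Koebe distortion bound $\vert f'(z)\vert\le 2\vert f'(0)\vert/(1-\vert z\vert)^{3}$ and invoking the already-proved case for some $p'<2$ with $(p',q)$ admissible.

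So: no level sets, no dyadic decomposition, no super-level Chebyshev argument. Your route might be made to work, but it is substantially heavier, and you have not carried out the optimization that would decide whether it actually closes at $q<\tfrac54p-\tfrac12$. The clean fix is to replace your pointwise Koebe bound $\vert f'\vert\lesssim\vert f\vert/(1-r)$ --- which, as you note, wastes a full power --- by the integrated Hardy inequality above, which already contains the correct $L^{2}$-weighted control of $f'$ against powers of $f$.
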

The universal spectrum  is therefore finite and such that $B(p,q)\leq 3p-2q-1$, at least in the domain $\mathcal D_0:=\{0\leq p, q<\min\{2,\frac{5}{4} p-\frac{1}{2}\}\}$ of Theorem \ref{Universalspectrum}. In that domain, the Koebe function, $\mathcal K(z)=z (1+z)^{-2}$,  saturates the bound, therefore $B(p,q)=3p-2q-1 >0$ for $(p,q)\in \mathcal D_0$.

In order to make the proof lighter, we will neither write the variables in the functions involved, which are of the form $re^{i\theta}$ with $r$ fixed, nor the angular integration interval, which is meant to be $[0,2\pi]$. 
\begin{proof}
Let $a,b$ be two reals, to be fixed later, such that $a-b=1$. Let us first consider the case $p<2$, for which H\"older's inequality gives
\begin{equation}\label{holder}\int\frac{\vert f'\vert^p}{\vert f\vert^q}=\int\frac{\vert f'\vert^p}{\vert f\vert^{aq}}{\vert f\vert^{bq}}
\leq \left(\int\frac{\vert f'\vert^2}{\vert f\vert^{2aq/p}}\right)^{p/2}\left(\int\vert f\vert^{2bq/(2-p)}\right)^{(2-p)/2}.
\end{equation}
In order to estimate the first integral on the right-hand side, we invoke Hardy's inequality \cite{Pommerenke}: For any $p'>0$, there exists a constant $C'>0$ such that for any function $f$ which is  holomorphic and injective in the unit disk,
$$ \int\vert f'\vert^2\vert f\vert^{p'-2}\leq \frac{C'}{(1-r)^{2p'+1}}.$$
For the rightmost integral in \eqref{holder}, we use the Prawitz inequality \cite{Pommerenke}: For any $p''>1/2$, there exists a constant $C''>0$ such that for any function $f$  holomorphic and injective in the unit disk,
$$ \int\vert f\vert^{p''}\leq \frac{C''}{(1-r)^{2p''-1}}.$$
We then take $p':=2-\frac{2aq}{p}, p'':=\frac{2bq}{2-p}$, and assume that $p'>0$ and $p''>1/2$; we may then use the two  inequalities above and get from \eqref{holder}
\begin{equation} \label{universal}\int\frac{\vert f'\vert^p}{\vert f\vert^q}\leq\frac{C}{(1-r)^{3p-2q-1}},
\end{equation}
for some $C>0$ and any $f$ as above. For this, we need to find $a,b\in\mathbb R$ such that 
$a-b=1,\;p'>0,\;p''>1/2.$  
The first inequality is equivalent to $p>aq$, and the second one gives  $aq>q+\frac{2-p}{4}$.
We thus find that the universal bound \eqref{universal} holds for 
$q+\frac{1}{2}<\frac{5}{4} p.$ 
Recall then the original condition of validity, $p<2$, which implies that $q<2$. 
The theorem being already proved for $p<2$,  we may now assume that $p\geq 2$. Let then $p'$ be such that
$\frac 45 q+\frac 25<p'<2\leq p.$ 
 We now invoke Koebe distortion theorem:
$$ \forall z\in \mathbb D,\;\vert f'(z)\vert \leq 2\frac{\vert f'(0)\vert}{(1-\vert z\vert)^3},$$
from which follows, by writing $\vert f'\vert^p=\vert f'\vert^{p'}\vert f'\vert^{p-p'}$ and by using \eqref{FMcggen} for the couple $(p',q)$, that for some $C>0$,
\begin{equation} \label{universalprime}\int\frac{\vert f'\vert^p}{\vert f\vert^q}\leq\frac{C}{(1-r)^{3p'-2q-1+3(p-p')}}=\frac{C}{(1-r)^{3p-2q-1}}.
\end{equation}
\end{proof}
Guided by the results obtained above for the generalized integral means spectrum of  whole-plane $\SLE$, we will now state a conjecture concerning the universal generalized spectrum. As we shall see, its structure turns out to be very similar, each of the $\SLE$ four spectra having its own analogue in the universal case.

Let us first recall that the universal spectrum for {\bf bounded} holomorphic and injective functions, $B(p)$, is known to be  equal to $p-1$ for $p\geq 2$, and equal to $-p-1$ below a certain threshold $p^{\dagger}\leq -2$. For $p\in [p^{\dagger}, 2]$, it is equal to a unknown function, $B_0(p)$. Two famous conjectures are that by Brennan, stating that $B_0(-2)=1$ and implying that $p^{\dagger}=-2$, and the broader conjecture by Kraetzer stating that $B_0(p)=p^2/4$ (see Ref. \cite{Pommerenke} and references therein). 

For {\bf unbounded} functions, a classical result by Makarov \cite{Makanaliz} states that the universal spectrum is simply given by 
\begin{equation}\label{makarov}
\max\{B(p),3p-1\},
\end{equation}
the second term corresponding to the extremal case of the Koebe function.

Now, in the case of generalized spectra, the universal analogue of the SLE generalized spectrum $\beta_1(p,q;\kappa)$ is naturally the spectrum that we have just obtained in Theorem \ref{Universalspectrum}, and that corresponds to the Koebe limit of Section \ref{koebe},
\begin{equation}\label{B1}
B_1(p,q):=3p-2q-1.
\end{equation}
The analogue of the SLE bulk spectrum, $\beta_0(p)$, is then naturally given by the function $B_0(p)$ of the bounded universal spectrum above, while the two remaining SLE spectrum functions, $\beta_{\mathrm{tip}}(p)$ and $\beta_{\mathrm{lin}}(p)$, have respectively for universal analogues, 
$B_{\mathrm{tip}}(p):=-p-1$ for $p\leq p^{\dagger}$, and $B_{\mathrm{lin}}(p):=p-1$ for $p\geq 2$.

We then proceed as for $\SLE$, looking for the sets of points in the $(p,q)$ plane such that $B_1(p,q)=B_{\mathrm {tip}}(p)$, $B_1(p,q)=B_0(p)$, $B_1(p,q)=B_{\mathrm{lin}}(p)$. They turn out to be, in the same order,
\begin{itemize}
\item the line $q=2p$ for $p\leq p^\dagger$,
\item the curve $2q=3p-1-B_0(p)$ for $p\in [p^\dagger,2]$,
\item the line $p=q$ for $p\geq 2$.
\end{itemize}
Note that if Brennan's conjecture holds, $p^\dagger=-2$, and it is equivalent to the fact that the separatrix curve, $2q=3p-1-B_0(p)$, the vertical line, $p=-2$, and the separatrix, $q=2p$, all meet at point $(-2,-4)$. If Kratzer's conjecture also holds, the first curve becomes the segment of parabola $2q=3p-1-p^2/4$, with $p\in [-2,2]$.

In complete analogy with the $\SLE$ case (see Fig. \ref{separ}), we thus obtain a prediction for the universal spectrum  $B(p,q)$, with a partition of the plane into four zones corresponding to the four spectra introduced above, as illustrated in  Fig. \ref{univsepar}.
\begin{figure}[!h]
\begin{center}
\includegraphics[width=.473290\linewidth]{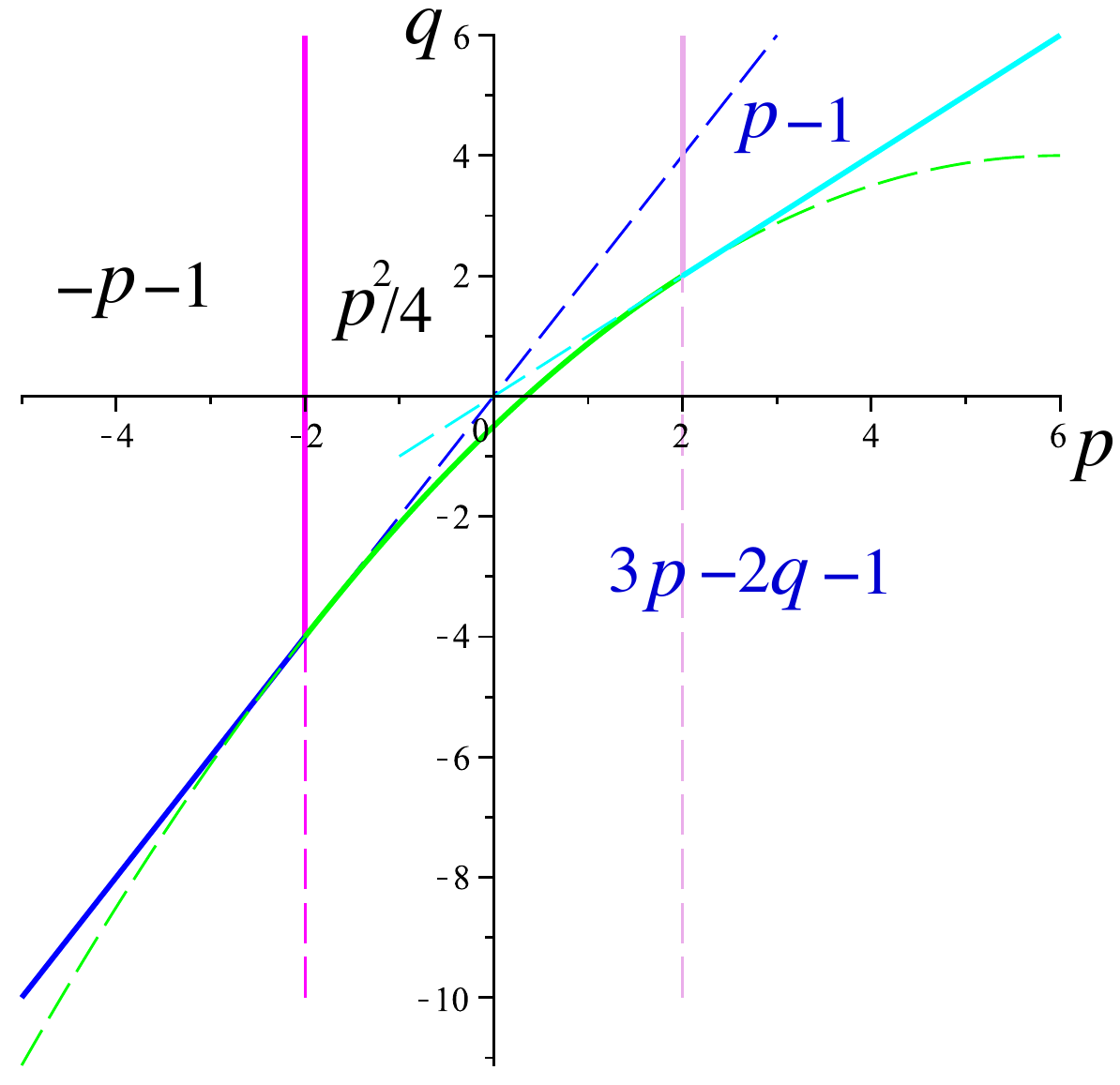}
\caption{The four functions giving the universal generalized spectrum (assuming here the validity of Kraetzer's conjecture).}
\label{univsepar}
\end{center}
\end{figure}

Observe that the figure contains, for $q=0$, the universal spectrum for all univalent functions, as well as, along the line $q=2p$, the spectrum of bounded univalent ones. For $p\leq p^\dagger$ (possibly  $(-2)$), the latter line also appears as  a {\it separatrix}  of the (conjectured) universal spectrum. A small departure from it triggers a phase transition in the spectrum, which is thus {\it unstable} along the bounded functions line.

As work done with Kari Astala shows \cite{ADZ}, it is actually possible to extend Makarov's approach \cite{Makanaliz} to the universal generalized spectrum $B(p,q)$, and to  generalize result \eqref{makarov} into
\textcolor{black}{\begin{theorem}
The universal generalized spectrum is given by
$$B(p,q)=\max\{B(p),3p-2q-1\},$$ 
where $B(p)$ is the universal spectrum for bounded univalent functions.
\end{theorem}}
This confirms the conclusions drawn above for the universal  generalized spectrum, the unknown remaining the position of $p^\dagger$ and the form of $B_0(p)$ in the standard universal spectrum.
\bibliographystyle{plain}
\bibliography{biblio}
\end{document}
 \textcolor{red}{\begin{proposition}\label{propseparmneg} \textcolor{red}{{\bf To be deleted since we refer to Figure 7}} (Figure \ref{figpqline})  
  The separatrix lines for the generalized multifractal spectrum, $\beta^{[m]}(p,q;\kappa)$, of the $m$-fold transform of the whole-plane $\SLE_\kappa$ are given, for $m<0$, in a anti-clockwise direction in the $(p,q)$ plane, respectively by the following equations:
 \begin{itemize}
 \item (i) $p=p_0=p_0(\kappa)=3(4+\kappa)^2/32\kappa,q\leq q_0^{[m]}:=p_0+m(16-\kappa^2)/32\kappa$, for the vertical  half-line separatrix $D_0^{[m]}$ below point $P^{[m]}_0:=(p_0,q^{[m]}_0)$;
  \item (ii) $q-p=m(16-\kappa^2)/32\kappa$ with $p\geq p_0(\kappa)$, for the unit slope half-line separatrix $D_1^{[m]}$ originating at $P_0^{[m]}$;
   \item (iii) the combined parametric equations \eqref{C+dual},
    \begin{align*}
   &p=p_{\mathcal G}(\gamma),\,\,\, \gamma\in[1/4+1/\kappa,1+2/\kappa],\\ 
   &q
   = p_{\mathcal G}(\gamma)+m\left( q_{\mathcal G}(\gamma)-p_{\mathcal G}(\gamma)\right)=p_{\mathcal G}(\gamma)+m\left(\gamma-\kappa\gamma^2/2\right),\nonumber
   \end{align*}
   for the finite section of the deformed green parabola between  $P_0^{[m]}$ and $Q_0^{[m]}:=\left(p^{**}_0(\kappa),q_0^{[m]'}\right)$, $p'_0(\kappa):=-1-3\kappa/8$ and $q_0^{[m]'}:=p'_0(\kappa)-m\left(1+\kappa/2\right)$;
    \item (iv) $p=p'_0(\kappa), q \leq q_0'^{[m]}$, for the vertical half-line separatrix $D^{[m]'}_0$ below point $Q_0^{[m]}$;
      \item (v) the combined parametric equations \eqref{pqquartic}, 
      \begin{align*}
   &p=p_{\mathcal G}(\gamma),\,\,\, \gamma\in[1+{2}/{\kappa},+\infty),\\ 
   &q
   = p_{\mathcal Q}(\gamma)+m\left( q_{\mathcal Q}(\gamma)-p_{\mathcal Q}(\gamma)\right)=p_{\mathcal Q}(\gamma)+m\left(\gamma-\kappa\gamma^2/2\right),\nonumber
   \end{align*}
for the infinite upper  branch of the deformed blue quartic running from  $Q_0^{[m]}$ to $\infty$.
 \end{itemize}
\end{proposition}}


\end{document}